\def\cameraReady{} 
\pgfplotsset{compat=1.9}
\newcommand\StateX{\Statex\hspace{\algorithmicindent}}
\newcommand\StateXX{\StateX\hspace{\algorithmicindent}}
\algrenewcommand\textproc{}
\newcommand{\alglinenoNew}[1]{\newcounter{ALG@line@#1}}
\newcommand{\alglinenoPop}[1]{\setcounter{ALG@line}{\value{ALG@line@#1}}}
\newcommand{\alglinenoPush}[1]{\setcounter{ALG@line@#1}{\value{ALG@line}}}
\newcommand{\sys}{BullShark\xspace}
\DeclareMathAlphabet{\mathcal}{OMS}{cmsy}{m}{n}
\newcommand{\bnm}{\begin{newmath}}
\newcommand{\enm}{\end{newmath}}
\newcommand{\bea}{\begin{eqnarray*}}%
\newcommand{\eea}{\end{eqnarray*}}%
\newcommand{\bne}{\begin{newequation}}
\newcommand{\ene}{\end{newequation}}
\newcommand{\bal}{\begin{newalign}}
\newcommand{\eal}{\end{newalign}}
\newenvironment{newalign}{\begin{align}%
\setlength{\abovedisplayskip}{4pt}%
\setlength{\belowdisplayskip}{4pt}%
\setlength{\abovedisplayshortskip}{6pt}%
\setlength{\belowdisplayshortskip}{6pt} }{\end{align}}
\newenvironment{newmath}{\begin{displaymath}%
\setlength{\abovedisplayskip}{4pt}%
\setlength{\belowdisplayskip}{4pt}%
\setlength{\abovedisplayshortskip}{6pt}%
\setlength{\belowdisplayshortskip}{6pt} }{\end{displaymath}}
\newenvironment{newequation}{\begin{equation}%
\setlength{\abovedisplayskip}{4pt}%
\setlength{\belowdisplayskip}{4pt}%
\setlength{\abovedisplayshortskip}{6pt}%
\setlength{\belowdisplayshortskip}{6pt} }{\end{equation}}
\newcounter{ctr}
\newcounter{mytable}
\def\mytable{\begin{centering}\refstepcounter{mytable}}
\def\endmytable{\end{centering}}
\newcounter{myfig}
\def\myfig{\begin{centering}\refstepcounter{myfig}}
\def\endmyfig{\end{centering}}
\newlength{\saveparindent}
\newlength{\saveparskip}
\newcommand{\E}{{\rm I\kern-.3em E}}
\renewcommand{\eqref}[1]{\mbox{Equation~(\ref{#1})}}
\def \part {part}
\renewcommand{\paragraph}[1]{\vspace*{6pt}\noindent\textbf{#1}\;}
\newtheorem{observation}{Observation}
\newtheorem{claim}{Claim}
\newtheorem{property}{Property}
\def \blackslug{\hbox{\hskip 1pt \vrule width 4pt height 8pt
    depth 1.5pt \hskip 1pt}}
\def \qed{\quad\blackslug\lower 8.5pt\null\par}
\newcounter{mynote}[section]
\newcommand\ignore[1]{}
\newcounter{rcnote}[section]
\newcounter{mrnote}[section]
\newcounter{fknote}[section]
\newcounter{anote}[section]
\DeclareMathSymbol{\mlq}{\mathord}{operators}{``}
\DeclareMathSymbol{\mrq}{\mathord}{operators}{`'}
\newcommand{\rhf}[2]{R_{f, \gamma}}
\DeclareDocumentCommand{\edist}{o o}{
  \ensuremath{
    \IfNoValueTF{#1}{{d}}{{\sf d}(#1,#2)}
  }
}
\newcommand{\olrk}[1]{\ifx\nursymbol#1\else\!\!\mskip4.5mu plus 0.5mu\left(\mskip0.5mu plus0.5mu #1\mskip1.5mu plus0.5mu \right)\fi}
\NewDocumentCommand{\indseq}{ O{1} O{r} }{{#1}\ldots {#2}}
\newcommand{\sasha}[1]{[\textbf{\textcolor{blue}{Sasha:}} { \textcolor{blue}{#1}]}}
\begin{document}

\def\thetitle{Bullshark: DAG BFT Protocols Made Practical}
\title{\thetitle}

\ifdefined\cameraReady
  \author{Alexander Spiegelman}
  \email{sasha.spiegelman@gmail.com}
  \affiliation{Aptos}

  \author{Neil Giridharan}
  \email{giridhn@berkeley.edu}
  \affiliation{University of California, Berkeley}

  \author{Alberto Sonnino}
  \email{alberto@sonnino.com}
  \affiliation{Mysten Labs}

  \author{Lefteris Kokoris-Kogias}
  \email{Lefteris2k@gmail.com}
  \affiliation{IST Austria}
\else
  \author{}
\fi

\date{}
\renewcommand{\shortauthors}{Spiegelman, et al.}

\begin{abstract}
 
 We present \sys, the first directed acyclic graph (DAG) based asynchronous Byzantine Atomic Broadcast protocol that is optimized for the common synchronous case. Like previous DAG-based BFT protocols~\cite{keidar2021all, danezis2021narwhal}, \sys requires no extra communication to achieve consensus on top of building the DAG.
 That is, parties can totally order the vertices of the DAG by interpreting their local view of the DAG edges.
 Unlike other asynchronous DAG-based protocols, \sys provides a practical low latency fast-path that exploits synchronous periods and deprecates the need for notoriously complex view-change and view-synchronization mechanisms.
 \sys achieves this while maintaining all the desired properties of its predecessor DAG-Rider~\cite{keidar2021all}. Namely, it has optimal amortized communication complexity, it provides fairness and asynchronous liveness, and safety is guaranteed even under a quantum adversary.

 In order to show the practicality and simplicity of our approach, we also introduce a standalone partially synchronous version of \sys,
 which we evaluate against the state of the art.
 The implemented protocol is embarrassingly simple (200 LOC on top of an existing DAG-based mempool implementation~\cite{danezis2021narwhal}). It is highly efficient, achieving for example, 125,000 transactions per second with a 2 seconds latency for a deployment of 50 parties. In the same setting, the state of the art pays a steep 50\% latency increase as it optimizes for asynchrony.
 
\end{abstract}

\maketitle

\section{Introduction}
\label{sec:intro}

Ordering transactions in a distributed Byzantine environment via a consensus mechanism has become one of the most timely research areas in recent years due to the blooming Blockchain use-case.
A recent line of
work~\cite{gkagol2019aleph,baird2016swirlds,keidar2021all,danezis2021narwhal,schett2021embedding, yang2019prism} proposed an
elegant way to separate between the dissemination of transactions and the
logic required to safely order them. 
The idea is simple. 
To propose transactions, parties send them in a way that forms a
casual order among them.
That is, messages contain blocks of transactions as well as
references to previously received messages, which together form a
\emph{directed acyclic graph (DAG)}.
Interestingly, the structure of the DAG encodes information that allow
parties to totally order the DAG by locally interpreting their view of it
without sending any extra messages.
That is, once we build the DAG, implementing consensus on top of it requires \emph{zero-overhead} of communication. 

The pioneering work of Hashgraph~\cite{baird2016swirlds} constructed an unstructured DAG,
where each message refers to two previous ones, and 
used hashes of messages as local coin flips to
totally order the DAG in asynchronous settings.
Aleph~\cite{gkagol2019aleph} later introduced a structured round-based DAG and encoded
a shared randomness in each round via a threshold signature scheme to achieve constant latency in expectation.
The state of the art is DAG-Rider~\cite{keidar2021all}, which is built on previous ideas.
Every round in its DAG has at most $n$ vertices (one for each party), each of which contains a block of transactions as well as references (edges) to at least $2f+1$ vertices in the previous round. Blocks are disseminated via reliable broadcast~\cite{bracha1987asynchronous} to avoid equivocation, and an honest party advances to the next round once it reliably delivers $2f+1$ vertices in the current round.
Remarkably, by using the DAG to abstract away the communication layer, the entire edges interpretation logic of DAG-Rider to totally order the DAG spans less than 30 lines of pseudocode.

DAG-Rider is an asynchronous Byzantine atomic broadcast
(BAB), which achieves optimal amortized communication complexity ($O(n)$ per transaction), post quantum
safety, and some notion of fairness (called Validity) 
that guarantees that every transaction proposed by an honest party is eventually delivered (ordered).
To achieve optimal amortized communication DAG-Rider combines batching techniques with an efficient asynchronous verifiable information dispersal protocol~\cite{cachin2005dispersal} for the reliable broadcast building block. 
The protocol is post quantum safe because it does not rely on primitives that a quantum computer can break for the safety properties.
That is, a quantum adversary can prevent the protocol progress, but it cannot violate safety guarantees.  

Although DAG-based protocols have a solid theoretical
foundation, they have multiple gaps before being realistically deployable in practice.
First, they all optimize for the worst case asynchronous network
assumptions and do not take advantage of synchronous periods, resulting to higher latency than existing consensus protocols~\cite{hotstuff, tendermint} in the good case.
Second, they have some impractical assumptions such as needing unbounded
memory in order to preserve fairness. The only existing solution to this comes from Tusk~\cite{danezis2021narwhal}, 
which uses a garbage collection mechanism but does not allow for quantifiable fairness even during periods of synchrony. 

On the other hand, existing partially synchronous consensus protocols are designed as a monolith, where the leader of the protocol has to propose blocks of transactions in the critical path, resulting in performance bottlenecks and relatively low throughput as shown by Narwhal~\cite{danezis2021narwhal}.

To the best of our knowledge, this paper is the first to optimize the DAG-based BFT approach to the partially synchronous communication setting.
First, we propose \sys, which preserves all the theoretical properties of DAG-Rider (including asynchronous worst case liveness), and in
addition, introduces a fast path that exploits common-case synchronous network conditions.
That is, \sys is the first BAB protocol with optimal amortized
communication complexity ($O(n)$ per transaction) and post quantum safety that is optimized for the common case. \sys needs only 2 round-trips between commits during synchrony (thus a $75\%$ improvement compared to DAG-Rider), and maintains a 6 round-trip expected latency in asynchronous executions (matching DAG-Rider).
In addition, \sys is built on top of Narwhal and thus inherits all of its practical benefits (e.g., decoupling data dissemination from the DAG construction and having an efficient reliable broadcast implementation).

Second, based on \sys's fast path, we present an eventually synchronous variant of \sys, which is the first partially synchronous consensus protocol that is completely embedded into a DAG.
The protocol is fundamentally different from previous partially synchronous protocols since it is symmetric, and \emph{does not require a view-change or view synchronization mechanisms} after a faulty leader.
The resulting protocol is embarrassingly simple and extremely efficient, achieving 125k TPS and 2 second latency with 50 honest parties.       
As a final contribution, \sys overcomes an existing practical limitation of DAG-based protocols of having to choose between fairness and garbage collection.
\sys garbage collects vertices belonging to old DAG rounds, and also provides fairness during synchronous periods. 
As an evidence to its practicality, the partially synchronous version of \sys has already been productionized by Mysten Labs and is currently being integrated by Aptos. 





In summary, this paper makes the following contributions:
\begin{itemize}
    \item We propose \sys, the first slow-path/fast-path DAG-based consensus protocol that achieves significantly lower latency than prior work. \sys takes 2 rounds in the good case and 6 rounds in expectation (matching DAG-Rider) in asynchrony.
    \item We simplify \sys to work only in partial synchrony. This version of \sys results in a significantly simpler partially synchronous consensus protocol than prior work (extra 200LOC vs 4000LOC of Hotstuff over a DAG~\cite{danezis2021narwhal}). \sys additionally performs significantly better under faults making it the most performant and resilient partially synchronous protocol to date.
    \item We show how to build a practical DAG-based system that allows for garbage collection and provides timely fairness after GST, answering an open question of prior work~\cite{keidar2021all,danezis2021narwhal}.
\end{itemize}

\section{Technical challenges.}



In order to design and implement \sys we had to solve a number of theoretical and
practical challenges.

\paragraph{Theoretical challenges.}
The approach in current DAG-based protocols is to advance rounds
as soon as enough messages in the current round are received ($2f+1$
for Aleph and DAG-Rider).
This works perfectly for asynchronous consensus, but unfortunately cannot guarantee deterministic liveness during synchronous periods~\cite{fischer1985impossibility}, as required by the eventually synchronous variant of \sys.
This is because the adversary can, for example, reorder messages (within the synchrony bound) to make sure parties advance rounds before getting messages from the predefined leaders. Note that this is inherent to any deterministic protocol. We considered and evaluated two alternatives (see Appendix~\ref{app:logicaldag}) and decided to embed timeouts into the DAG construction as it provided better performance. 
In a nutshell, if the first $2f+1$ messages in a round do not contain one from the leader, then parties wait for a timeout or a message from the leader before advancing to the next round. 

A further challenge is to take advantage of a common-case
synchronous network without sacrificing latency in the asynchronous worst case.
To this end, \sys introduces two types of votes - \emph{steady-state} for the predefined leader and \emph{fallback} for the random one.
Similarly to DAG-Rider~\cite{keidar2021all}, \sys rounds are grouped in \emph{waves}, each of which consists of 4 rounds.
Intuitively, each wave encodes the consensus logic.
The first round of a wave has two potential leaders - a predefined steady-state leader and a leader that is chosen in retrospect by the randomness produced in the fourth round of the wave.
To reduce latency in synchronous periods, the third round of a wave also has a predefined leader. It takes two rounds to commit a steady-state leader.
Based on their voting type, the vertices in the second round can potentially vote for the steady-state leader in the first round 
and vertices in the fourth round can potentially vote for the fallback leader in round one or the steady-state leader in round three.
Importantly, the same vertex cannot vote for both the fallback and steady-state leaders in the same wave.
A vertex's voting type is determined by whether or not its source (the party that broadcasted it) committed a leader in the previous wave.
This information is encoded in the DAG and since the DAG is built on top of a reliable broadcast abstraction, even Byzantine parties cannot lie about their voting type.

A nice property of \sys is that it does not require a view change or view synchronization mechanisms to overcome faulty or slow leaders. Instead of a view change, \sys uses the information encoded in the DAG to maintain safety.
Since all parties agree on the causal histories of vertices they have in the DAG, after a leader is  committed each party locally ``rides'' the DAG (wave by wave) backwards to see which leader-vertices could have been committed by other parties.
Synchronizing views is not required because (as we show in our proofs) the DAG construction already provides it. 
If the first leader in a wave after GST is honest, then all parties advance to the third round of the wave roughly at the same time. 

\paragraph{Practical challenges.}
Finally, to evaluate \sys we had to resolve some practical challenges. 
First, all previous theoretical solutions require unbounded memory
to hold the entire DAG, and second, the reliable broadcast primitive we
use to clearly describe \sys (used in DAG-Rider and Aleph as well)
is inefficient in the common-case.
Fortunately, Narwhal~\cite{danezis2021narwhal} implemented a scalable
DAG and dealt exactly with these problems.
We started from Narwhal's open source codebase and adopt
their approach to decouple data from metadata to implement an efficient broadcast.
Unfortunately, the Narwhal garbage collection mechanism
directly conflicts with \sys's mechanism to provide fairness.
In fact, providing meaningful fairness for all honest parties seems to be
impossible with bounded memory implementations in asynchronous networks
since every message can be delayed to after the relevant prefix of
the DAG is garbage collected.
To deal with this issue we relax our fairness requirement.
That is, our bounded memory implementation of \sys guarantees \textit{timely fairness}
only during synchronous periods. This means that after GST all messages by honest
parties make it into the DAG in finite time and before the garbage
collection. For all the other messages (before GST) we use Tusk's approach of retransmission, where guarantees can only be made for an unbounded execution.

\section{Preliminaries}
\label{sec:prelimnaries}

\subsection{Model}

We consider a peer to peer message passing model with a set of $n$
parties $\Pi = \{ p_1, \ldots, p_n \}$, and a \emph{dynamic} adversary
that can corrupt up to  $f < n/3$ of them during an execution.
We say that corrupted parties are \emph{Byzantine}  and all other
parties are \emph{honest}. 
Byzantine parties may act arbitrarily, while honest ones follow the
protocol.
We assume that the adversary is computationally bounded.

For the description of the protocol we assume that links between 
honest parties are reliable.
That is, all messages among honest parties eventually arrive~\footnote{We address this issues from a practical
point of view in our implementation.}.
Moreover, for simplicity, we assume that recipients can verify the
senders identities.
We assume a known $\Delta$ and say that an execution of a
protocol is \emph{eventually synchronous} if there is a \emph{global
stabilization time (GST)} after which all messages sent among honest
parties are delivered within $\Delta$ time.
An execution is \emph{synchronous} if GST occurs at time 0, and
\emph{asynchronous} if GST never occurs.

For the protocol analysis we are interested in the practical
performance as well as theoretical complexity during synchronous and
asynchronous periods, or alternatively, before and after the GST.
To this end, we define consider the following scenarios: 
 
\begin{itemize}
  
  \item \emph{Worst case condition}: asynchronous execution and $f$
  byzantine parties
  
  \item \emph{Common case condition}: synchronous executions
  with no failures~\footnote{Same analysis apply to eventually
  synchronous failure-free executions after GST.}
  
\end{itemize}

\subsection{Building blocks}

Similarly to DAG-Rider, we use the following known building blocks
for our modular protocol presentation:

\paragraph{Reliable broadcast}
Each party $p_k$ can broadcast messages by calling
$\textit{r\_bcast}_k(m,r)$, where $m$ is a message and $r \in \mathbb{N}$
is a round number.
Every party $p_i$ has an output $\textit{r\_deliver}_i(m,r,p_k)$,
where $m$ is a message, $r$ is a round number, and $p_k$ is the party
that called the corresponding $\textit{r\_bcast}_k(m,r)$.
The reliable broadcast abstraction guarantees the following properties:
\begin{description}
    \item[Agreement] If an honest party $p_i$ outputs
    $\textit{r\_deliver}_i(m,r,p_k)$, then every other honest party
    $p_j$ eventually outputs \\ $\textit{r\_deliver}_j(m,r,p_k)$.
    \item[Integrity] For each round $r \in \mathbb{N}$ and party $p_k
    \in \Pi$, an honest party $p_i$ outputs
    $\textit{r\_deliver}_i(m,r,p_k)$ at most once regardless of~$m$.
    \item[Validity] If an honest party $p_k$ calls
    $\textit{r\_bcast}_k(m,r)$, then every honest party $p_i$
    eventually outputs $\textit{r\_deliver}_i(m,r,p_k)$.
\end{description}

\paragraph{Global perfect coin} 
An instance $w$, $w \in \mathbb{N}$, of the coin is invoked by party
$p_i \in \Pi$ by calling $\textit{choose\_leader}_i(w)$.
This call returns a party $p_j \in \Pi$, which is the chosen leader
for instance $w$.
Let $X_w$ be the random variable that represents the probability that
the coin returns party $p_j$ as the return value of the call
$\textit{choose\_leader}_i(w)$.
The global perfect coin has the following guarantees:
\begin{description}
    \item[Agreement] If two honest parties $p_i,p_j$ call
    $\textit{choose\_leader}_i(w)$ and $\textit{choose\_leader}_j(w)$ with respective return values $p_1$ and $p_2$, then $p_1=p_2$.
    \item [Termination] If at least $f+1$ honest parties call $\textit{choose\_leader}(w)$, then every $\textit{choose\_leader}(w)$ call eventually returns.
    \item[Unpredictability] As long as less than $f+1$ honest parties call $\textit{choose\_leader}(w)$, the return value is indistinguishable from a random value except with negligible probability $\epsilon$.
    Namely, the probability $pr$ that the adversary can guess the returned
    party $p_j$ of the call $\textit{choose\_leader}(w)$ is $pr \leq
    \Pr [X_w = p_j] + \epsilon$. 
    \item[Fairness] The coin is fair, i.e., $\forall w \in \mathbb{N}, \forall p_j \in \Pi \colon \Pr[X_w = p_j] = 1/n$.
\end{description}

Implementation examples that use PKI and a threshold signature
scheme~\cite{libert2016born, boneh2001short, shoup2000practical} can be found
in~\cite{cachin2005Constantinople, loss2018combining}.
See DAG-Rider for more details on how a coin implementation can be
integrated into the DAG construction.
It is important to note that the above mentioned implementations satisfy Agreement, Termination, and Fairness with information theoretical guarantees. 
That is, the assumption of a computationally bounded adversary is required only for the unpredictability property.
As we later prove, the unpredictability property is only required for Liveness.
Therefore, since similarly to DAG-Rider generating randomness is the only place where cryptography is used, the Safety properties of \sys are post-quantum secure.

\subsection{Problem Definition}
\label{sec:problemDef}
Following DAG-Rider~\cite{keidar2021all}, our result focuses on the
\emph{Byzantine Atomic Broadcast (BAB)} problem.
%
%
To avoid confusion with the events of the underlying reliable broadcast
abstraction, the broadcast and deliver events of BAB are
$\textit{a\_bcast}(m,r)$ and $\textit{a\_deliver}(m,r,p_k)$,
respectively, where $m$ is a message, $r \in \mathbb{N}$
is a sequence number, and $p_k \in \Pi$ is a party. 
The purpose of the sequence numbers is to distinguish between messages
broadcast by the same party.
We assume that each party broadcasts
infinitely many messages with consecutive sequence numbers.

\begin{definition} [Byzantine Atomic Broadcast] Each honest
party $p_i \in \Pi$ can call $\textit{a\_bcast}_i(m,r)$ and output
$\textit{a\_deliver}_i(m,r,p_k)$, $p_k \in \Pi$.
A Byzantine Atomic Broadcast protocol satisfies reliable broadcast
(agreement, integrity, and validity) as well as:
\begin{description}
    \item[Total order] If an honest party $p_i$ outputs
    $a\_deliver_i(m,r,p_k)$ before $a\_deliver_i(m',r',p_k')$, then no honest 
    party $p_j$ outputs $a\_deliver_j(m',r',p_k')$ before
    $a\_deliver_j(m,r,p_k)$.
\end{description}

\end{definition}

Note that the above definition is agnostic to the network assumptions.
However, in asynchronous executions, due to the FLP
result~\cite{fischer1985impossibility}, BAB cannot be solved
deterministically and therefore we relax the validity property to hold
with probability $1$ in this case.
Moreover, the validity property cannot be satisfied in asynchronous
executions with bounded memory implementation.
Therefore, as we discuss more in Section~\ref{sec:GC}, for the practical
version of this problem, we require validity to be satisfied only after
GST in eventually synchronous executions.


Note that the BAB abstraction captures the core consensus logic in
permissioned blockchain systems as it provides a mechanism to propose blocks of
transactions and totally order them.
Moreover, similar to Hyperledger~\cite{androulaki2018hyperledger}, it 
supports a separation between the total order mechanism and transaction
execution.
Transaction validation can therefore be done as part of the
execution~\cite{androulaki2018hyperledger} before applying it to the SMR.

\section{DAG Construction}

In this section we describe our DAG construction and explain how it is
different from the one in DAG-Rider~\cite{keidar2021all}.
In a nutshell, DAG-Rider is a fully asynchronous atomic broadcast
protocol and thus rounds in its DAG advance in network speed as soon
as $2f+1$ nodes from the current round are delivered. 
Here, we are interested in a protocol that deterministically achieves better latency in synchronous periods. 
Therefore, introducing timeouts into the system is unavoidable~\cite{fischer1985impossibility}.
We considered and evaluated two alternatives (see Appendix~\ref{app:logicaldag} for more details) and decided to integrate timeouts into the DAG construction.
It is important to note that despite the timeouts, our DAG still advances
in network speed when the leader is honest. 

We present the background, structures, and basic utilities we borrow from
DAG-Rider in Section~\ref{sub:dagBackground}.
We describe our DAG construction in Section~\ref{sub:ourDAG}.

\subsection{Background}
\label{sub:dagBackground}

\begin{algorithm*}[t]
    \caption{Data structures and basic utilities for party $p_i$}
    \label{alg:dataStructures}
    \begin{algorithmic}[1]
    \footnotesize
    
        \Statex \textbf{Local variables:}
        \StateX struct $\textit{vertex } v$: 
        \Comment{The struct of a vertex in the DAG}
        \StateXX $v.\textit{round}$ - the round of $v$ in the DAG
        \StateXX $v.\textit{source}$ - the party that broadcast $v$
        \StateXX $v.\textit{block}$ - a block of transactions
        \StateXX $v.\textit{strongEdges}$ - a set of vertices in
        $v.\textit{round}-1$ that represent \emph{strong} edges 
        \StateXX $v.\textit{weakEdges}$ - a set of vertices in rounds $<
        v.\textit{round}-1$ that represent \emph{weak} edges 
        \StateX $DAG_i[]$ - An array of sets of vertices, initially:
            \StateXX $DAG_i[0] \gets$ predefined hardcoded set of $2f+1$
            ``genesis'' vertices 
            \StateXX  $\forall j \geq 1 \colon DAG_i[j] \gets \{\}$ 
        \StateX $\textit{blocksToPropose}$ - A queue, initially empty,
        $p_i$ enqueues valid blocks of transactions from clients

        \vspace{0.5em}
        \Procedure{\textit{path}}{$v,u$} \Comment{Check if exists a path consisting of strong and weak edges in the DAG}
        \State \Return exists a sequence of $k \in \mathbb{N}$,
        vertices $v_1,v_2,\ldots,v_k$  s.t.\
        \StateXX $v_1 = v $, $v_k = u$, and $\forall i \in [2..k]
        \colon v_i \in \bigcup_{r \geq 1}
        DAG_i[r] \wedge (v_i \in v_{i-1}.\textit{weakEdges} \cup
        v_{i-1}.\textit{strongEdges})$
        \EndProcedure
        
        \vspace{0.5em}
        \Procedure{\textit{strong\_path}}{$v,u$} \Comment{Check if exists a path consisting of only strong edges in the DAG}
        \State \Return exists a sequence of $k \in \mathbb{N}$,
        vertices $v_1,v_2,\ldots,v_k$  s.t.\
        \StateXX $v_1 = v $, $v_k = u$, and $\forall i \in [2..k]
        \colon v_i \in \bigcup_{r \geq 1}
        DAG_i[r] \wedge v_i \in v_{i-1}.\textit{strongEdges}$
        \EndProcedure
        
        \begin{multicols}{2}
        \Procedure{\textit{create\_new\_vertex}}{round} \label{alg:DAG:createNewVertex}
        \State \textbf{wait until}
        $\neg$\textit{blocksToPropose}.\text{empty}()
        \State $v.round \gets round$
        \State $v.source \gets p_i$  
        \State $v.\textit{block} \gets
        \textit{blocksToPropose}.\text{dequeue}()$
        \State $v.\textit{strongEdges}
        \gets DAG[\textit{round} - 1]$ 
        \State $\textit{set\_weak\_edges}(v,\textit{round})$ \State \textbf{return} $v$
        \EndProcedure
        
        \vspace{0.5em}
        \Procedure{\textit{set\_weak\_edges}}{$v, \textit{round}$} 
        \Comment{Add edges to orphan vertices} 
        \label{alg:DAG:getWeakEdges} 
        \State $v.\textit{weakEdges} \gets \{\}$
        \For{$r=\textit{round}-2$ down to 1}
        \For{\textbf{every} $u \in DAG_i[r]$ s.t. $\neg \textit{path}(v,u) $}
        \State $v.\textit{weakEdges} \gets v.\textit{weakEdges} \cup \{u \}$
        \EndFor
        \EndFor
        \EndProcedure

        \vspace{0.5em}
        \Procedure{\textit{get\_fallback\_vertex\_leader}}{$w$}
            \State $p \gets \textit{choose\_leader}_i(w)$
            \State \Return $get\_vertex(p,4w-3)$ 
        \EndProcedure
        
           \vspace{0.5em}
        \Procedure{\textit{get\_first\_steady\_vertex\_leader}}{$w$}
            \State  $p \gets \textit{get\_first\_predefined\_leader}(w)$
            \State \Return $get\_vertex(p,4w-3)$ 
        \EndProcedure

         \vspace{0.5em}
        \Procedure{\textit{get\_second\_steady\_vertex\_leader}}{$w$}
            \State  $p \gets \textit{get\_second\_predefined\_leader}(w)$
            \State \Return $get\_vertex(p,4w-1)$ 
        \EndProcedure
        
        \vspace{0.5em}
        \Procedure{\textit{get\_vertex}}{p,r}
        
            \If{$\exists v\in DAG[r]$ s.t.\ $v.source = p$}
                \State \Return $v$ 
            \EndIf 
                \State \Return $\bot$ 
        
        \EndProcedure
        
        \end{multicols}

         \alglinenoNew{counter}
         \alglinenoPush{counter}

    \end{algorithmic}
\end{algorithm*}

We use a DAG to abstract the communication
layer among parties and enable the establishment of common knowledge.
Each vertex in the DAG represents a message disseminated via reliable broadcast from a single party, containing, among other data,
references to previously broadcasted vertices.
Those references are the edges of the DAG.
Each honest party maintains a local copy of the DAG, and
different honest parties might observe different views of it (depending on the order in which
they deliver the vertices). Nevertheless, reliable broadcast prevents equivocation and
guarantees that all honest parties eventually deliver the same
messages, hence their views of the DAG eventually converge.

The DAG data types and and basic utilities are
specified in \Cref{alg:dataStructures}.
For each party $p_i$, we denote $p_i$'s local view of the
DAG as $DAG_i$, which is represented by an array of sets of vertices $DAG_i[]$.
Vertexes are created via the $create\_new\_vertex(r)$ procedure.
Each vertex in the DAG is associated with a
unique round number $r$ and the party who generated and reliably broadcasted it (the source).
In addition, each vertex $v$ contains a block of transactions that were
previously $a\_bcast$ by the BAB protocol that is implemented on top of the DAG and two sets of
outgoing edges.
The set \emph{strong edges} contains at least $2f+1$ references to
vertexes associated with round $r-1$ and the set \emph{weak
edges} contains up to $f$ references to vertices in rounds $< r-1$
such that otherwise there is no path from $v$ to them.
As explained in the next sections, strong edges are used for
Safety and weak edges make sure we eventually include all
vertices in the total order, to satisfy BAB's validity property.

The entry $DAG_i[r]$ for $r \in \mathbb{N}$ stores a set
of vertices associated with round $r$ that $p_i$ previously
delivered.
By the reliable broadcast, each party can broadcast at most $1$
vertex in each round and thus $|DAG_i[r]| \leq n$.

The procedures $path(v,u)$ and $strong\_path(v,u)$ get two vertexes and
check if there is a path from $v$ to $u$. 
The difference between them is that $path(v,u)$ considers all edges
while $strong\_path(v,u)$ only considers the strong ones.

The procedure $get\_fallback\_vertex\_leader$ gets a
wave number, computes the randomly elected leader of the wave and then returns the vertex that the elected leader broadcast in the first round of the wave, if it
is included in the DAG.
Otherwise, returns $\bot$.
Similarly, the procedures $get\_first\_steady\_vertex\_leader$ and $get\_second\_steady\_vertex\_leader$ return the vertices broadcast by the first and second predefined leaders of the
wave, respectively.
We assume a predefined and known to all parties mapping waves to steady-state leaders.

\subsection{Our DAG protocol}
\label{sub:ourDAG}

A detailed pseudocode is given in \Cref{alg:DAG}.
Each party $p_i$ maintains three local variables: \emph{round} stores the last round in which $p_i$ broadcast a vertex, \emph{buffer} stores vertices that where reliably delivered but not yet added to the DAG, and \emph{wait} is an Boolean that indicate whether the timeout for the current round has already expired.
Each party $p_i$ is constantly trying to advance rounds and calling the high-level BAB protocol to totally order all the vertices in its DAG.
When $p_i$ advances its round, it broadcast its vertex for this round and start a timeout.

Our DAG protocol is triggered by one of two events:
a vertex delivery (via reliable broadcast) or a timeout expiration.
Once a party $p_i$ delivers a vertex it first checks if the vertex is legal, i.e., (1) the source and round must match the reliable broadcast instance to prevent equivocation, and (2) the vertex must has at least $2f+1$ strong edges.
Then, $p_i$ checks if the vertex is ready to be added to the DAG by calling $try\_add\_to\_DAG$.
The idea is to make sure that the causal history of a vertex is always available in the DAG. 
Therefore, a vertex is added to the DAG only if all the vertices it includes as references are already delivered.
If this is not yet the case, the vertex is added to a \emph{buffer} for a later retry.
Once a vertex $v$ is added to the DAG, the high-level BAB protocol is invoked, via the $try\_ordering(v)$ interface, to check if more vertices can now be totally ordered.

We next describe the conditions for advancing rounds.
Note that since DAG-Rider only cares about the asynchronous case, rounds are advanced as soon as $2f+1$ vertices in the current round are delivered.
We, in contrast, optimize for the common case conditions and thus have to make sure that parties do not advance rounds too fast.
Otherwise, the adversary can prevent honest parties from committing steady-state leaders since it controls which $2f+1$ vertexes parties deliver first even after GST.
Therefore, we keep the DAG-Rider necessary condition (in $try\_advance\_round$) but extend it to make sure that honest steady-state leaders are committed in network speed after GST.

We distinguish between slow and up-to-date parties.
As mentioned in the introduction, \sys does not require an external view-synchronization mechanism for slow parties. Instead, once $p_i$ delivers $2f+1$ vertices in a round $r > \emph{round}$, $p_i$ jumps forward to round $r$, broadcasts a vertex in round $r$, and starts a new timeout. 

For the up-to-date parties we need to be more careful.
As we explained more in the next section, each wave has a steady-state leader in the first round and a steady-state leader in the third one.
Intuitively, the vertices of these leaders are interpreted as "proposals" and the vertices in immediately following rounds with strong edges to the leaders' vertices are interpreted as "votes".
In addition, each party can vote for the steady-state leaders in a wave only if its voting type is steady-state for this wave.
To make sure all honest parties get a chance to vote for steady state leaders, an up-to-date honest party $p_i$ will try to advance (via $try\_advance\_round$) to the second and forth rounds of a wave only if (1) the timeout for this round expired or (2) $p_i$ delivered a vertex from the wave predefined first and second steady-state leader, respectively.
Similarly, we need to make sure the adversary cannot prevent honest parties from collecting enough votes to commit an honest leader after GST.
Therefore, before trying to advance (via $try\_advance\_round$) to the third round a wave or the first round of the next wave, $p_i$ waits for either the timeout expiration or to deliver $2f+1$ vertices in the current round with steady-state voting type and strong edges to the first and second steady-leader, respectively.
In Section~\ref{sub:ESproof}, we prove that after GST timeouts never expire for honest leaders and the DAG advances in network speed.

\begin{algorithm}
    \caption{DAG construction, protocol for process $p_i$}
    \begin{algorithmic}[1]
    \footnotesize
    
    \alglinenoPop{counter}
    
        \Statex \textbf{Local variables:}
        
            \StateX $\emph{round} \gets 1$; 
            $buffer \gets \{\}$;
            $\emph{wait} \gets true$

        \Statex      

        \Upon{$\textit{r\_deliver}_i(v,r,p)$}
        
        \If{$v.source = p \wedge v.round = r \wedge \left|
        v.\textit{strongEdges} \right| \geq 2f+1$}

            \If{$\neg try\_add\_to\_dag(v)$}
            
                \State $\textit{buffer} \gets \textit{buffer} \cup \left\{
                v \right\}$ 
            
            \Else 
            
                \For{$v \in buffer: v.round \leq r$}
                
                    \State \hspace*{3mm} $try\_add\_to\_dag(v)$
                
                \EndFor
                
           \EndIf
           
           \If{$r = round$}
                
                \State $\emph{w} \gets \lceil r/4 \rceil $
                \Comment{steady state wave number}
                
                \If{$ r~mod~4 = 1 \wedge (\neg 
                \emph{wait} \vee \exists v \in DAG[r]: v.source = 
                \hspace*{13mm} get\_first\_steady\_vertex\_leader(\emph{w}))$}
                \label{line:advance1}    
                    \State \hspace*{3mm} $try\_advance\_round()$
                    
                \EndIf

                \If{$ r~mod~4 = 3 \wedge (\neg 
                \emph{wait} \vee \exists v \in DAG[r]: v.source = 
                \hspace*{13mm} get\_second\_steady\_vertex\_leader(\emph{w}))$}
                \label{line:advance3} 
                    
                    \State \hspace*{3mm} $try\_advance\_round()$
                    
                \EndIf
                
                \If{$ r~mod~4 =  0 \wedge (\neg \emph{wait}
                \vee \exists U \subseteq DAG[r]: |U| = 
                \hspace*{13mm} 2f+1 \text{ and }
                \forall u \in U, u.source \in \emph{steadyVoters}[
                w]) \wedge 
                \hspace*{15mm} strong\_path(u, get\_second\_steady\_leader(\emph{w}))$}
                \label{line:advance4} 
                    \State \hspace*{3mm} $try\_advance\_round()$
                
                \EndIf
                
                \If{$ r~mod~4 =  2 \wedge (\neg \emph{wait}
                \vee \exists U \subseteq DAG[r]: |U| = 
                \hspace*{13mm} 2f+1 \text{ and }
                \forall u \in U, u.source \in \emph{steadyVoters}[
                w]) \wedge 
                \hspace*{15mm} strong\_path(u, get\_first\_steady\_leader(\emph{w}))$}
                \label{line:advance2} 
                
                    \State \hspace*{3mm} $try\_advance\_round()$
                
                \EndIf
            \EndIf
            
        \EndIf
        \EndUpon
        
        \Statex
        
        \Upon{timeout}
        
            \State $\emph{wait} \gets false$
            \State $try\_advance\_round()$ 
        
        \EndUpon
        
        \Statex
        
        \Procedure{try\_add\_to\_dag}{$v$}
        
            \If{$\forall v' \in v.\emph{strongEdges} \cup
            v.\emph{weakEdges}: v' \in \bigcup\limits_{k \geq 1}
            DAG[k]$} 
            
                \State $DAG[v.round] \gets DAG[v.round] \cup \{v\}$
                \If{$|DAG[v.round]| \geq 2f+1 \wedge v.round > \emph{round}$}
                    \State $\emph{round} \gets v.round$; 
                    \emph{start timer};
                    \label{line:sync}
                    $\emph{wait} \gets true$
                    \label{line:time2}
                    \Comment{Synchronize waves}
                    \State $broadcast\_vertex(v.round)$
                
                \EndIf
                \State $\textit{buffer} \gets \textit{buffer} \setminus \{ v\}$
                \State $try\_ordering(v)$
                \State \Return true 

            \EndIf
            \State \Return false
        
        \EndProcedure
        
        \Statex
        
        \Procedure{try\_advance\_round()}{}
        
            \If {$\left| DAG[round] \right| \geq 2f + 1$}
                \State $\emph{round} \gets \emph{round}+1$; \emph{start timer};
                $\emph{wait} \gets true$
                \label{line:time1}
                \State $broadcast\_vertex(round)$

               
            \EndIf
            
        \EndProcedure
        
        \StateX
        
        \Procedure{broadcast\_vertex(r)}{}
        
                \State $v \gets \textit{create\_new\_vertex}(r)$ 
                \State $try\_add\_to\_dag(v)$
                \State $\textit{r\_bcast}_i(v,r)$

               
            
        \EndProcedure
        
        \alglinenoPush{counter}

\end{algorithmic}
\label{alg:DAG}
\end{algorithm}

\section{The \sys Protocol}
\label{sec:protocol}

In this section we present a detailed description of \sys.
Similarly to DAG-Rider~\cite{keidar2021all}, the ordering logic of \sys requires no communication on top of building the DAG.
Instead, each party observes its local copy of the DAG and totally order its vertices by interpreting the edges as "votes".
In order to optimize for the common case conditions while guaranteeing liveness under worst case asynchronous conditions, \sys has two types of leaders: \emph{steady-state} and \emph{fallback}.
The main challenge in designing \sys is the interplay between
them as we need to make sure parties cannot vote for both types at the same round.
Illustration of \sys can be found in Figure~\ref{fig:bullsharkexample}.
We divide the protocol description into two parts.
In Section~\ref{sub:modes} we
describe the commit rule of each leader, and in Section~\ref{sub:ordering} we
explain how parties totally order leaders' causal histories.
In Section~\ref{sec:ES} we preset an eventually synchronous version of \sys and in Section~\ref{sec:GC} we discuss the details of our garbage collection mechanism.
For space limitations, we provide formal proofs for both versions on \sys in Appendix~\ref{app:proofs}.

\subsection{Voting Types}
\label{sub:modes}

\begin{figure}
    \centering
    \includegraphics[width=0.48\textwidth]{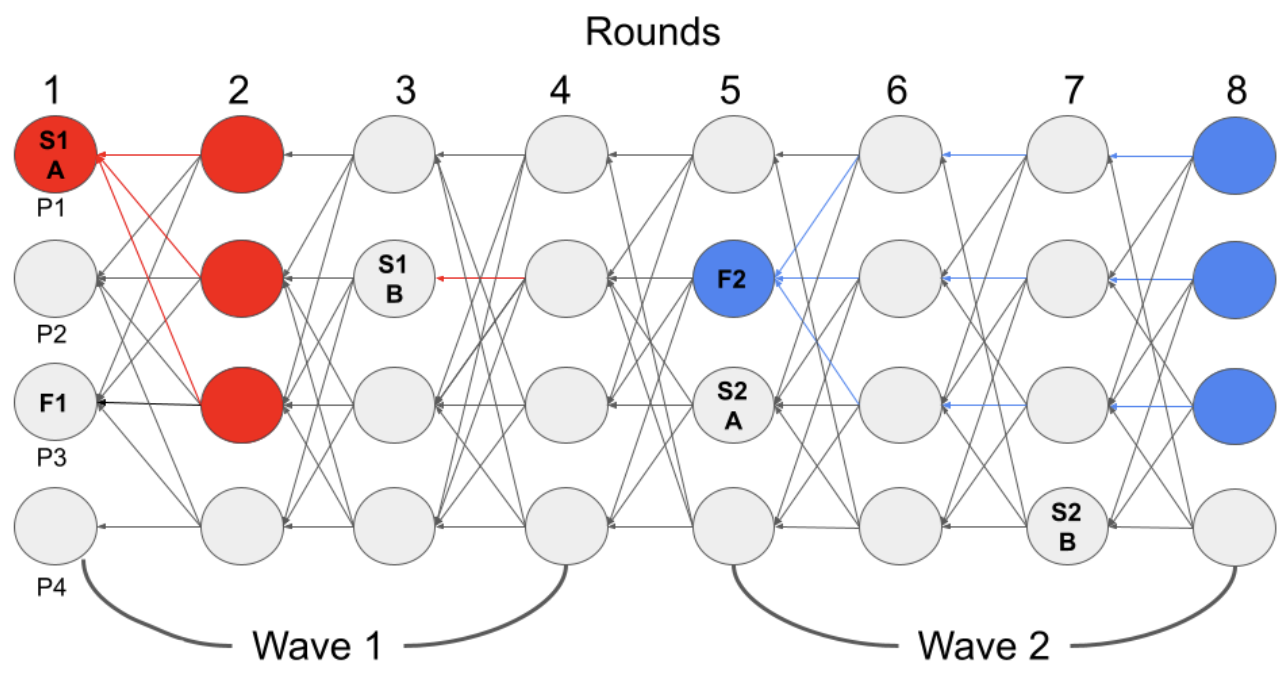}
    \caption{\textbf{Illustration of the DAG at party $\mathbf{P1}$. The columns represent the round 
    numbers and the rows are all the vertices from a particular party ($\mathbf{P1, P2, P3, P4}$ top to bottom). $\mathbf{S1A}$ denotes the first steady state leader of wave $\mathbf{1}$ (in round $\mathbf{1}$), 
    and $\mathbf{S1B}$ denotes the second steady state leader of wave $\mathbf{1}$ (in round $\mathbf{3}$). $\mathbf{F1}$ denotes the 
    fallback leader of wave $\mathbf{1}$ (in round $\mathbf{1}$). All parties start off with a steady state vote
    type in wave $\mathbf{1}$. In round $\mathbf{2}$, $\mathbf{P1}$ observes $\mathbf{3}$ $\mathbf{(2f+1)}$ steady state votes for $\mathbf{S1A}$ 
    (denoted in red), so $\mathbf{P1}$ commits $\mathbf{S1A}$. In round $\mathbf{4}$, $\mathbf{P1}$ only observes $\mathbf{1}$ vote for the
    second steady state leader $\mathbf{S1B}$, so $\mathbf{P1}$ does not commit $\mathbf{S1B}$. Since $\mathbf{P1}$ does not commit
    the second steady state leader, it has a fallback vote type in wave $\mathbf{2}$. From the DAG in
    round $\mathbf{5}$, $\mathbf{P1}$ also observes that $\mathbf{P2, P3, P4}$ did not commit $\mathbf{S1B}$, so all
    parties have a fallback vote type in wave $\mathbf{2}$. Thus $\mathbf{S2A}$ and $\mathbf{S2B}$ (the first and second steady state leaders in wave $\mathbf{2}$ respectively) cannot be committed since all vote types are fallback. In round $\mathbf{8}$, $\mathbf{P1}$ observes $\mathbf{3}$ 
    ($\mathbf{2f+1}$) fallback votes for the fallback leader $\mathbf{F2}$ (denoted in blue), so $\mathbf{P1}$ commits 
    $\mathbf{F2}$. Once $\mathbf{P1}$ commits $\mathbf{F2}$, it checks to see whether any previous leader it did not commit, could have 
    been committed. In round $\mathbf{4}$, $\mathbf{P1}$ only observes $\mathbf{1}$ steady state vote for $\mathbf{S1B}$ (less 
    than $\mathbf{f+1}$), so it does not commit $\mathbf{S1B}$ since if it would have been committed by some party then $\mathbf{P1}$ would observed at least $f+1$ votes.}}
    
    \label{fig:bullsharkexample}
\end{figure}
\begin{algorithm}
\caption{\sys part 1: $p_i$'s alg. to update parties vote type}
\begin{algorithmic}[1]
 \footnotesize

\alglinenoPop{counter}

        \Statex \textbf{Local variables:}
        
        \StateX $\emph{steadyVoters}[1] \gets \Pi$; $\emph{fallbackVoters}[1] \gets \{\}$
        \StateX \textbf{For every} $j > 1$,
        $\emph{steadyVoters}[j],\emph{fallbackVoters}[j] \gets \{\}$
    
       \Statex
       
       \Upon{$\textit{a\_bcast}_i(b,r)$} 
        
        \State $\textit{blocksToPropose}.\text{enqueue}(b)$
        
        \EndUpon
        
    \StateX

    \Procedure{try\_ordering}{$v$}
        \State $w \gets \lceil v.round / 4 \rceil$
        \State $\emph{votes} \gets v.\emph{strongEdges}$
     \If{\emph{v.round} mod 4 = 1 }
        \Comment{first round of a wave}
        
         \State \emph{determine\_party\_vote\_type(v.source, votes, w)}
         
    \ElsIf{\emph{v.round} mod 4 = 3}
        \State \emph{try\_steady\_commit}(\emph{votes}, \emph{get\_first\_steady\_vertex\_leader(w)}, $w$)
    
    \EndIf
        
    \EndProcedure
    
       \StateX 
       
    \Procedure{determine\_party\_vote\_type}{$p,votes, w$}
        
        \State $v_s \gets \emph{get\_second\_steady\_vertex\_leader(w-1)}$
        \State $v_f \gets \emph{get\_fallback\_vertex\_leader(w-1)}$
        
        \If{\emph{try\_steady\_commit}(\emph{votes}, $v_s$, $w-1$) $\vee$ \emph{try\_fallback\_commit}(\emph{votes, $v_f$, $w-1$})}
        
            \State $\emph{steadyVoters}[w] \gets \emph{steadyVoters}[w] \cup \{p\}$
            
        \Else
        
            \State $\emph{fallbackVoters}[w] \gets \emph{fallbackVoters}[w] \cup \{p\}$
        
        \EndIf
        
        
            
        
%
%
        

        
        
                
        
        
    
    \EndProcedure
    
    \Statex

    \Procedure{try\_steady\_commit}{\emph{votes}, $v, w$}
        
        \If{$|\{ v' \in \emph{votes}: 
        v'.source \in \emph{staedyVoters[w]} \wedge  
        \hspace*{5mm} strong\_path(v',v)\}| \geq 2f+1$}
            \State $commit\_leader(v)$
            \State \Return true
        \EndIf
        \State \Return false
    \EndProcedure

    \Statex
    
    \Procedure{try\_fallback\_commit}{\emph{votes}, $v, w$}
    
        \If{$|\{ v' \in \emph{votes}: 
        v'.source \in \emph{fallbackVoters[w]} \wedge  
        \hspace*{5mm} strong\_path(v',v)\}| \geq 2f+1$}
            \State $commit\_leader(v)$
            \State \Return true
        \EndIf
        \State \Return false
    \EndProcedure
    
    \alglinenoPush{counter}

\end{algorithmic}
\label{alg:modes}
\end{algorithm}

Similarly to DAG-Rider, to interpret the DAG, each party $p_i$ divides its local view of the DAG, $DAG_i$, into waves of 4 rounds each.
Unlike DAG-Rider, which has one potential leader in every wave, \sys has three.
One \emph{fallback} leader in the first round of each wave, which is elected retrospectively via the randomness produced in the forth round of the wave (as in DAG-Rider), and two predefined \emph{steady-state} leaders in the first and third rounds of each wave.
In the common case, during synchronous periods, both steady-state leaders are committed in each wave, meaning that it takes two rounds on the DAG to commit a leader.
During asynchronous periods, each fallback leader is committed with probability of at least $2/3$. Meaning that during asynchrony, a fallback leader is committed every $6$ rounds in expectation and \sys has liveness with probability $1$.

A nice property of the common case execution of \sys is that it does not require external view-change and view-synchronization mechanisms.
When switching from asynchrony to synchrony, the first two rounds of each wave make sure that if the first leader is honest then all honest parties start the third round roughly at the same time.
View-change is not required because the DAG encodes all the information needed for safety. In particular, parties can see what information other parties had when they interpreted the DAG, and decide accordingly.

The pseudocode appears in~\Cref{alg:modes}.
The procedure try\_ordering is called every time a new vertex is added to the DAG.
Since \sys has two types of leaders in each wave, we need to ensure that fallback and steady-state leaders are never committed in the same wave.
To this end, parties cannot vote for both types of leaders in the same wave.
That is, every party is assigned with a voting type in every wave that is either fallback or steady-state.
When a party $p_i$ interprets its local copy of the DAG it keeps track of other parties voting types in
\emph{steadyVoters[w]} and \emph{fallbackVoters[w]}, where $w$ is a wave number.

Intuitively, a party is in \emph{steadyVoters[w]} if it has committed either the second steady-state or the fallback leader in wave $w-1$.
Specifically, party $p_i$ determines $p_j$'s voting type in wave $w$ when it delivers $p_j$'s vertex $v$ in the first round of wave $w$, which triggers the call to the determine\_party\_vote\_type procedure.
If the causal history of $v$ has enough information to commit one of these leaders, then $p_i$ determines $p_j$'s voting type as steady-state, otherwise, as fallback.
By the properties of reliable broadcast, all parties see the same causal history of vertex $v$, and thus agree on $p_j$'s voting type in round $w$ (even Byzantine parties cannot lie about their voting type).

To commit a leader in wave $w-1$ based on a vertex $v$ in the first round of a wave $w$, $p_i$ considers the set of vertices pointed by $v$'s strong edges as potential "votes".
Note that these vertices belong to wave $w-1$ and each of them has a voting type that was already previously determined by $p_i$.
To commit the fallback leader of wave $w-1$, at least of $2f+1$ out of the potential votes must have strong paths to the leader and a fallback voting type.
Similarly, to commit the second steady-state leader of wave $w-1$, at least $2f+1$ out of the potential votes must to have strong paths to the leader and steady-state voting type.
Committing the first steady-state leader of a wave is similar but in this case the strong edges of a vertex in the third round of the wave are considered as potential votes.
Note that since even a Byzantine party cannot lie about its voting type, quorum intersection guarantees that leaders with different types cannot be committed in the same wave.
This is the reason we ask for $2f+1$ strong paths unlike Tusk where $f+1$ strong paths are sufficient for safety.
As we describe next, when a leader $v$ is committed then the procedure commit\_leader is called to totally order $v$'s causal history.

\subsection{Ordering The DAG}
\label{sub:ordering}

So far we described the wave commit rules and how parties use them to determine other parties voting types.
Next we describe how we totally order the DAG.
The pseudocode appears in Algorithm~\ref{alg:commit}.
Once a party $p_i$ commits a (steady-state or fallback) leader vertex $v$ it calls $commit\_leader(v)$.
To totally order the causal history of $v$, $p_i$ first tries to commit previous leaders for which the commit rule in its local copy of the DAG was not satisfied.
To do this, $p_i$ traverses back the rounds of its DAG until the last round in which it committed a leader and check whether it is possible that other honest parties committed leaders in these rounds based on their local copy of the DAG.
If $p_i$ encounters such a leader, it orders it before $v$.
Note that this part is much trickier than in DAG-Rider since \sys has three potential leaders in every wave.


\begin{algorithm}
\caption{\sys part 2: the commit alg. for party $p_i$}
\begin{algorithmic}[1]
 \footnotesize

        \alglinenoPop{counter}

        \Statex \textbf{Local variables:}
      
        \StateX $ \emph{committedRound} \gets 0$
        \StateX $\emph{deliveredVertices} \gets \{\}$
        \StateX $\emph{leaderStack} \gets $ initialize empty stack
    
    \Statex
    
    \Procedure{commit\_leader}{$v$}
    
        \State $\emph{leaderStack}.push(v)$
        \State $r \gets \emph{v.round} -2 $
        \Comment{There is a potential leader to commit every two rounds}
        \While{$r > \emph{committedRound}$}
        
            \State $w \gets \lceil r/4 \rceil $
            \State $\emph{ssPotentialVotes} \gets
            \{v' \in DAG_i[r+1] ~|~ strong\_path(v,v') \}$

            \If{$ r~mod~4 == 1$}
            \Comment{two potential leaders in this round}
            
               \State $v_s \gets get\_first\_steady\_vertex\_leader(w)$ 
               
               \State $v_f \gets get\_fallback\_vertex\_leader(w)$ 
               
               \State $\emph{ssVotes} \gets \{v' \in \emph{ssPotentialVotes}: v'.source \in$ \Statex \hspace{21mm} $\emph{steadyVoters}[w] \wedge  strong\_path(v',v_s) \}$ 
               
               \If{$v.round = r+2$}
               
                    \State $\emph{fbVotes} \gets \{\}$
                    \Comment{fallback leader could not be committed since \State \hspace{9mm} there at least $2f+1$ steady-state vote types in this wave }
               \Else
                    \State  $\emph{fbPotentialVotes} \gets
                    \{v' \in DAG_i[r+3] ~|~ strong\_path(v,v') \}$
                    \State $\emph{fbVotes} \gets \{v' \in \emph{fbPotentialVotes}: v'.source \in$ \Statex \hspace{26mm} $\emph{fallbackVoters}[w] \wedge  strong\_path(v',v_f) \}$

               \EndIf

            \Else
            \Comment{$ r~mod~4 == 3$}
            
                \State $v_s \gets get\_second\_steady\_vertex\_leader(w)$
                
                \State $\emph{ssVotes} \gets \{v' \in \emph{ssPotentialVotes}: v'.source \in$ \Statex \hspace{21mm} $\emph{steadyVoters}[w] \wedge  strong\_path(v',v_s) \}$ 
               
               \State $v_f \gets \bot$; $\emph{fbVotes} \gets \{\}$

            \EndIf
            
                    \If{$|\emph{ssVotes}| \geq f+1 \wedge |\emph{fbVotes}| <f+1$}
                    \label{line:indirectfirst}
            \State $leadersStack.push(v_s)$
            \label{line:stacksteady}
            \State $v \gets v_s$ 
            
        \EndIf
        
        \If{$|\emph{ssVotes}| < f+1 \wedge |\emph{fbVotes}| \geq f+1$}
         \label{line:indirectfallback}
        
            \State $leadersStack.push(v_f)$
            \label{line:stackfallback}
            \State $v \gets v_f$
            \label{line:indirectlast}
            
        \EndIf
        
        \State $r \gets r -2$
       \EndWhile

        \State $\emph{committedRound} \gets v.round$ 
        \State $order\_vertices()$

            
            
        
        
        

            
        
        
            


    \EndProcedure
    
    \Statex
    
    \Procedure{$order\_vertices()$}{}
        \While{$\neg \textit{leadersStack}.\text{isEmpty}()$} 
        \State $v \gets \textit{leadersStack}.\text{pop}()$ \label{alg:SMR:stackPop}
          \State \textit{verticesToDeliver} $\gets \{v' \in \bigcup_{r > 0}
        DAG_i[r] \mid path(v,v') \wedge v' \not\in $
        \Statex \hspace{30mm} $\emph{deliveredVertices}\}$
          \For{$\textbf{every} ~v' \in \textit{verticesToDeliver}$ in some deterministic
          order}
              \State \textbf{output}
              $\textit{a\_deliver}_i(v'.\textit{block},v'.\textit{round},
              v'.\textit{source})$
              \label{alg:SMR:decide}
              \State $\textit{deliveredVertices} \gets \textit{deliveredVertices} \cup \{v'\}$
          \EndFor
        \EndWhile
        \EndProcedure  
    
    \alglinenoPush{counter}

\end{algorithmic}
\label{alg:commit}
\end{algorithm}

By quorum intersection and the non-equivocation property of
the DAG, if some party commits either a fallback or a steady-state leader by seeing $2f+1$ votes, then all other parties see at least $f+1$ of these votes.
Moreover, since a party cannot vote for both types of leaders in the same wave, if $p_i$ sees $f+1$ votes for the fallback (steady-state) leader, then no party could have committed the steady-state (fallback) leader since in this case there are at most $2f$ votes with steady-state (fallback) type.

To make sure $p_i$ orders the leaders that precedes $v$ consistently with the other parties, we need to make sure that parties consider the same potential votes when deciding whether to order one of them.
To this end, to decide whether to order a steady-state leader $v'$, $p_i$ sets the potential votes to be all the vertices in round $v'.round +1$ in its DAG such that there is a strong path between the last leader $p_i$ previously ordered and $v'$.
For a fallback leader $v'$, the potential votes are set in a similar way but round $v' + 3$ is used instead of $v'+1$ to be consistent with the commit rule.

After computing the potential votes, $p_i$
checks if one of the leaders in the round it is currently traversing could be committed by other honest parties.
First, $p_i$ checks the potential votes type and the existence of strong paths to the leaders to determines the sets of votes for the steady-state and fallback leaders.
Note that the set of votes for the fallback leader is empty in rounds without a fallback leader or if a steady-state leader was already committed in this wave.
Then, $p_i$ checks if one of the leaders $u$ in the round has at least $f+1$ votes while the other has at most $f$.
If this is the case $p_i$ orders $u$ by pushing it to the leader's stack \emph{leaderStack} and continues its traversal to the next rounds to check if there are leaders to order before $u$.
Otherwise, $p_i$ skips the leaders of the current round as it is guaranteed that none if them could have been committed.


As we prove in Appendix~\ref{app:proofs}, all honest parties order the same leaders and in the same order.
All that is left is to apply some deterministic rule to order their
causal histories one by one.
Therefore, after committing a leader $v$ (and finishing ordering all leaders that proceeds $v$ for which the commit rule was not satisfied), party $p_i$ calls
$order\_vertex()$.
This function goes over the ordered leaders one by one, and for each of them delivers, by some deterministic order, all
the blocks in the vertices in it causal history (strong and weak edges) that have not yet
been delivered.

\section{Eventually synchronous Bullshark}
\label{sec:ES}

In this section we present an eventually synchronous version of the Bullshark protocol.
This protocol is embarrassingly simple, and as we demonstrate in Section~\ref{sec:evaluation}, very efficient.
To the best of our knowledge, this is the first eventually synchronous BFT protocol that does not require view-change or view-synchronization mechanism.
The presentation here is based on the terminology of Section~\ref{sec:protocol}. An intuitive illustration can be found in Appendix~\ref{app:PSB} and an extended description in~\cite{DAGmeetsBFT}.

In a nutshell, there are no fallback leaders in the eventually synchronous version of \sys. Instead, parties keep trying to commit the steady-state leaders.
The pseudocode, which overwrites the $try\_ordering$ procedure, appears in Algorithm~\ref{alg:ESBullshark} (Note that some procedures from previous Algorithms are called). 
In section~\ref{sub:ESproof} we give a formal proof of Safety and Liveness. In a nutshell, the safety proof has a similar proof structure as \sys with fallback, and for liveness we show that after GST two consecutive honest predefined leaders guarantee that the second leader will be committed by all honest parties.
In particular, we show that if the first leader of wave $w$ is honest, then all honest parties advance to the third round of $w$ roughly at the same time. Moreover, if the second leader is honest than all honest parties will wait for the second leader before advancing to the fourth round, and thus all honest will see at least $2f+1$ votes for the second leader in $w$ and commit it.

\begin{algorithm}
\caption{Eventually synchronous \sys: alg. for party
$p_i$.}
\begin{algorithmic}[1]
 \footnotesize

\alglinenoPop{counter}

        \Statex \textbf{Local variables:}

        \StateX $\emph{committedRound} \gets 0$
        \StateX $\emph{leaderStack} \gets $ initialize empty stack
    
       \Statex

        \Procedure{try\_ordering}{$v$}
        \State $w \gets \lceil v.round / 4 \rceil$
        \State $\emph{votes} \gets v.\emph{strongEdges}$
     \If{\emph{v.round} mod 4 = 1 }
        \Comment{try committing second leader of  prev wave}
        
         \State \emph{try\_commit}(\emph{votes}, \emph{get\_second\_steady\_vertex\_leader(w-1)})
         
    \ElsIf{\emph{v.round} mod 4 = 3}
    \Comment{try committing first leader of this wave}
        \State \emph{try\_commit}(\emph{votes}, \emph{get\_first\_steady\_vertex\_leader(w)})
    
    \EndIf
        
    \EndProcedure
       
     \StateX
    
    \Procedure{try\_commit}{$votes, v$}
        
        \If{$|\{ v' \in votes: strong\_path(v',v)\}|
        \geq f+1$}
        
            \State $commit\_leader(v)$
        
        \EndIf
    \EndProcedure
    
    \Statex

    \Procedure{commit\_leader}{$v$}
    
        \State $leadersStack.push(v)$
        \State $r \gets v.round -2$
        \While{r > \emph{committedRound}}
        \State $w \gets \lceil v.round / 4 \rceil$

            \If{$ r~mod~4 == 1$}
            
               \State $v_s \gets get\_first\_steady\_vertex\_leader(w)$
            
            \Else
            \Comment{$ r~mod~4 == 3$}
            
                \State $v_s \gets get\_second\_steady\_vertex\_leader(w)$

            \EndIf
            
            \If{$strong\_path(v,v_s)$}
            \State $leadersStack.push(v_s)$
            \State $v \gets v_s$ 
            
        \EndIf
        
        \State $r \gets r -2$
        \EndWhile
        
        \State $\emph{committedRound} \gets v.round $
        \State $order\_vertices()$
        \Comment{see Algorithm~\ref{alg:commit}}
        
    \EndProcedure

\end{algorithmic}
\label{alg:ESBullshark}
\end{algorithm}

\section{Garbage collection in \sys}
\label{sec:GC}

One of the main practical challenges and a potential reason that DAG-based BFT protocols are not yet widely deployed is the need for unbounded memory to guarantee validity and fairness. In other words, the question of how to satisfy fairness and at the same time garbage collect old parts of the DAG from the working memory of the system. 

For example, HashGraph~\cite{baird2016swirlds} constructs an unstructured DAG, and thus has to keep in memory the entire prefix of the DAG in order to verify the validity of new blocks.
DAG-Rider\cite{keidar2021all}, Aleph~\cite{gkagol2019aleph}, and Narwhal~\cite{danezis2021narwhal} use a round-based structured DAG, but do not provide a solution to the aforementioned question.
The only DAG-based BFT we are aware of that proposed a garbage collection mechanism is Narwhal~\cite{danezis2021narwhal}.
Their mechanism uses the consensus decision in order to agree what rounds in the DAG can be cleaned.
However their protocol sacrifices the Validity (fairness) property of the BAB problem. 
It does not provide fairness to all parties since
blocks of slow parties can be garbage collected before they have a chance to be totally ordered. 
DAG-Rider\cite{keidar2021all}, on the other hand, make use of weak links to refer to yet unordered blocks in previous rounds, which guarantees that
every block is eventually ordered.
The solution works well in theory, but it is unclear how to garbage collect it.

In fact, through our investigation we realized that providing the BAB's validity (fairness) property with bounded memory in fully asynchronous executions is impossible since blocks of honest parties can be arbitrarily delayed. 
Similarly to the core observation in the FLP~\cite{fischer1985impossibility} impossibility result, in asynchronous settings, it is impossible to distinguish between faulty parties that will never broadcast a block and slow parties for which we need to wait before garbage collecting old rounds. 

\paragraph{Fairness after GST.} In the \sys implementation we propose a practical alternative.
We maintain bounded memory at the cost of providing fairness only after GST.
What we need is a $\Diamond P$ failure detector~\cite{chandra1996unreliable,larrea2004implementation} which will be strong and complete after GST letting us garbage collect
rounds even if we did not get vertices from all parities (i.e., we do not need to wait forever for faulty parties).
We do it by leveraging the structure of our DAG and introducing the notion of timestamp as described below. 
Formally, our implementation of \sys maintains bounded memory and satisfies the following:
\begin{definition}
\label{def:ESvalidity}
If an honest party $p_k$ calls $\textit{r\_bcast}_k(m,r)$ after GST, then every honest party $p_i$ eventually outputs $\textit{r\_deliver}_i(m,r,k)$.
\end{definition}

For the garbage collection mechanism we add a timestamp for every vertex. That is, an honest party specify in $v.ts$ the time when it broadcast its vertex $v$. 
In addition, parties maintain a garbage collection round, \emph{GCround}, and never add vertices to the DAG in rounds below it.
Note that the latency of the reliably broadcast building block we use is bounded after GST, but depends on the specific implementation. For the protocol description we assume that the time it takes to reliably broadcast a message after GST is $\Delta$.
The pseudocode, in which we describe how to change the function order\_vertices that is used by both versions of \sys, appears in Algorithm~\ref{alg:GC}.
The idea is simple. 
For every leader $v$ we order, we assign a timestamp $ts$, which is computed as the median of all the timestamp of $v$' parents (i.e., $v$'s strong edges).
Then, while traversing $v$'s causal history to find vertices to order, we compute a timestamp for every round in a similar way (the median of timestamps of the vertices in this round).
If the difference between the timestamp is above $3\Delta$ the round is garbage collected.

Since by the properties of the underling reliable broadcast all parties agree on the causal histories of the leaders, once parties agree which leaders to order  they also agree what rounds to garbage collect. 
Therefore, the garbage collection mechanism preserves the safety and liveness properties we prove in Appendix~\ref{app:proofs}. Below we argue that when announced with the above garbage collection, \sys satisfies Definition~\ref{def:ESvalidity} while preserving bounded memory.

\begin{algorithm}
\caption{Garbage collection. Algorithm for party $p_i$.}
\begin{algorithmic}[1]
 \footnotesize
 
   \Statex \textbf{Local variables:}
        \StateX $\emph{GCround} \gets 0$

    \Procedure{$order\_vertices()$}{}
        \While{$\neg \emph{leadersStack}.\emph{isEmpty}()$} 
        \State $v \gets \emph{leadersStack}.\emph{pop}()$ \label{alg:SMR:stackPop}
        \If {$v.round > 1$}
             \State $\emph{parents} \gets \{u \in DAG_i[v.round -1] ~|~ path(v,u)\}$
            \State $\emph{leaderTS} \gets
            \emph{median}(\{v.ts ~|~ v \in \emph{parents}\}$)
             \State $\emph{verticesToDeliver} \gets \emph{parents} \cup \{v\}$
        \Else 
            \State $\emph{verticesToDeliver} \gets \{v\}$
        \EndIf
        \State $r \gets \emph{GCround} +1$
        \While{$r < v.round -1$}
        
            \State $\emph{candidates} \gets \{u \in DAG_i[r] ~|~ path(v,u)\}$
            \State $\emph{candidatesTS} \gets 
            \emph{median}(\{v.ts ~|~ v \in \emph{candidates}\})$
            \State $\emph{verticesToDeliver} \gets \emph{verticesToDeliver} \cup  \emph{candidates} \setminus \emph{deliveredVertices}$
            \If{\emph{leaderTS} - \emph{candidatesTS} $> 3\Delta$}
                \State $\emph{GCround} \gets r$
                \State $DAG_i[r] \gets \{\}$
                \Comment{garbage collect old rounds}
            \EndIf
        
            \State $r \gets r+1$
        \EndWhile
        
        \For{$\textbf{every} ~v' \in \textit{verticesToDeliver}$}
        \Comment{in some deterministic order}
              \State \textbf{output}
              $\textit{a\_deliver}_i(v'.\textit{block},v'.\textit{round},
              v'.\textit{source})$
              \label{alg:SMR:decide}
              \State $\textit{deliveredVertices} \gets \textit{deliveredVertices} \cup \{v'\}$
          \EndFor
        \EndWhile
    \EndProcedure

    \Statex

    
    

\end{algorithmic}
\label{alg:GC}
\end{algorithm}

\paragraph{Bounded memory.}
In Appendix~\ref{app:proofs} we show that for every round $r$ there is a round $r' > r$ in which a leader is committed. In particular, this means that for every round $r$ with median timestamp $ts$, there will be eventually a committed leader with a high enough timestamp for $r$ to be garbage collected.

\paragraph{Fairness.}
First note that since every round has at least $2f+1$ vertices, the median timestamp of a round always belongs to an honest party.
Let $p_i$ be a party that broadcast a vertex $v$ at some round $r$ at time $t$ after GST, we show that all honest parties order $v$. By the assumption on the reliable broadcast latency, all honest parties reliably deliver $v$ before time $t + \Delta$.
Let $p_j$ be the first party that advances to round $r$.
In Appendix~\ref{app:proofs} we show that if an honest party advances to round $r$ at time $t$ after GST, then all honest parties advance to round $r$ no later than at time $t+ 2\Delta$.
Therefore, $p_j$ advanced to round $r$ not before $t - 2\Delta$.
Therefore, the timestamp of round $r$ is at least $t - 2\Delta$.
Thus, round $r$ is garbage collected only after a leader $v'$ with timestamp higher than $t + \Delta$ is ordered.
By the way the leader's timestamp is computed there is at least one vertex $v''$ in $v'.\emph{strongEdges}$ that broadcast by an honest party after time $t +\Delta$. 
Therefore, by the manner weak edges are added, there is an edge between $v''$ and $v$. Fairness follows since $v$ and $v''$ are in $v'$'s casual history and thus both ordered together with $v'$.  

\section{Implementation} \label{sec:implementation}
We implement a networked multi-core eventually synchronous \sys party forking the Narwhal project\footnote{\url{https://github.com/facebookresearch/narwhal}}. Narwhal provides the structured DAG used at the core of \sys, which we modify to support fast-path in partial synchrony as described in Section~\ref{sub:ourDAG}. Additionally, it provides well-documented benchmarking scripts to measure performance in various conditions, and it is close to a production system (it provides real networking, cryptography, and persistent storage). 
It is implemented in Rust, uses \texttt{tokio}\footnote{\url{https://tokio.rs}} for asynchronous networking, \texttt{ed25519-dalek}\footnote{\url{https://github.com/dalek-cryptography/ed25519-dalek}} for elliptic curve based  signatures, and data-structures are persisted using \texttt{Rocksdb}\footnote{\url{https://rocksdb.org}}. It uses TCP to achieve reliable point-to-point channels, necessary to correctly implement the distributed system abstractions.
By default, the Narwhal codebase runs the Tusk consensus protocol~\cite{danezis2021narwhal}; we modify the \texttt{proposer} module of the \texttt{primary} crate and the \texttt{consensus} crate to use \sys instead. Implementing \sys requires editing less than 200 LOC, and does not require any extra protocol message or cryptographic tool.
We are open-sourcing \sys\footnote{
\ifdefined\cameraReady
\url{https://github.com/asonnino/narwhal/tree/bullshark}
\else
https://www.dropbox.com/s/6cqulw8rosxl9bv/bullshark.zip?dl=0
\fi
} along with any Amazon web services orchestration scripts and measurements data to enable reproducible results\footnote{
\ifdefined\cameraReady
\url{https://github.com/asonnino/narwhal/tree/bullshark/benchmark/data}
\else
https://www.dropbox.com/s/6cqulw8rosxl9bv/bullshark.zip?dl=0
\fi
}.
\section{Evaluation} 
\label{sec:evaluation}
We evaluate the throughput and latency of our implementation of \sys through experiments on AWS. 
We particularly aim to demonstrate that 
(i) \sys achieves high throughput even for large committee sizes,
(ii) \sys has low latency even under high load, in the WAN, and with large committee sizes, and
(iii) \sys is robust when some parts of the system inevitably crash-fail. Note that evaluating BFT protocols in the presence of Byzantine faults is still an open research question~\cite{twins}.

We deploy a testbed on AWS, using \texttt{m5.8xlarge} instances across 5 different AWS regions: N. Virginia (us-east-1), N. California (us-west-1), Sydney (ap-southeast-2), Stockholm (eu-north-1), and Tokyo (ap-northeast-1). Parties are distributed across those regions as equally as possible. Each machine provides 10Gbps of bandwidth, 32 virtual CPUs (16 physical core) on a 2.5GHz, Intel Xeon Platinum 8175, 128GB memory, and runs Linux Ubuntu server 20.04. We select these machines because they provide decent performance and are in the price range of `commodity servers'.

In the following sections, each measurement in the graphs is the average of 2 independent runs, and the error bars represent one standard deviation; errors bars are sometimes too small to be visible on the graph. Our baseline experiment parameters are 10 honest parties, a maximum block size of 500KB, and a transaction size of 512B. We instantiate one benchmark client per party (collocated on the same machine) submitting transactions at a fixed rate for a duration of 5 minutes. The leader timeout value is set to 5 seconds. When referring to \emph{latency}, we mean the time elapsed from when the client submits the transaction to when the transaction is committed by one party. We measure it by tracking sample transactions throughout the system.

\subsection{Benchmark in the common case}
\Cref{fig:common-case} illustrates the latency and throughput of \sys, Tusk and HotStuff for varying numbers of parties.

\begin{figure*}[t]
\centering
\includegraphics[width=\textwidth]{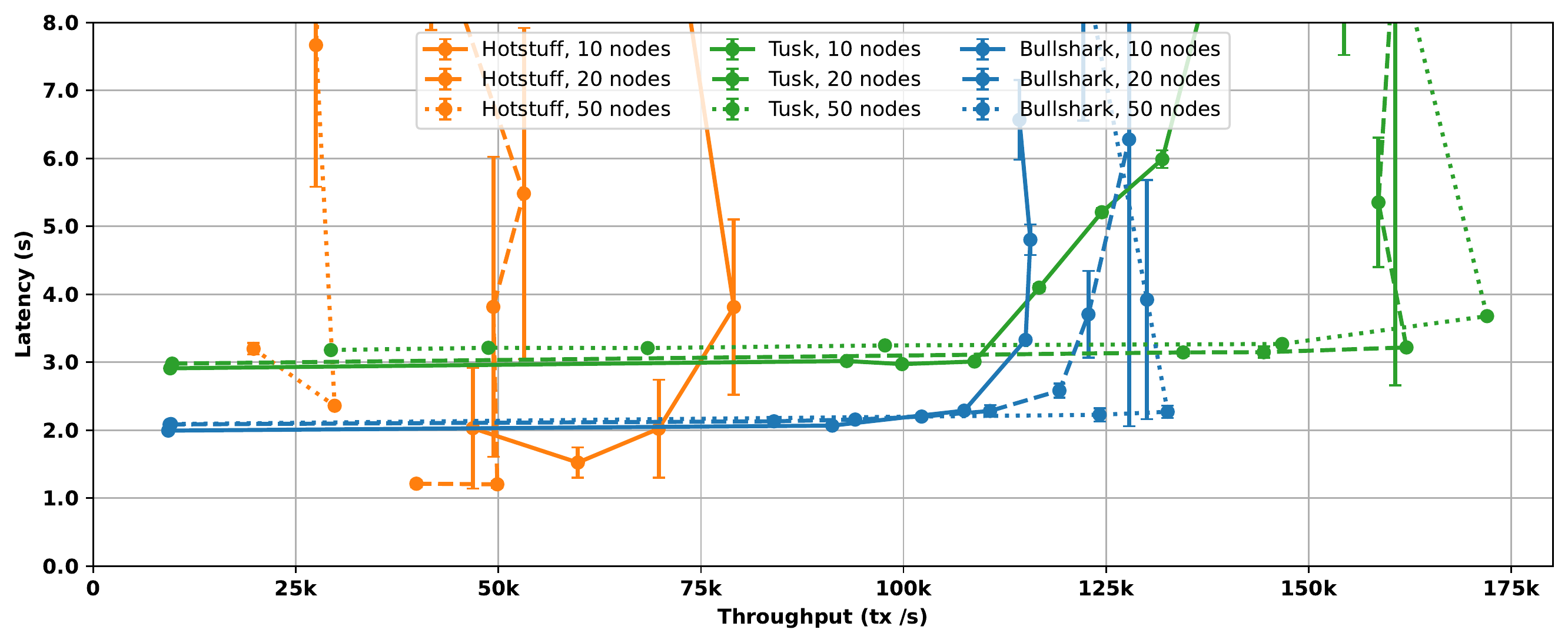}
\caption{
Comparative throughput-latency performance of HotStuff, Tusk, and \sys. WAN measurements with 10, 20, 50 parties. No faulty parties, 500KB maximum block size and 512B transaction size.
}
\label{fig:common-case}
\end{figure*}

\paragraph{HotStuff}
The maximum throughput we observe for HotStuff is 70,000 tx/s for a committee of 10 parties, and lower (up to 50,000 tx/s) for a larger committee of 20, and even lower (around 30,000 tx/s) for a committee of 50. The experiments demonstrate that HotStuff does not scale well when increasing the committee size. However, its latency before saturation is low, at around 2 seconds.

\paragraph{Tusk}
Tusk exhibits a significantly higher throughput than HotStuff. It peaks at 110,000 tx/s for a committee of 10 and at around 160,000 tx/s for larger committees of 20 and 50 parties. It may seem counter-intuitive that the throughput increases with the committee size: this is due to the implementation of the DAG not using all resources (network, disk, CPU) optimally. Therefore, more parties lead to increased multiplexing of resource use and higher performance~\cite{danezis2021narwhal}. Despite its high throughput, Tusk's latency is higher than HotStuff, at around 3 secs (for all committee sizes).

\paragraph{\sys}
\sys strikes a balance between the high throughput of Tusk and the low latency of HotStuff. Its throughput is significantly higher than HotStuff, reaching 110,000 tx/s (for a committee of 10) and 130,000 tx/s (for a committee of 50); \sys's throughput is over 2x higher than HotStuff's. 
Bullshark is built from the same DAG as Tusk and thus inherits its scalability allowing it to maintain high performance for large committee sizes. 
\sys's selling point over Tusk is its low latency, at around 2 sec no matter the committee size. \sys's latency is lower than Tusk since it commits within 2 DAG rounds while Tusk requires 4. \sys's latency is comparable to HotStuff and 33\% lower than Tusk. 
\Cref{fig:common-tps} highlights this trade-off by showing the maximum throughput that can be achieved by HotStuff, Tusk, and Bullshark while keeping the latency under 2.5s and 5s. Tusk and Bullshark scale better than HotStuff when increasing the committee size; there is no dotted line for Tusk since it cannot commit transactions in less than 2.5s. 

\begin{figure}[t]
\centering
\includegraphics[width=\columnwidth]{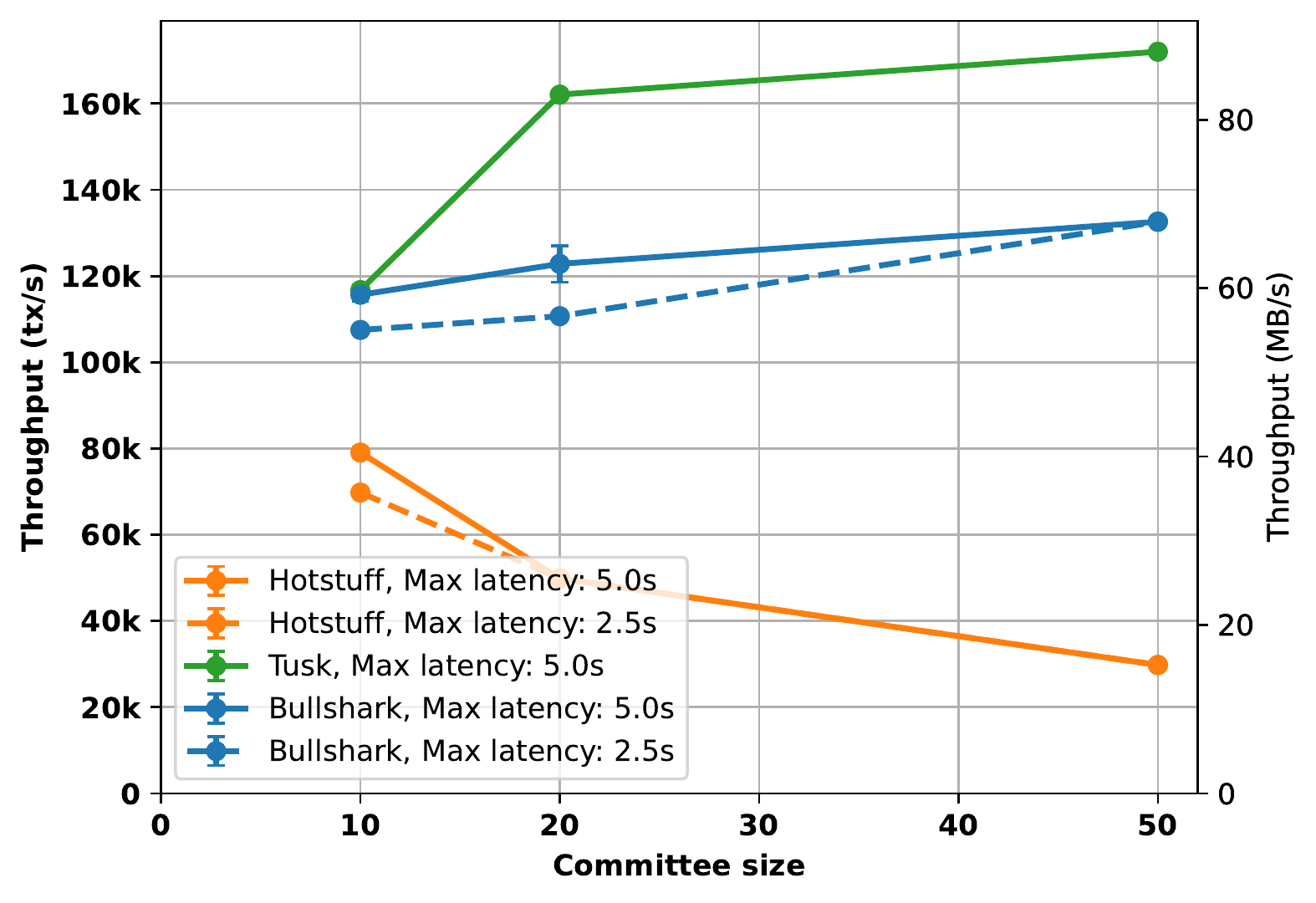}
\caption{
Maximum achievable throughput of HotStuff, Tusk, and \sys, keeping the latency under 2.5s and 5s. WAN measurements with 10, 20, 50 parties. No faulty parties, 500KB maximum block size and 512B transaction size.
 }
\label{fig:common-tps}
\end{figure}

\subsection{Benchmark under crash-faults}
\Cref{fig:faults} depicts the performance of HotStuff, Tusk, and \sys when a committee of 10 parties suffers 1 to 3 crash-faults (the maximum that can be tolerated in this setting). HotStuff suffers a massive degradation in throughput as well as a dramatic increase in latency. For 3 faults, the throughput of HotStuff drops by over 10x and its latency increases by 15x compared to no faults.
In contrast, both Tusk and \sys maintain a good level of throughput: the underlying DAG continues collecting and disseminating transactions despite the crash-faults, and is not overly affected by the faulty parties. The reduction in throughput is in great part due to losing the capacity of faulty parties. 
When operating with 3 faults, both Tusk and \sys provide a 10x throughput increase and about 7x latency reduction with respect to HotStuff.

\begin{figure}[t]
\centering
\includegraphics[width=\columnwidth]{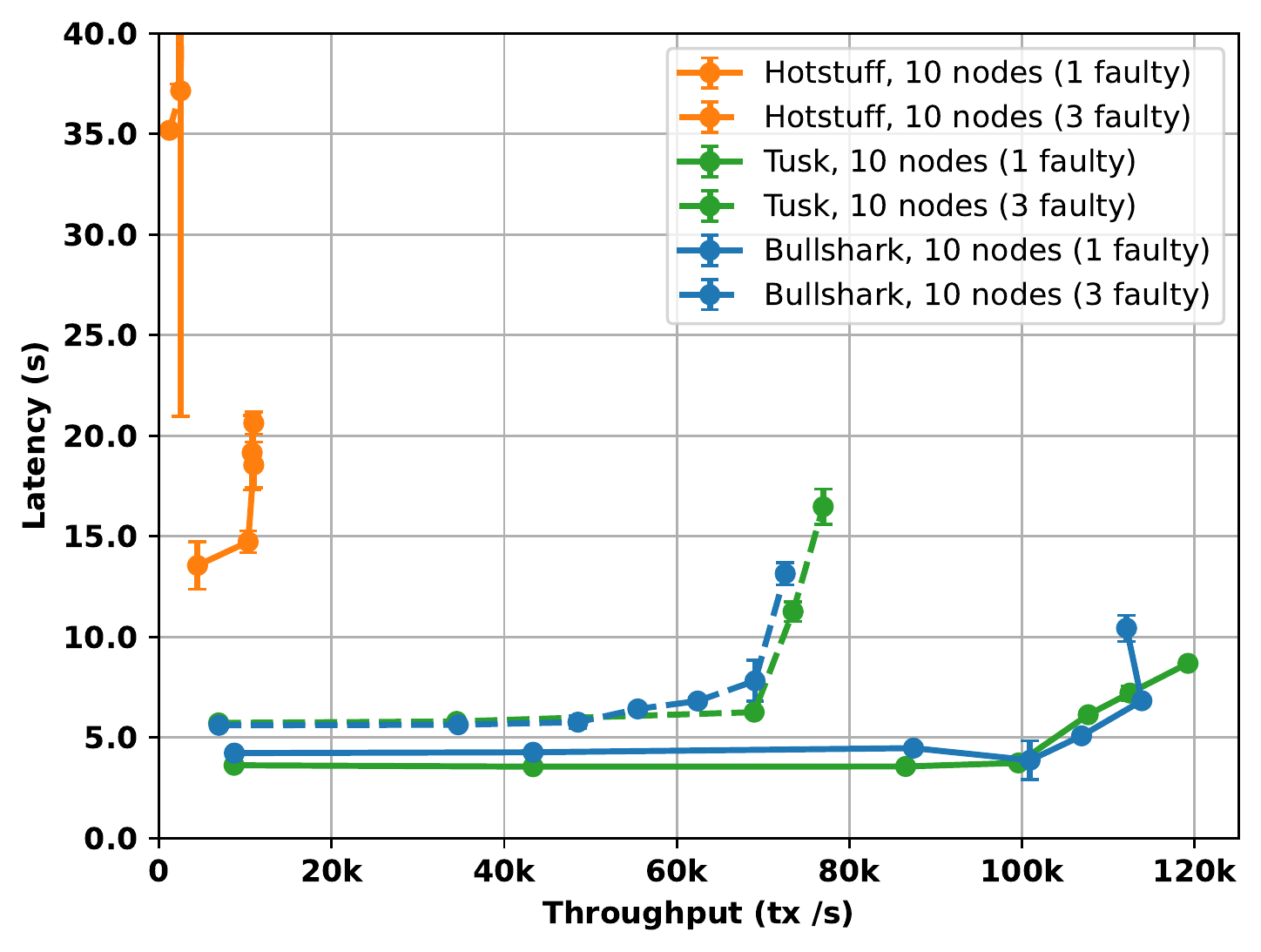}
\caption{
Comparative throughput-latency under crash-faults of HotStuff, Tusk, and \sys. WAN measurements with 10 parties. Zero, one, and three crash-faults, 500KB maximum block size and 512B transaction size.
}
\label{fig:faults}
\end{figure}

\subsection{Performance under asynchrony}
HotStuff has no liveness guarantees when the eventual synchrony assumption does not hold (before GST), either due to (aggressive) DDoS attacks targeted against the leaders~\cite{spiegelman2021ace} or adversarial delays on the leaders' messages as experimentally proven in prior work~\cite{danezis2021narwhal,gelashvili2021jolteon} . That is, the throughput of the system falls to $0$. The same can happen to the partially synchronous version of \sys. The reason is that whenever a party becomes the leader for some round, its proposal can be delayed such that all other parties timeout for that round. In order to avoid this attack, Tusk and DAG-Rider elects leaders unpredictably after the DAG is constructed which makes such attacks impossible.
The purpose of the fallback mode of \sys is to maintain the same liveness properties as Tusk and DAG-Rider under asynchrony without compromising on performance during periods of synchrony. If the voting type of all parties is fallback, then \sys acts as Tusk. In the fallback mode, \sys thus renounces to its latency advantage with respect to Tusk in order to remain live under asynchrony. As any asynchronous protocol, the performance of both Tusk and \sys during periods of asynchrony can be arbitrarily bad as they depend on the network conditions (which guarantee delivery after unbounded time). When the period of asynchrony ends, parties change their voting type to steady-state, and \sys offers again its state-of-the-art latency.

\section{Related work} \label{sec:related}

In this Section we discuss other prior works relevant to \sys and a more in depth comparison with the systems against which we evaluate.

\paragraph{Performance comparisons:}
We compare \sys with Tusk~\cite{danezis2021narwhal} and HotStuff~\cite{hotstuff}. Tusk is the most similar system to \sys. It is a zero-message consensus protocol built on top of the same structured DAG as \sys. It is however fully asynchronous while \sys is partially-synchronous fast path. HotStuff is an established partially-synchronous protocol running at the heart of a number of projects~\cite{diem, celo, flow, thunder, cypherium}, and a successor of the popular Tendermint~\cite{tendermint}.

We aim to compare \sys with related systems as fairly as possible. An important reason for selecting Tusk\footnote{https://github.com/asonnino/narwhal} and HotStuff\footnote{https://github.com/asonnino/hotstuff} is because they both have open-source implementations sharing deep similarities with our own. They are both written in Rust using the same network, cryptographic and storage libraries than ours. They are both designed to take full advantage of multi-core machines and to run in the WAN. 

We limit our comparison to these two systems, thus omitting a number of important related works such as~\cite{guo2020dumbo, stathakopoulou2019mir, chan2020streamlet, castro1999practical,kogias2016enhancing, tendermint, yang2019prism}. 
A practical comparison with those systems would hardly be fair as they do not provide an open-source implementations comparable to our own. Some selected different cryptographic libraries, use different cryptographic primitives (such as threshold signatures), or entirely emulate all cryptographic operations. A number of them are written in different programming languages, do not provide persistent storage, use a different network stack, or are not multi-threaded thus under-utilizing the AWS machines we selected. Most implementations of prior works are not designed to run in the WAN (e.g., have no synchronizer), or are internally sized to process empty transactions and are thus not adapted to the 512B transaction size we use. Instead, we provide below a discussion on the performance of alternatives based on their reported work.

\paragraph{Partially-synchronous protocols:}
Hotstuff-over-Narwhal~\cite{danezis2021narwhal} and Mir-BFT \cite{mir-bft} are the most performant partially synchronous consensus protocols available.
The performance of the former is close to \sys under no faults given that they share the same mempool implementation. However, \sys performs considerably better under faults and the engineering effort of Hotstuff-over-Narwhal is double that of \sys. The extra code required to implement \sys over Narwhal is about 200 LOC\footnote{
\ifdefined\cameraReady
\url{https://github.com/asonnino/narwhal/tree/bullshark}
\else
https://www.dropbox.com/s/6cqulw8rosxl9bv/bullshark.zip?dl=0
\fi
} 
(Alg.~\ref{alg:ESBullshark}) whereas the extra code of Hotstuff is more than 4k LOC. Additionally, \sys adapts to an asynchronous environment with the fallback protocol unlike Hotstuff that will completely forfeit liveness during asynchrony leading to an explosion of the confirmation latency (see \Cref{fig:faults} of \Cref{sec:evaluation}).

For Mir-BFT with transaction sizes of about 500B (similar to our benchmarks), the peak performance achieved on a WAN for 20 parties is around 80,000 tx/sec under 2 seconds -- a performance comparable to our baseline HotStuff. Impressively, this throughput decreases only slowly for large committees up to 100 nodes (at 60,000 tx/sec). Crash-faults lead to throughput dropping to zero for up to 50 seconds, and then operation resuming after a reconfiguration to exclude faulty nodes. \sys offers higher performance (almost 2x), at the same latency. 


\paragraph{DAG-based protocols:}
The DAG have been used in the context of Blockchains  in multiple systems. 
Hashgraph~\cite{baird2016swirlds} embeds an asynchronous consensus
mechanism into a DAG without a round-by-round step structure which results to unclear rules on when consensus is reached. This consequently results on an inability to implement garbage collection and potentially unbounded state.
Finally, Hashgraph uses local coins for randomness, which can potentially
lead to exponential latency. 

A number of blockchain projects build consensus over a DAG under open participation, partial synchrony or asynchrony network assumptions. GHOST~\cite{sompolinsky2015secure} proposes a finalization layer over a proof-of-work consensus protocol, using sub-graph structures to confirm blocks as final potentially before a judgment based on longest-chain / most-work chain fork choice rule can be made.
Tusk~\cite{danezis2021narwhal} is the most similar system to \sys. It is an asynchronous consensus using the same structured DAG as \sys. 
A limitation of any reactive asynchronous protocol, such as Tusk, is that slow parties are indistinguishable from faulty ones, and as a result the protocol proceeds without them. This creates issues around fairness and incentives, since honest, but geographically distant authorities may never be able to  commit transactions submitted to them.
%
%
Further, Tusk relies on clients to re-submit a transaction if it is not sequenced in time, due to leaders being faulty.
%
In contrast, both versions of \sys satisfy fairness after GST while ensuring bounded memory via a garbage collection mechanism.   

\paragraph{Dual-Mode Consensus Protocols:}
The idea of having optimistic and fallback paths in BFT consensus has first been explored by Kurasawe et al~\cite{kursawe2005optimistic} with followup improvements~\cite{ramasamy2005parsimonious,spiegelman2020search} on the communication complexity. However, these papers are theoretical and not designed for high-load applications hence their implementation would at best be close to the Hotstuff baseline.

The seminal work from Guerraoui et al~\cite{guerraoui2010next} introduced Abstract, a framework in which developers can plug and play multiple consensus protocols based on the environment they plan to deploy the protocol. A followup work called the Bolt-Dumbo Transformer (BDT)~\cite{lu2021bolt}, can be seen as instantiating of Abstract for the specific use case of a dual-mode consensus protocol.
BDT takes Abstract's general proposal and instantiates it by composing three separate consensus protocols as black boxes. Every round starts with 1) a partially synchronous protocol (HotStuff), times-out the leader and runs 2) an Asynchronous Binary Agreement in order to move on and run 3) a fully asynchronous consensus protocol~\cite{guo2020dumbo} as a fallback.
Ditto~\cite{gelashvili2021jolteon} follows another approach that does not require these black boxes.
Instead, it combines a 2-phase variant of Hotstuff with a variant of the asynchronous VABA~\cite{abraham2019asymptotically} protocol for fallback.
As a result it reduces the latency cost of BDT significantly, but cannot be generalized to a plug-and-play framework.


All the protocols above solve the problem of consensus in asynchrony, but they include the actual transactions in the proposals, hence their throughput is bounded by the one of Hotstuff.
A way to increase their throughput would be to adopt the Narwhal-HS~\cite{danezis2021narwhal} approach introduced in prior work, which substitute the transaction dissemination with Narwhal as a mempool and includes only hashes of mempool batches in the proposals. This would potentially achieve similar performance to \sys. However it would come at the steep costs of maintaining two code-bases (one for the mempool and one for the consensus), higher latency (since Narwhal does a reliable broadcast which is usually the first step of a consensus protocol) and loss of quantum-safety (since they all use threshold signatures to provide Safety with lower communication complexity). 
Unlike these ``hybrids'', \sys provides both the theoretical contribution of being the first BAB with all the good properties we already described, the practical contribution of significant latency gains in synchrony and the usability contribution of modifying only 200 LOC from the base-protocol Tusk.

\section{Discussion}
\label{sec:conclusion}
On the foundational level \sys is the first DAG-based zero overhead BFT protocol that achieves the best of both worlds of partially synchronous and asynchronous protocols. 
It keeps all the desired properties of DAG-Rider, including optimal amortized complexity, asynchronous liveness, and post quantum security, while also allowing a fast-path during periods of synchrony.
\sys's parties switch their voting type to fallback after every unsuccessful wave.
An interesting future direction is to add an adaptive mechanism for parties to learn when is best to switch between the types.
Interestingly, since the DAG provides full information, this mechanism can be also implemented without extra communication.

The partially synchronous version of \sys is extremely simple (200 LOC) and highly efficient.
In particular, it does not need any view-change or view-synchronization mechanisms since the DAG already encodes all the required information.  
When implemented over the Narwhal mempool it has $2x$ the throughput of the partially synchronous HotStuff protocol and $33\%$ lower latency than the asynchronous Tusk protocol over Narwhal. 



\ifdefined\cameraReady
  \section*{Acknowledgements}
  This work was initiated when the authors were part of Novi reseacrh at Facebook. We thank George Danezis for his valuable feedback.
\fi


\bibliographystyle{ACM-Reference-Format}
\bibliography{bib}


\begin{thebibliography}{00}


\ifx \showCODEN    \undefined \def \showCODEN     #1{\unskip}     \fi
\ifx \showDOI      \undefined \def \showDOI       #1{#1}\fi
\ifx \showISBNx    \undefined \def \showISBNx     #1{\unskip}     \fi
\ifx \showISBNxiii \undefined \def \showISBNxiii  #1{\unskip}     \fi
\ifx \showISSN     \undefined \def \showISSN      #1{\unskip}     \fi
\ifx \showLCCN     \undefined \def \showLCCN      #1{\unskip}     \fi
\ifx \shownote     \undefined \def \shownote      #1{#1}          \fi
\ifx \showarticletitle \undefined \def \showarticletitle #1{#1}   \fi
\ifx \showURL      \undefined \def \showURL       {\relax}        \fi
\providecommand\bibfield[2]{#2}
\providecommand\bibinfo[2]{#2}
\providecommand\natexlab[1]{#1}
\providecommand\showeprint[2][]{arXiv:#2}

\bibitem[\protect\citeauthoryear{??}{cel}{2022}]%
        {celo}
 \bibinfo{year}{2022}\natexlab{}.
\newblock \bibinfo{title}{Celo}.
\newblock \bibinfo{howpublished}{\url{https://celo.org}}.
  (\bibinfo{year}{2022}).
\newblock


\bibitem[\protect\citeauthoryear{??}{cyp}{2022}]%
        {cypherium}
 \bibinfo{year}{2022}\natexlab{}.
\newblock \bibinfo{title}{Cypherium}.
\newblock \bibinfo{howpublished}{\url{https://www.cypherium.io}}.
  (\bibinfo{year}{2022}).
\newblock


\bibitem[\protect\citeauthoryear{??}{DAG}{2022}]%
        {DAGmeetsBFT}
 \bibinfo{year}{2022}\natexlab{}.
\newblock \bibinfo{title}{DAG meets BFT}.
\newblock
  \bibinfo{howpublished}{\url{https://decentralizedthoughts.github.io/2022-06-28-DAG-meets-BFT/}}.
    (\bibinfo{year}{2022}).
\newblock


\bibitem[\protect\citeauthoryear{??}{die}{2022}]%
        {diem}
 \bibinfo{year}{2022}\natexlab{}.
\newblock \bibinfo{title}{Diem}.
\newblock \bibinfo{howpublished}{\url{https://www.diem.com}}.
  (\bibinfo{year}{2022}).
\newblock


\bibitem[\protect\citeauthoryear{??}{flo}{2022}]%
        {flow}
 \bibinfo{year}{2022}\natexlab{}.
\newblock \bibinfo{title}{Flow}.
\newblock \bibinfo{howpublished}{\url{https://www.onflow.org}}.
  (\bibinfo{year}{2022}).
\newblock


\bibitem[\protect\citeauthoryear{??}{thu}{2022}]%
        {thunder}
 \bibinfo{year}{2022}\natexlab{}.
\newblock \bibinfo{title}{Thunder}.
\newblock \bibinfo{howpublished}{\url{https://www.thundercore.com/}}.
  (\bibinfo{year}{2022}).
\newblock


\bibitem[\protect\citeauthoryear{Abraham, Malkhi, and Spiegelman}{Abraham
  et~al\mbox{.}}{2019}]%
        {abraham2019asymptotically}
\bibfield{author}{\bibinfo{person}{Ittai Abraham}, \bibinfo{person}{Dahlia
  Malkhi}, {and} \bibinfo{person}{Alexander Spiegelman}.}
  \bibinfo{year}{2019}\natexlab{}.
\newblock \showarticletitle{Asymptotically optimal validated asynchronous
  byzantine agreement}. In \bibinfo{booktitle}{{\em Proceedings of the 2019 ACM
  Symposium on Principles of Distributed Computing}}.
  \bibinfo{pages}{337--346}.
\newblock


\bibitem[\protect\citeauthoryear{Androulaki, Barger, Bortnikov, Cachin,
  Christidis, De~Caro, Enyeart, Ferris, Laventman, Manevich,
  et~al\mbox{.}}{Androulaki et~al\mbox{.}}{2018}]%
        {androulaki2018hyperledger}
\bibfield{author}{\bibinfo{person}{Elli Androulaki}, \bibinfo{person}{Artem
  Barger}, \bibinfo{person}{Vita Bortnikov}, \bibinfo{person}{Christian
  Cachin}, \bibinfo{person}{Konstantinos Christidis}, \bibinfo{person}{Angelo
  De~Caro}, \bibinfo{person}{David Enyeart}, \bibinfo{person}{Christopher
  Ferris}, \bibinfo{person}{Gennady Laventman}, \bibinfo{person}{Yacov
  Manevich}, {et~al\mbox{.}}} \bibinfo{year}{2018}\natexlab{}.
\newblock \showarticletitle{Hyperledger fabric: a distributed operating system
  for permissioned blockchains}. In \bibinfo{booktitle}{{\em Proceedings of the
  thirteenth EuroSys conference}}. \bibinfo{pages}{1--15}.
\newblock


\bibitem[\protect\citeauthoryear{Baird}{Baird}{2016}]%
        {baird2016swirlds}
\bibfield{author}{\bibinfo{person}{Leemon Baird}.}
  \bibinfo{year}{2016}\natexlab{}.
\newblock \showarticletitle{The swirlds hashgraph consensus algorithm: Fair,
  fast, byzantine fault tolerance}.
\newblock \bibinfo{journal}{{\em Swirlds Tech Reports SWIRLDS-TR-2016-01, Tech.
  Rep\/}} (\bibinfo{year}{2016}).
\newblock


\bibitem[\protect\citeauthoryear{Bano, Sonnino, Chursin, Perelman, and
  Malkhi}{Bano et~al\mbox{.}}{2021}]%
        {twins}
\bibfield{author}{\bibinfo{person}{Shehar Bano}, \bibinfo{person}{Alberto
  Sonnino}, \bibinfo{person}{Andrey Chursin}, \bibinfo{person}{Dmitri
  Perelman}, {and} \bibinfo{person}{Dahlia Malkhi}.}
  \bibinfo{year}{2021}\natexlab{}.
\newblock \showarticletitle{Twins: BFT Systems Made Robust}. In
  \bibinfo{booktitle}{{\em Principles of Distributed Systems}}.
\newblock


\bibitem[\protect\citeauthoryear{Boneh, Lynn, and Shacham}{Boneh
  et~al\mbox{.}}{2001}]%
        {boneh2001short}
\bibfield{author}{\bibinfo{person}{Dan Boneh}, \bibinfo{person}{Ben Lynn},
  {and} \bibinfo{person}{Hovav Shacham}.} \bibinfo{year}{2001}\natexlab{}.
\newblock \showarticletitle{Short signatures from the Weil pairing}. In
  \bibinfo{booktitle}{{\em International conference on the theory and
  application of cryptology and information security}}. Springer,
  \bibinfo{pages}{514--532}.
\newblock


\bibitem[\protect\citeauthoryear{Bracha}{Bracha}{1987}]%
        {bracha1987asynchronous}
\bibfield{author}{\bibinfo{person}{Gabriel Bracha}.}
  \bibinfo{year}{1987}\natexlab{}.
\newblock \showarticletitle{Asynchronous Byzantine agreement protocols}.
\newblock \bibinfo{journal}{{\em Information and Computation\/}}
  \bibinfo{volume}{75}, \bibinfo{number}{2} (\bibinfo{year}{1987}),
  \bibinfo{pages}{130--143}.
\newblock


\bibitem[\protect\citeauthoryear{Buchman}{Buchman}{2016}]%
        {tendermint}
\bibfield{author}{\bibinfo{person}{Ethan Buchman}.}
  \bibinfo{year}{2016}\natexlab{}.
\newblock {\em \bibinfo{title}{Tendermint: Byzantine fault tolerance in the age
  of blockchains}}.
\newblock \bibinfo{thesistype}{Ph.D. Dissertation}.
\newblock


\bibitem[\protect\citeauthoryear{Cachin, Kursawe, and Shoup}{Cachin
  et~al\mbox{.}}{2005}]%
        {cachin2005Constantinople}
\bibfield{author}{\bibinfo{person}{Christian Cachin}, \bibinfo{person}{Klaus
  Kursawe}, {and} \bibinfo{person}{Victor Shoup}.}
  \bibinfo{year}{2005}\natexlab{}.
\newblock \showarticletitle{Random oracles in Constantinople: Practical
  asynchronous Byzantine agreement using cryptography}.
\newblock \bibinfo{journal}{{\em Journal of Cryptology\/}}
  \bibinfo{volume}{18}, \bibinfo{number}{3} (\bibinfo{year}{2005}),
  \bibinfo{pages}{219--246}.
\newblock


\bibitem[\protect\citeauthoryear{Cachin and Tessaro}{Cachin and
  Tessaro}{2005}]%
        {cachin2005dispersal}
\bibfield{author}{\bibinfo{person}{Christian Cachin} {and}
  \bibinfo{person}{Stefano Tessaro}.} \bibinfo{year}{2005}\natexlab{}.
\newblock \showarticletitle{Asynchronous verifiable information dispersal}. In
  \bibinfo{booktitle}{{\em 24th IEEE Symposium on Reliable Distributed Systems
  (SRDS'05)}}. IEEE, \bibinfo{pages}{191--201}.
\newblock


\bibitem[\protect\citeauthoryear{Canetti}{Canetti}{1996}]%
        {canetti1996studies}
\bibfield{author}{\bibinfo{person}{Ran Canetti}.}
  \bibinfo{year}{1996}\natexlab{}.
\newblock {\em \bibinfo{title}{Studies in secure multiparty computation and
  applications}}.
\newblock \bibinfo{thesistype}{Ph.D. Dissertation}. \bibinfo{school}{Citeseer}.
\newblock


\bibitem[\protect\citeauthoryear{Castro, Liskov, et~al\mbox{.}}{Castro
  et~al\mbox{.}}{1999}]%
        {castro1999practical}
\bibfield{author}{\bibinfo{person}{Miguel Castro}, \bibinfo{person}{Barbara
  Liskov}, {et~al\mbox{.}}} \bibinfo{year}{1999}\natexlab{}.
\newblock \showarticletitle{Practical byzantine fault tolerance}. In
  \bibinfo{booktitle}{{\em OSDI}}, Vol.~\bibinfo{volume}{99}.
  \bibinfo{pages}{173--186}.
\newblock


\bibitem[\protect\citeauthoryear{Chan and Shi}{Chan and Shi}{2020}]%
        {chan2020streamlet}
\bibfield{author}{\bibinfo{person}{Benjamin~Y Chan} {and}
  \bibinfo{person}{Elaine Shi}.} \bibinfo{year}{2020}\natexlab{}.
\newblock \showarticletitle{Streamlet: Textbook streamlined blockchains}. In
  \bibinfo{booktitle}{{\em Proceedings of the 2nd ACM Conference on Advances in
  Financial Technologies}}. \bibinfo{pages}{1--11}.
\newblock


\bibitem[\protect\citeauthoryear{Chandra and Toueg}{Chandra and Toueg}{1996}]%
        {chandra1996unreliable}
\bibfield{author}{\bibinfo{person}{Tushar~Deepak Chandra} {and}
  \bibinfo{person}{Sam Toueg}.} \bibinfo{year}{1996}\natexlab{}.
\newblock \showarticletitle{Unreliable failure detectors for reliable
  distributed systems}.
\newblock \bibinfo{journal}{{\em Journal of the ACM (JACM)\/}}
  \bibinfo{volume}{43}, \bibinfo{number}{2} (\bibinfo{year}{1996}),
  \bibinfo{pages}{225--267}.
\newblock


\bibitem[\protect\citeauthoryear{Danezis, Kokoris-Kogias, Sonnino, and
  Spiegelman}{Danezis et~al\mbox{.}}{2022}]%
        {danezis2021narwhal}
\bibfield{author}{\bibinfo{person}{George Danezis}, \bibinfo{person}{Lefteris
  Kokoris-Kogias}, \bibinfo{person}{Alberto Sonnino}, {and}
  \bibinfo{person}{Alexander Spiegelman}.} \bibinfo{year}{2022}\natexlab{}.
\newblock \showarticletitle{Narwhal and Tusk: a DAG-based mempool and efficient
  BFT consensus}. In \bibinfo{booktitle}{{\em Proceedings of the Seventeenth
  European Conference on Computer Systems}}. \bibinfo{pages}{34--50}.
\newblock


\bibitem[\protect\citeauthoryear{Fischer, Lynch, and Paterson}{Fischer
  et~al\mbox{.}}{1985}]%
        {fischer1985impossibility}
\bibfield{author}{\bibinfo{person}{Michael~J Fischer}, \bibinfo{person}{Nancy~A
  Lynch}, {and} \bibinfo{person}{Michael~S Paterson}.}
  \bibinfo{year}{1985}\natexlab{}.
\newblock \showarticletitle{Impossibility of distributed consensus with one
  faulty process}.
\newblock \bibinfo{journal}{{\em Journal of the ACM (JACM)\/}}
  \bibinfo{volume}{32}, \bibinfo{number}{2} (\bibinfo{year}{1985}),
  \bibinfo{pages}{374--382}.
\newblock


\bibitem[\protect\citeauthoryear{G{\k{a}}gol, Le{\'s}niak, Straszak, and
  {\'S}wi{\k{e}}tek}{G{\k{a}}gol et~al\mbox{.}}{2019}]%
        {gkagol2019aleph}
\bibfield{author}{\bibinfo{person}{Adam G{\k{a}}gol}, \bibinfo{person}{Damian
  Le{\'s}niak}, \bibinfo{person}{Damian Straszak}, {and}
  \bibinfo{person}{Micha{\l} {\'S}wi{\k{e}}tek}.}
  \bibinfo{year}{2019}\natexlab{}.
\newblock \showarticletitle{Aleph: Efficient atomic broadcast in asynchronous
  networks with byzantine nodes}. In \bibinfo{booktitle}{{\em Proceedings of
  the 1st ACM Conference on Advances in Financial Technologies}}.
  \bibinfo{pages}{214--228}.
\newblock


\bibitem[\protect\citeauthoryear{Gelashvili, Kokoris-Kogias, Sonnino,
  Spiegelman, and Xiang}{Gelashvili et~al\mbox{.}}{2021}]%
        {gelashvili2021jolteon}
\bibfield{author}{\bibinfo{person}{Rati Gelashvili}, \bibinfo{person}{Lefteris
  Kokoris-Kogias}, \bibinfo{person}{Alberto Sonnino},
  \bibinfo{person}{Alexander Spiegelman}, {and} \bibinfo{person}{Zhuolun
  Xiang}.} \bibinfo{year}{2021}\natexlab{}.
\newblock \showarticletitle{Jolteon and ditto: Network-adaptive efficient
  consensus with asynchronous fallback}.
\newblock \bibinfo{journal}{{\em arXiv preprint arXiv:2106.10362\/}}
  (\bibinfo{year}{2021}).
\newblock


\bibitem[\protect\citeauthoryear{Guerraoui, Kne{\v{z}}evi{\'c}, Qu{\'e}ma, and
  Vukoli{\'c}}{Guerraoui et~al\mbox{.}}{2010}]%
        {guerraoui2010next}
\bibfield{author}{\bibinfo{person}{Rachid Guerraoui}, \bibinfo{person}{Nikola
  Kne{\v{z}}evi{\'c}}, \bibinfo{person}{Vivien Qu{\'e}ma}, {and}
  \bibinfo{person}{Marko Vukoli{\'c}}.} \bibinfo{year}{2010}\natexlab{}.
\newblock \showarticletitle{The next 700 BFT protocols}. In
  \bibinfo{booktitle}{{\em Proceedings of the 5th European conference on
  Computer systems}}. \bibinfo{pages}{363--376}.
\newblock


\bibitem[\protect\citeauthoryear{Guo, Lu, Tang, Xu, and Zhang}{Guo
  et~al\mbox{.}}{2020}]%
        {guo2020dumbo}
\bibfield{author}{\bibinfo{person}{Bingyong Guo}, \bibinfo{person}{Zhenliang
  Lu}, \bibinfo{person}{Qiang Tang}, \bibinfo{person}{Jing Xu}, {and}
  \bibinfo{person}{Zhenfeng Zhang}.} \bibinfo{year}{2020}\natexlab{}.
\newblock \showarticletitle{Dumbo: Faster asynchronous bft protocols}. In
  \bibinfo{booktitle}{{\em Proceedings of the 2020 ACM SIGSAC Conference on
  Computer and Communications Security}}. \bibinfo{pages}{803--818}.
\newblock


\bibitem[\protect\citeauthoryear{Keidar, Kokoris-Kogias, Naor, and
  Spiegelman}{Keidar et~al\mbox{.}}{2021}]%
        {keidar2021all}
\bibfield{author}{\bibinfo{person}{Idit Keidar}, \bibinfo{person}{Eleftherios
  Kokoris-Kogias}, \bibinfo{person}{Oded Naor}, {and}
  \bibinfo{person}{Alexander Spiegelman}.} \bibinfo{year}{2021}\natexlab{}.
\newblock \showarticletitle{All you need is dag}.
\newblock \bibinfo{journal}{{\em arXiv preprint arXiv:2102.08325\/}}
  (\bibinfo{year}{2021}).
\newblock


\bibitem[\protect\citeauthoryear{Kogias, Jovanovic, Gailly, Khoffi, Gasser, and
  Ford}{Kogias et~al\mbox{.}}{2016}]%
        {kogias2016enhancing}
\bibfield{author}{\bibinfo{person}{Eleftherios~Kokoris Kogias},
  \bibinfo{person}{Philipp Jovanovic}, \bibinfo{person}{Nicolas Gailly},
  \bibinfo{person}{Ismail Khoffi}, \bibinfo{person}{Linus Gasser}, {and}
  \bibinfo{person}{Bryan Ford}.} \bibinfo{year}{2016}\natexlab{}.
\newblock \showarticletitle{Enhancing bitcoin security and performance with
  strong consistency via collective signing}. In \bibinfo{booktitle}{{\em 25th
  $\{$usenix$\}$ security symposium ($\{$usenix$\}$ security 16)}}.
  \bibinfo{pages}{279--296}.
\newblock


\bibitem[\protect\citeauthoryear{Kursawe and Shoup}{Kursawe and Shoup}{2005}]%
        {kursawe2005optimistic}
\bibfield{author}{\bibinfo{person}{Klaus Kursawe} {and} \bibinfo{person}{Victor
  Shoup}.} \bibinfo{year}{2005}\natexlab{}.
\newblock \showarticletitle{Optimistic asynchronous atomic broadcast}. In
  \bibinfo{booktitle}{{\em International Colloquium on Automata, Languages, and
  Programming (ICALP)}}. Springer, \bibinfo{pages}{204--215}.
\newblock


\bibitem[\protect\citeauthoryear{Larrea, Fern{\'a}ndez, and Ar{\'e}valo}{Larrea
  et~al\mbox{.}}{2004}]%
        {larrea2004implementation}
\bibfield{author}{\bibinfo{person}{Mikel Larrea}, \bibinfo{person}{Antonio
  Fern{\'a}ndez}, {and} \bibinfo{person}{Sergio Ar{\'e}valo}.}
  \bibinfo{year}{2004}\natexlab{}.
\newblock \showarticletitle{On the implementation of unreliable failure
  detectors in partially synchronous systems}.
\newblock \bibinfo{journal}{{\it IEEE Trans. Comput.}} \bibinfo{volume}{53},
  \bibinfo{number}{7} (\bibinfo{year}{2004}), \bibinfo{pages}{815--828}.
\newblock


\bibitem[\protect\citeauthoryear{Libert, Joye, and Yung}{Libert
  et~al\mbox{.}}{2016}]%
        {libert2016born}
\bibfield{author}{\bibinfo{person}{Beno{\^\i}t Libert}, \bibinfo{person}{Marc
  Joye}, {and} \bibinfo{person}{Moti Yung}.} \bibinfo{year}{2016}\natexlab{}.
\newblock \showarticletitle{Born and raised distributively: Fully distributed
  non-interactive adaptively-secure threshold signatures with short shares}.
\newblock \bibinfo{journal}{{\em Theoretical Computer Science\/}}
  \bibinfo{volume}{645} (\bibinfo{year}{2016}), \bibinfo{pages}{1--24}.
\newblock


\bibitem[\protect\citeauthoryear{Loss and Moran}{Loss and Moran}{2018}]%
        {loss2018combining}
\bibfield{author}{\bibinfo{person}{Julian Loss} {and} \bibinfo{person}{Tal
  Moran}.} \bibinfo{year}{2018}\natexlab{}.
\newblock \showarticletitle{Combining Asynchronous and Synchronous Byzantine
  Agreement: The Best of Both Worlds.}
\newblock \bibinfo{journal}{{\em IACR Cryptol. ePrint Arch.\/}}
  \bibinfo{volume}{2018} (\bibinfo{year}{2018}), \bibinfo{pages}{235}.
\newblock


\bibitem[\protect\citeauthoryear{Lu, Lu, and Tang}{Lu et~al\mbox{.}}{2021}]%
        {lu2021bolt}
\bibfield{author}{\bibinfo{person}{Yuan Lu}, \bibinfo{person}{Zhenliang Lu},
  {and} \bibinfo{person}{Qiang Tang}.} \bibinfo{year}{2021}\natexlab{}.
\newblock \showarticletitle{Bolt-Dumbo Transformer: Asynchronous Consensus As
  Fast As Pipelined BFT}.
\newblock \bibinfo{journal}{{\em arXiv preprint arXiv:2103.09425\/}}
  (\bibinfo{year}{2021}).
\newblock


\bibitem[\protect\citeauthoryear{Ramasamy and Cachin}{Ramasamy and
  Cachin}{2005}]%
        {ramasamy2005parsimonious}
\bibfield{author}{\bibinfo{person}{HariGovind~V Ramasamy} {and}
  \bibinfo{person}{Christian Cachin}.} \bibinfo{year}{2005}\natexlab{}.
\newblock \showarticletitle{Parsimonious asynchronous byzantine-fault-tolerant
  atomic broadcast}. In \bibinfo{booktitle}{{\em International Conference On
  Principles Of Distributed Systems (OPODIS)}}. Springer,
  \bibinfo{pages}{88--102}.
\newblock


\bibitem[\protect\citeauthoryear{Schett and Danezis}{Schett and
  Danezis}{2021}]%
        {schett2021embedding}
\bibfield{author}{\bibinfo{person}{Maria~A Schett} {and}
  \bibinfo{person}{George Danezis}.} \bibinfo{year}{2021}\natexlab{}.
\newblock \showarticletitle{Embedding a Deterministic BFT Protocol in a Block
  DAG}.
\newblock \bibinfo{journal}{{\em arXiv preprint arXiv:2102.09594\/}}
  (\bibinfo{year}{2021}).
\newblock


\bibitem[\protect\citeauthoryear{Shoup}{Shoup}{2000}]%
        {shoup2000practical}
\bibfield{author}{\bibinfo{person}{Victor Shoup}.}
  \bibinfo{year}{2000}\natexlab{}.
\newblock \showarticletitle{Practical threshold signatures}. In
  \bibinfo{booktitle}{{\em International Conference on the Theory and
  Applications of Cryptographic Techniques}}. Springer,
  \bibinfo{pages}{207--220}.
\newblock


\bibitem[\protect\citeauthoryear{Sompolinsky and Zohar}{Sompolinsky and
  Zohar}{2015}]%
        {sompolinsky2015secure}
\bibfield{author}{\bibinfo{person}{Yonatan Sompolinsky} {and}
  \bibinfo{person}{Aviv Zohar}.} \bibinfo{year}{2015}\natexlab{}.
\newblock \showarticletitle{Secure high-rate transaction processing in
  bitcoin}. In \bibinfo{booktitle}{{\em International Conference on Financial
  Cryptography and Data Security}}. Springer, \bibinfo{pages}{507--527}.
\newblock


\bibitem[\protect\citeauthoryear{Spiegelman}{Spiegelman}{2021}]%
        {spiegelman2020search}
\bibfield{author}{\bibinfo{person}{Alexander Spiegelman}.}
  \bibinfo{year}{2021}\natexlab{}.
\newblock \showarticletitle{In Search for an Optimal Authenticated Byzantine
  Agreement}. In \bibinfo{booktitle}{{\em 35th International Symposium on
  Distributed Computing (DISC)}}.
\newblock


\bibitem[\protect\citeauthoryear{Spiegelman, Rinberg, and Malkhi}{Spiegelman
  et~al\mbox{.}}{2021}]%
        {spiegelman2021ace}
\bibfield{author}{\bibinfo{person}{Alexander Spiegelman}, \bibinfo{person}{Arik
  Rinberg}, {and} \bibinfo{person}{Dahlia Malkhi}.}
  \bibinfo{year}{2021}\natexlab{}.
\newblock \showarticletitle{ACE: Abstract Consensus Encapsulation for Liveness
  Boosting of State Machine Replication}. In \bibinfo{booktitle}{{\em 24th
  International Conference on Principles of Distributed Systems (OPODIS
  2020)}}. Schloss Dagstuhl-Leibniz-Zentrum f{\"u}r Informatik.
\newblock


\bibitem[\protect\citeauthoryear{Stathakopoulou, David, Pavlovic, and
  Vukoli{\'c}}{Stathakopoulou et~al\mbox{.}}{2019b}]%
        {stathakopoulou2019mir}
\bibfield{author}{\bibinfo{person}{Chrysoula Stathakopoulou},
  \bibinfo{person}{Tudor David}, \bibinfo{person}{Matej Pavlovic}, {and}
  \bibinfo{person}{Marko Vukoli{\'c}}.} \bibinfo{year}{2019}\natexlab{b}.
\newblock \showarticletitle{Mir-BFT: High-Throughput Robust BFT for
  Decentralized Networks}.
\newblock \bibinfo{journal}{{\em arXiv preprint arXiv:1906.05552\/}}
  (\bibinfo{year}{2019}).
\newblock


\bibitem[\protect\citeauthoryear{Stathakopoulou, David, and
  Vukolic}{Stathakopoulou et~al\mbox{.}}{2019a}]%
        {mir-bft}
\bibfield{author}{\bibinfo{person}{Chrysoula Stathakopoulou},
  \bibinfo{person}{Tudor David}, {and} \bibinfo{person}{Marko Vukolic}.}
  \bibinfo{year}{2019}\natexlab{a}.
\newblock \showarticletitle{Mir-BFT: High-Throughput {BFT} for Blockchains}.
\newblock \bibinfo{journal}{{\em CoRR\/}}  \bibinfo{volume}{abs/1906.05552}
  (\bibinfo{year}{2019}).
\newblock
\showeprint[arxiv]{1906.05552}
\showURL{%
\url{http://arxiv.org/abs/1906.05552}}


\bibitem[\protect\citeauthoryear{Yang, Bagaria, Wang, Alizadeh, Tse, Fanti, and
  Viswanath}{Yang et~al\mbox{.}}{2019}]%
        {yang2019prism}
\bibfield{author}{\bibinfo{person}{Lei Yang}, \bibinfo{person}{Vivek Bagaria},
  \bibinfo{person}{Gerui Wang}, \bibinfo{person}{Mohammad Alizadeh},
  \bibinfo{person}{David Tse}, \bibinfo{person}{Giulia Fanti}, {and}
  \bibinfo{person}{Pramod Viswanath}.} \bibinfo{year}{2019}\natexlab{}.
\newblock \showarticletitle{Prism: Scaling bitcoin by 10,000 x}.
\newblock \bibinfo{journal}{{\em arXiv preprint arXiv:1909.11261\/}}
  (\bibinfo{year}{2019}).
\newblock


\bibitem[\protect\citeauthoryear{Yin, Malkhi, Reiter, Gueta, and Abraham}{Yin
  et~al\mbox{.}}{2019}]%
        {hotstuff}
\bibfield{author}{\bibinfo{person}{Maofan Yin}, \bibinfo{person}{Dahlia
  Malkhi}, \bibinfo{person}{Michael~K Reiter}, \bibinfo{person}{Guy~Golan
  Gueta}, {and} \bibinfo{person}{Ittai Abraham}.}
  \bibinfo{year}{2019}\natexlab{}.
\newblock \showarticletitle{HotStuff: BFT consensus with linearity and
  responsiveness}. In \bibinfo{booktitle}{{\em Proceedings of the 2019 ACM
  Symposium on Principles of Distributed Computing}}.
  \bibinfo{pages}{347--356}.
\newblock


\end{thebibliography}

\appendix

\section{Partially Synchronous Bullshark Illustration}
\label{app:PSB}

Figure~\ref{fig:PSBullshark} illustrates the partially synchronous Bullshark protocol for $n = 4$ and $f = 1$.
Each odd round in the DAG has a predefined leader vertex (highlighted in solid green) and the goal is to first decide which leaders to commit. Then, to totally order all the vertices in the DAG, a party goes one by one over all the committed leaders and deterministically orders their causal histories.

Each vertex in an even round can contribute one vote for the previous round leader. In particular, a vertex in round r votes for the leader of round $r-1$ if there is an edge between them. The commit rule is simple: a leader is committed if it has at least $f+1$ votes. In Figure~\ref{fig:PSBullshark}, L3 is committed with 3 votes, whereas L1 and L2 have less then $2=f+1$ votes and are not committed.

Due to the asynchronous nature of the network, the local views of the DAG might differ for different parties. That is, some vertices might be delivered and added to the local view of the DAG of some of the parties but not yet delivered by the others. Therefore, even though some validators have not committed L1, others might have.

To guarantee all parties commit the same leaders, Bullshark relies on quorum intersection: 

\begin{quote}
\textit{Since the commit rule requires $f+1$ votes and each vertex in the DAG has at least $n-f$ edges to vertices from the previous round, it is guaranteed that if some validator commits a leader L then all future leaders will have a path to at least one vertex that voted for L, and thus will have a path to L.}
\end{quote}

Therefore:
 \textit{\textbf{If there is no path to a leader L from a future leader, then no party committed L and it is safe to skip L.}}
 
 The logic to order leaders is the following: when a leader $i$ is committed, the party checks if there is a path between leader $i$ to leader $i-1$. 
 If this is the case, leader $i-1$ is ordered before leader $i$ and the logic is recursively restarted from $i-1$.
 
 Otherwise, leader $i-1$ is skipped and the party checks if there is a 
 path between $i$ to $i-2$.
 If there is a path, leader $i-2$ is ordered before $i$ and the logic is recursively restarted from $i-2$. 
 Otherwise, leader $i-2$ is skipped and the process continues in the same way.
 The process stops when it reaches a leader that was previously ordered.
 
 In Figure~\ref{fig:PSBullshark}, leaders L1 and L2 do not have enough votes to be committed and once the party commits L3 it has to decide whether to order L1 and L2. Since there is no path from L3 to L2, L2 can be skipped. However, since there is a path between L3 and L1, L1 is ordered before L3.
 Now, to totally order the vertices of the DAG, the party first orders the causal history of L1 (nothing to order in this example) by some deterministic rule and then orders the causal history of L3.

\begin{figure}
    \centering
    \includegraphics[width=0.48\textwidth]{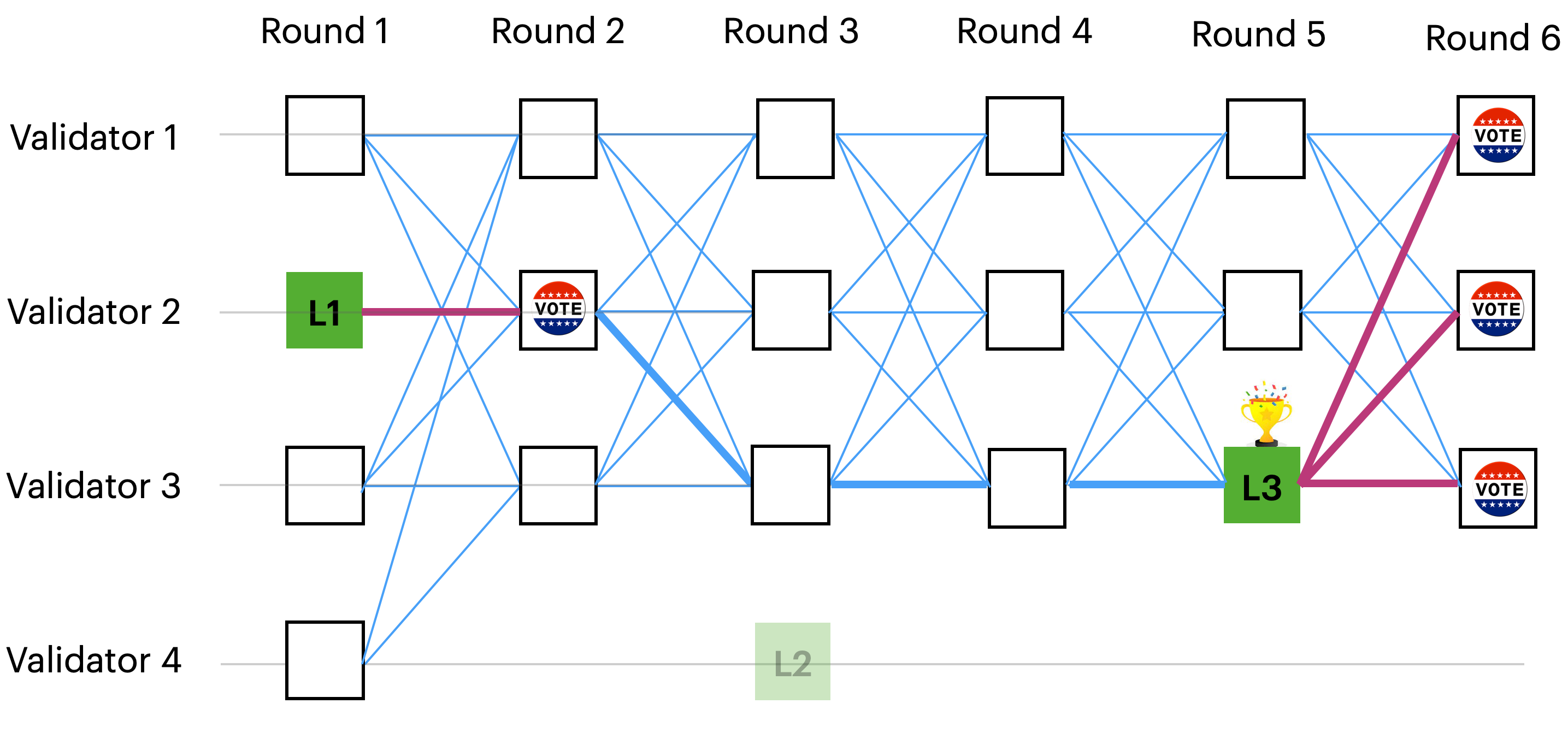}
    \caption{Illustration of the partially synchronous Bullshark.}
    
    \label{fig:PSBullshark}
\end{figure}
\section{Logical vs physical DAG}
\label{app:logicaldag}

As mentioned above, to provide deterministic fast path, introducing timeouts is unavoidable~\cite{fischer1985impossibility}.
After implementing and evaluating two alternatives, we decided to embed the timeouts into the DAG construction as described above.
Intuitively, it might look inefficient as the DAG does not advance in network speed, but as we shorty explain, it is the other way round. 

The other approach we consider is a virtual consensus DAG layer on top of the physical DAG.
In this case the physical level has no timeouts and is very similar to the DAG construction in DAG-Rider, which advances rounds in networks speed once
$2f+1$ nodes in the current round are delivered.
To encode timeouts, some of the nodes in the physical DAG have ``consensus’’
headers indicating that they belong to the virtual level.
The logic to advance consensus rounds is almost similar to the one
described in Alg~\ref{alg:DAG}.
That is, consensus nodes indicate in their consensus header to which virtual
nodes they refer as parents. This virtual nodes can be in arbitrary physical DAG rounds but they are at exactly one less ($r-1$) consensus round. As a result, now timeouts are only needed at the virtual level and do not interfere with the physical DAG advancement.
The only difference from Alg~\ref{alg:DAG} is that weak links are not required on the virtual
level since the weak links on the physical level already guarantee the
validity property.
All in all, the physical DAG advances in network speed and the virtual DAG  provides the functionality required by the \sys consensus protocol. 

We implemented and evaluated this logical DAG construction, however, the results were not encouraging (around 50\% latency increase without any significant throughput benefit). After investigation we attributed this to two main reasons:
\begin{itemize}
    \item Since \sys is built on top of Narwhal, it inherent the data dissemination decoupling from the DAG construction. That is, data is disseminated at network speed regardless of the DAG construction, which contains only metadata. Therefore, if the DAG advances rounds slower, then each vertex in the DAG simply contains more metadata and the throughput is not compromised.

    \item The logical split between virtual and physical DAG introduces a decoupling between delays/timeouts for the consensus messages and delays for the block creation. This results to a common pattern where a physical DAG blocks is created milliseconds before a vote is ready to be cast, but the vote missed the block and needs to wait for the next round to be cast. This introduces a small delay per vote but since we need 2f+1 votes to commit a consensus round the latency of the DAG moves from the median latency to the tail-latency of the 66th percentile. 
    
    \item Moreover, the smaller the DAG the less resources are required to manage it. For example, less memory to store it and less bandwidth to construct it.
\end{itemize}

\section{Proofs}
\label{app:proofs}

We provide proofs of correctness for both versions of \sys. 

\subsection{\sys With Fallback}
\label{sec:fallback}

\paragraph{Total order.}
Note that at any given time parties might have slightly different local DAGs.
This is because some vertices may be delivered at some parties but not yet at others.
However, since we use reliable broadcast for each vertex $v$, and wait for the entire causal history of $v$ to be added to the before we add $v$, we get the following important observation:

\begin{observation}
\label{obs:relaible}
For every two honest parties $p_i$ and $p_j$ we get:
\begin{itemize}
    \item For every round $r$, $\bigcup_{r > 0} DAG_i[r]$ is eventually equal to\\ $\bigcup_{r > 0} DAG_j[r]$.
    \item For any given time $t$ and round $r$, if $v \in DAG_i[r]$ $\wedge$  $v' \in DAG_i[r]$ s.t.\ $v.source = v'.source$, then $v = v'$.
    Moreover, for every round $r'<r$, if $v'' \in DAG_i[r']$ and there is a path from $v$ to $v''$, then $v'' \in DAG_j[r']$ and there is a path between $v'$ to $v''$. 
\end{itemize}
\end{observation}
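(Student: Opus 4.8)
The plan is to derive all three assertions from the three reliable-broadcast guarantees (Agreement, Integrity, Validity) together with the invariant, enforced by the \emph{try\_add\_to\_dag} routine, that a vertex is inserted into the DAG only after its entire causal history is already present. None of the steps should need the consensus/ordering logic of later sections; this observation lives purely at the DAG-construction layer.

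For the eventual-equality claim I would argue by induction on the round number that every vertex added to $DAG_i$ is eventually added to $DAG_j$, so that $\bigcup_{r>0} DAG_i[r]$ and $\bigcup_{r>0} DAG_j[r]$ converge (equality following by symmetry). The base case is round $0$, whose genesis set is hardcoded identically at all parties. For the step, a vertex $v \in DAG_i[r]$ passed the legality test at $p_i$, so $p_i$ issued \emph{r\_deliver} for it; by Agreement every honest $p_j$ eventually does the same. All of $v$'s strong and weak edges reference vertices of rounds strictly below $r$, and those vertices lie in $DAG_i$ precisely because $v$ was admitted only after its causal history was complete; by the induction hypothesis each of them is eventually present in $DAG_j$ as well. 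Once both $v$ has been \emph{r\_delivered} and its references are all in $DAG_j$, an invocation of \emph{try\_add\_to\_dag}$(v)$ succeeds and places $v$ in $DAG_j[r]$.

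The step I expect to require the most care is ruling out that $v$ lingers forever in \emph{buffer}: the retry loop re-attempts a buffered vertex only when some vertex of round at least $v.round$ is itself successfully added, so the mere presence of $v$'s references does not by itself fire a retry. I would close this gap by observing that in the unbounded execution each honest party keeps advancing rounds, broadcasting and self-adding a vertex at ever-increasing rounds, so successful additions at rounds $\ge v.round$ occur infinitely often; any such addition occurring after the (finite) time at which $v$'s causal history has converged in $DAG_j$ triggers a retry that finally inserts $v$.

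Finally, non-equivocation is immediate from Integrity: any vertex with source $p_k$ in round $r$ must have been delivered through the reliable-broadcast instance $(p_k,r)$, and Integrity admits at most one such delivery, so two same-source same-round vertices coincide (and Agreement makes this the identical object across honest parties, giving $v=v'$). For the path statement, observe that a witnessing path $v = v_1, \dots, v_k = v''$ is determined solely by the edge sets stored inside $v_1, \dots, v_{k-1}$. Since reliable broadcast delivers identical messages, these intermediate vertices are the same objects in both DAGs and carry the same edges, and each lies in the causal history of $v$, hence (by the first part) is present in $DAG_j$ once $v$ is. The same sequence therefore witnesses a path in $DAG_j$, and because each vertex's round field is fixed, $v'' \in DAG_j[r']$.
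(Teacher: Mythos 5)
Your proof is correct in substance and, notably, it does more than the paper does: the paper never actually proves Observation~\ref{obs:relaible}, it asserts it in the single sentence preceding it, as a consequence of (i) every vertex being disseminated by reliable broadcast and (ii) the invariant that \emph{try\_add\_to\_dag} admits a vertex only after its entire causal history is already in the DAG. Your argument is the honest expansion of exactly that route --- Agreement plus Integrity for cross-party consistency and non-equivocation, and the causal-completeness invariant for path preservation --- and in addition you isolate the one step the paper silently assumes: that a vertex parked in \emph{buffer} is eventually retried. That subtlety (the retry loop in Algorithm~\ref{alg:DAG} only re-examines buffered vertices of round at most $r$ when a freshly delivered round-$r$ vertex is successfully added) is real and is nowhere discussed in the paper.

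Two repairs are needed in how you close that gap, both of the same flavor: a party's own vertices bypass the delivery handler. First, ``broadcasting and self-adding a vertex at ever-increasing rounds'' does not by itself trigger retries, because the \emph{try\_add\_to\_dag} call inside \emph{broadcast\_vertex} sits outside the \emph{r\_deliver} handler, which is the only place the retry loop lives. The argument goes through only if reliable broadcast delivers a sender's message back to itself: then $p_j$ eventually \emph{r\_deliver}s its own round-$R$ vertex, \emph{try\_add\_to\_dag} returns true trivially, and the retry loop sweeps every buffered vertex of round $\leq R$. You should state that self-delivery assumption explicitly. (Similarly, in your induction step, $v \in DAG_i[r]$ does not imply that $p_i$ issued \emph{r\_deliver} for $v$: when $v$ is $p_i$'s own vertex it was added directly, and there you need Validity of reliable broadcast rather than Agreement.) Second, your premise that each honest party advances rounds forever is not independent of what you are proving: advancing past a round requires $2f+1$ vertices \emph{added to} (not merely delivered at) that round of the local DAG, so round advancement and buffer draining must be established together by a joint induction on rounds. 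The paper has this same circularity --- its Observation~\ref{obs:DAGgrows} is likewise asserted from delivery counts alone --- so you inherited this gap rather than created it, but since you treat the buffer explicitly you should close it explicitly. A final nit: in the path-preservation step, membership of $v$'s ancestors in $DAG_j$ at the given time $t$ follows from the causal-completeness invariant of \emph{try\_add\_to\_dag}, not from the eventual-equality bullet you cite, which yields only eventual membership.
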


To totally order the vertices in the DAG, each party $p_i$ locally interprets $DAG_i$ (there is no extra communication on top of building the DAG).
To this end, $p_i$ divides its DAG into waves of 4 rounds each. 
Every wave has 3 leaders that can potentially be committed: 2 steady-state leaders and one fallback leader.
The steady-state leaders are two pre-defined vertices, one in the first round of the wave and the other in the third.
The fallback leader is a vertex in the first round of the wave that is selected by the randomness produced in the fourth round of the wave. 
To make sure a fallback leader and a steady state leader are not committed in the same wave, each party can only vote for either the fallback leader or the steady-state ones.
In the code, \emph{steadyVoters[w]} \emph{fallbackVoters[w]} contain all the parties that can vote for steady-state or fallback leaders in wave $w$, respectively.
We say that a party $p_i$ \emph{determines} $p_j$ \emph{vote type} to be a steady-state (fallback)  in wave $w$ if its $\emph{steadyVoters}[w]$ ($fallbackVoters[w]$) contains $p_j$. 
Moreover, as we show in the next claim, all parties agree on $p_j$'s vote type in wave $w$.
This, in particular, means that Byzantine parties cannot equivocate or hide their vote (a nice property that we get from using reliable broadcast as a building block). 
\begin{claim}
\label{claim:votetype}

For every party $p_i$ and round $r$, each party $p_j$ determines at most one vote type for $p_i$ in wave $w$.
Moreover if $p_j$ and $p_k$ determine vote type $T$ and $T'$ for $p_i$ in wave $w$, respectively, then $T = T'$.  

\end{claim}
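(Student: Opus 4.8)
The plan is to establish the two assertions separately: uniqueness at a single observer follows from non-equivocation, while cross-party agreement I would prove by induction on the wave number $w$. For uniqueness, note that $p_j$ alters $\textit{steadyVoters}[w]$ or $\textit{fallbackVoters}[w]$ only inside \emph{determine\_party\_vote\_type}$(p_i,\cdot,w)$, which runs exactly when $p_j$ adds to its DAG the vertex $v$ with $v.\textit{source}=p_i$ and $v.\textit{round}=4w-3$ (the $v.\textit{round}\bmod 4 = 1$ branch of \emph{try\_ordering}). By the \textbf{Integrity} of reliable broadcast $p_j$ delivers at most one such vertex, so the procedure runs at most once and its \textbf{if/else} inserts $p_i$ into exactly one of the two sets; since the procedure never runs again for $(p_i,w)$ and the sets only grow, the choice is unique and stable.

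For agreement I would induct on $w$. The base case $w=1$ is immediate because $\textit{steadyVoters}[1]=\Pi$ and $\textit{fallbackVoters}[1]=\{\}$ are hard-coded identically everywhere. For $w\ge 2$, suppose $p_j$ and $p_k$ each determine a type for $p_i$. First, both evaluate the decision on the \emph{same} vertex $v$: by \textbf{Agreement} and \textbf{Integrity} of reliable broadcast the round-$(4w-3)$ vertex delivered from $p_i$ is unique and identical at both parties, hence so is $\textit{votes}=v.\textit{strongEdges}$. Second, the two candidate leaders coincide: $v_s=\emph{get\_second\_steady\_vertex\_leader}(w-1)$ refers to a predefined party, and $v_f=\emph{get\_fallback\_vertex\_leader}(w-1)$ uses \emph{choose\_leader}$(w-1)$, which returns the same party by the coin's \textbf{Agreement}, so \emph{get\_vertex} returns the same (non-$\bot$) vertex whenever it matters. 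Third, every \emph{strong\_path}$(v',\cdot)$ query agrees across the two parties: all vertices it touches lie in the causal history of $v$, which by Observation~\ref{obs:relaible} is identical once $v$ is present, and \emph{try\_add\_to\_dag} only fires \emph{try\_ordering}$(v)$ after that entire causal history is in the DAG. Thus the decision reduces to the wave-$(w-1)$ vote types of the sources of the counted votes, which agree by the induction hypothesis whenever both parties have fixed them.

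The hard part — and the step where I expect the real difficulty — is the qualifier \emph{whenever both parties have fixed them}. Since \emph{determine\_party\_vote\_type} reads the global sets $\textit{steadyVoters}[w-1]$ and $\textit{fallbackVoters}[w-1]$ at call time rather than a quantity pinned down by $v$'s causal history, I must rule out that $p_j$ tallies the votes against a set of already-determined wave-$(w-1)$ types different from $p_k$'s (for instance because a source's round-$(4w-7)$ vertex has reached one DAG but not the other). The plan is to argue that a vote $v'\in\textit{votes}$ can change either tally only after $v'.\textit{source}$'s wave-$(w-1)$ type has been set, that by the induction hypothesis this set value is the same at every honest party, and that — using the stability from the first assertion together with the eventual convergence of honest DAGs guaranteed by reliable-broadcast \textbf{Validity} and \textbf{Agreement} — the boolean \emph{try\_steady\_commit}$\,\vee\,$\emph{try\_fallback\_commit} that $p_j$ and $p_k$ compute must coincide, giving $T=T'$. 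Making this scheduling-independence precise, i.e.\ showing the outcome is genuinely a function of $v$'s shared causal history and the shared wave-$(w-1)$ types rather than of delivery order, is where the induction has to be set up most carefully.
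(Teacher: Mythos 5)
Your proof of the first assertion is correct and matches the paper's: reliable-broadcast Integrity (together with the legality check $v.source=p \wedge v.round=r$) ensures that determine\_party\_vote\_type runs at most once per pair $(p_i,w)$ at each party, and its if/else places $p_i$ in exactly one of the two sets. Your treatment of the second assertion also follows the paper's route --- Observation~\ref{obs:relaible} plus the fact that try\_ordering$(v)$ fires only after $v$'s entire causal history has been added --- merely reorganized as an explicit induction on $w$, and your three agreement points (same vertex $v$, same leader vertices, same strong\_path answers) are all sound.

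However, the qualifier you flag at the end is a genuine gap, and the plan you sketch cannot close it. The tallies in try\_steady\_commit and try\_fallback\_commit consult \emph{steadyVoters}$[w-1]$ and \emph{fallbackVoters}$[w-1]$ \emph{at call time}, and a vote source's membership there is fixed only when that source's round-$(4w-7)$ vertex $u$ is added to the local DAG. If the source is honest, $u$ is guaranteed to lie in $v$'s causal history (an honest party's later vertices always have a strong or weak path to its own earlier ones), so both $p_j$ and $p_k$ have typed it before processing $v$ and your induction hypothesis applies. But a Byzantine source's round-$(4w-4)$ vertex $v'$ need not reference any vertex of its own, so $u$ can lie entirely outside $v$'s causal history while still being deliverable; then $p_j$ may have delivered and typed $u$ before processing $v$ while $p_k$ has not, and this single vote $v'$ can push $p_j$'s steady tally to $2f+1$ while $p_k$ counts only $2f$, making $p_j$ type $p_i$ steady-state and $p_k$ type $p_i$ fallback. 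Appealing to eventual convergence of honest DAGs, as your plan does, cannot repair this: the type is decided once, at a specific moment, and is never revisited. The missing ingredient is not a more carefully set-up induction but a restriction of the tally to votes whose source's first-round vertex lies in the shared causal history (automatic for honest voters, and needing to be imposed on Byzantine ones). You should also know that the paper's own proof elides exactly this point --- it asserts that the two commit procedures are ``called with the same parameters and thus return the same result,'' ignoring that they read global mutable state --- so you located the weakest step of the argument precisely; you simply have not discharged it, and in the form you propose it cannot be discharged.
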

\begin{proof}
The first part of the claim follows from the code of function $\emph{determine\_party\_vote\_type}$.
This function is called by a party $p_j$ whenever it adds a new vertex $v$ to $DAG_j[r]$ such that $r$ is the first round of a wave, and the source of the vertex (a party $p_i$) is either added to \emph{steadyVoters[w]} or \emph{fallbackVoters[w]}.
The second part of the claim follows from Observation~\ref{obs:relaible} and the fact (by the code of try\_add\_to\_DAG) that $v$ is added to the DAG only after all its causal history is added.
This guarantees that for every wave $w$ and party $p_i$ try\_steady\_commit and try\_fallback\_commit are called with the same parameters and thus return the same result.
This in turn guarantees that all parities that determine $p_i$'s vote type in wave $w$ see the same type. 
\end{proof}

\vspace{2mm}
There are two possible ways to commit a leader $v$ in \sys.
The first is to \emph{directly} commit it when either try\_steady\_commit or try\_fallback\_commit, called with $v$, return true.
The second option is to \emph{indirectly} commit it when it is added to \emph{leaderStack} in Line~\ref{line:stacksteady} or~\ref{line:stackfallback}.
In both cases, to commit a leader in wave $w$, we count the number of vertices in some round (depending on the leader type and whether we directly or indirectly commit it) in $w$ that have a strong path to the leader and their vote corresponds to the leader's type. 
We first show that steady state and fallback leaders cannot be directly committed in the same wave.

\begin{claim}
\label{claim:samewave}

If a party $p_i$ directly commits a steady-state leader in wave $w$, then no party commits (directly or indirectly) a fallback leader in wave $w$, and vice versa.

\end{claim}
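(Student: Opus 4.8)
The plan is to turn this into a counting argument on voting types, combining quorum intersection with the consistency of vote-type assignments guaranteed by \clref{claim:votetype}. The high-level idea is that committing a leader of a given type in wave $w$ -- whether directly or indirectly -- forces at least $f+1$ \emph{distinct} parties to be voters of that type in wave $w$, and a direct commit forces even $2f+1$ of them. Since by \clref{claim:votetype} every party has a single, globally-agreed vote type in wave $w$, the sets \emph{steadyVoters}$[w]$ and \emph{fallbackVoters}$[w]$ are disjoint. If a direct steady-state commit coexisted with any fallback commit, I could exhibit $(2f+1)+(f+1) = 3f+2 > n$ distinct parties inside $\Pi$, which is a contradiction.

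First I would make precise the ``distinct parties'' bookkeeping. By the Integrity property of reliable broadcast (second bullet of Observation~\ref{obs:relaible}), for any fixed round $r$ the local view $DAG_i[r]$ contains at most one vertex per source, so any set of $k$ qualifying vote-vertices corresponds to $k$ distinct sources. I would then record the commit thresholds: a direct steady-state (resp.\ fallback) commit requires $\ge 2f+1$ vote-vertices whose sources lie in \emph{steadyVoters}$[w]$ (resp.\ \emph{fallbackVoters}$[w]$), via \emph{try\_steady\_commit} / \emph{try\_fallback\_commit}; an indirect commit of either type requires $\ge f+1$ such vote-vertices, via the guards $|\emph{ssVotes}|\ge f+1$ and $|\emph{fbVotes}|\ge f+1$ in \emph{commit\_leader} (Lines~\ref{line:indirectfirst} and~\ref{line:indirectfallback}). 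This holds uniformly whether the committed steady-state leader is the first one (votes drawn from the second round of the wave) or the second one (votes drawn from the fourth round), so the argument does not depend on which steady-state leader is committed.

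Next I would run the contradiction. Assume $p_i$ directly commits a steady-state leader of wave $w$; then $p_i$ witnesses $\ge 2f+1$ distinct sources in \emph{steadyVoters}$[w]$, so $|\emph{steadyVoters}[w]| \ge 2f+1$. Suppose toward a contradiction that some (possibly different) party commits the fallback leader of wave $w$, directly or indirectly; by the thresholds above this exhibits $\ge f+1$ distinct sources in \emph{fallbackVoters}$[w]$, so $|\emph{fallbackVoters}[w]| \ge f+1$. By \clref{claim:votetype} these assignments are consistent across all honest parties, so the two witnessed sets are genuinely disjoint subsets of $\Pi$, and their union has at least $(2f+1)+(f+1)=3f+2$ members, exceeding $n = 3f+1$ -- a contradiction. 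The ``vice versa'' direction is symmetric: a direct fallback commit forces $|\emph{fallbackVoters}[w]| \ge 2f+1$, leaving at most $f$ steady voters, which is too few to meet the $f+1$ threshold needed for any steady-state commit.

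The main obstacle is not the counting itself but justifying that the two voter sets are truly disjoint and globally well-defined: the $2f+1$ steady sources are counted in $p_i$'s local view, whereas the fallback commit may be performed by another party from its own view at a different time. The load-bearing ingredient is therefore \clref{claim:votetype}, which (via Observation~\ref{obs:relaible} and the fact that a vertex is added only after its entire causal history) guarantees that every honest party assigns the same unique vote type to each party in wave $w$; this is what lets me treat \emph{steadyVoters}$[w]$ and \emph{fallbackVoters}$[w]$ as fixed disjoint global sets rather than party-dependent ones. I would also flag explicitly the use of $n = 3f+1$ (equivalently, that the quorum constant $2f+1$ equals $n-f$), since it is precisely this that makes $3f+2$ exceed $n$ and closes the argument.
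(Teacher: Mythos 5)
Your proposal is correct and follows essentially the same route as the paper's proof: both rest on the thresholds ($2f+1$ same-type voters for a direct commit, $f+1$ for an indirect one), the global consistency of vote types from \clref{claim:votetype}, and a counting/quorum-intersection argument over the $n=3f+1$ parties. The paper phrases it as ``at most $f$ parties can remain as fallback voters, below the $f+1$ threshold,'' while you phrase it as the contradiction $(2f+1)+(f+1)=3f+2>n$; these are the same argument, and your extra bookkeeping (distinct sources via reliable-broadcast integrity, uniformity across the two steady-state leaders) only makes explicit what the paper leaves implicit.
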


\begin{proof}
Consider a steady state leader vertex $v$ committed by a party $p_i$ in round $r$ in wave $w$.
By the code, to directly commit a leader vertex a party need to determine the vote type of at least $2f+1$ parties in the wave to be the same as the leaders.
Similarly, to indirectly commit a vertex leader, a party needs to determine the vote type of at least $f+1$ parties in the wave to be the same as the leaders.
Since $p_i$ directly commits state leader vertex $v$ in wave $w$, it determines $2f+1$ parties as steady state voters in wave $w$.
Since there are $3f+1$ parties in total, by Claim~\ref{claim:votetype}, no other party determines more than $f$ parties as fallback voters in wave $w$.
Therefore, no other party commit (directly or indirectly) a fallback leader in wave $w$.
From symmetry, the same argument works in the other direction.
\end{proof}

\vspace{2mm}
For the proof of the next lemmas we say that a party $p_i$ \emph{consecutively directly commit} leader vertices $v_i$ and $v'_i$ if $p_i$ directly commits them in rounds $r_i$ and $r'_i > r_i$, respectively, and does not directly commit any leader vertex between $r_i$ and $r'_i$. 
In the next claims we are going to show that honest parties commit the same leaders and in the same order:

\begin{claim}
\label{claim:commitmin}

Let $v_i$ and $v'_i$ be two leader vertices consecutively directly committed by a party $p_i$ in rounds $r_i$ and $r'_i > r_i$, respectively.
Let $v_j$ and $v'_j$ be two leader vertices consecutively directly committed by a party $p_j$ in rounds $r_j$ and $r'_j > r_j$, respectively.
If $r_i \leq r_j \leq r'_i$, then both $p_i$ and $p_j$ (directly or indirectly) commit the same leader in round $min(r'_i,r'_j)$.

\end{claim}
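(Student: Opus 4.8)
The plan is to reduce the statement to two self-contained facts and then assemble them by a short case analysis driven by $r^* := \min(r'_i,r'_j)$. The two facts are: \textbf{(F1) Agreement of simultaneous direct commits:} if two honest parties both \emph{directly} commit a leader in the same round $r$, they commit the \emph{same} vertex; and \textbf{(F2) Direct commit forces a matching indirect commit:} if an honest party $A$ directly commits a leader $v^*$ in round $\rho_0$, and an honest party $B$ runs $commit\_leader$ from a directly committed vertex of round $>\rho_0$ whose back-traversal reaches $\rho_0$ (i.e.\ $B$'s \emph{committedRound} is $<\rho_0$ at that moment), then $B$ indirectly commits $v^*$ in round $\rho_0$.

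Granting (F1)--(F2), I would finish as follows. At round $r^*$ at least one of the two parties directly commits, since $r^*\in\{r'_i,r'_j\}$; call that party $A$ and its leader $v^*$. For the other party $B$, the hypothesis $r_i\le r_j\le r'_i$, together with $r_j<r'_j$ and the fact that $v_i,v'_i$ and $v_j,v'_j$ are \emph{consecutive} direct commits, guarantees that $B$ either also directly commits in round $r^*$, in which case (F1) yields the same vertex, or else $B$'s preceding direct commit lies strictly below $r^*$, so that $B$'s \emph{committedRound} is $<r^*$ when it next commits above $r^*$; then (F2) applies and $B$ indirectly commits $v^*$. Checking the boundary subcase $r_j=r'_i$ (both commit directly at $r^*$) and the symmetric split on whether $A=p_i$ or $A=p_j$ is routine bookkeeping; the relevant round indices are all odd and hence congruent $\bmod\,2$, so the back-traversal (which steps by $2$) indeed lands on $r^*$.

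Fact (F1) follows from Claims~\ref{claim:votetype} and~\ref{claim:samewave}: at an odd round $r$ the only candidate leaders are the (unique, predefined) steady-state leader and, when $r\equiv 1\pmod 4$, the (coin-elected, hence agreed) fallback leader of that wave. A direct commit needs $2f+1$ agreeing voters; since $n=3f+1$ and all honest parties agree on each party's vote type, the two types cannot both reach $2f+1$, so both parties commit the same type, and within a type the leader vertex is uniquely determined.

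Fact (F2) is the crux and the main obstacle. Its engine is a quorum-reachability sub-claim: \emph{if $2f+1$ vertices in a round $\bar r$ each satisfy a fixed predicate $X$, then every vertex in a round $>\bar r$ has strong paths to at least $f+1$ of them.} I would prove this by induction on the round, using that every vertex carries $\ge 2f+1$ strong edges into the previous round and that $|DAG[\bar r]|\le n=3f+1$, so any $2f+1$ strong edges meet the $X$-set in at least $(2f+1)+(2f+1)-(3f+1)=f+1$ vertices (base case), and strong paths compose (inductive step). Applying this with $X=$``is a voter of $v^*$'s type with a strong path to $v^*$'', the $2f+1$ votes that certified $A$'s direct commit (located in round $\rho_0+1$ for a steady-state leader, $\rho_0+3$ for a fallback leader) are reached by at least $f+1$ strong paths from whatever leader $B$'s traversal is currently anchored at when it examines $\rho_0$; those reached vertices lie in $B$'s potential-vote set, are of the correct type by Claim~\ref{claim:votetype}, and keep their strong paths to $v^*$ by Observation~\ref{obs:relaible}, so $B$ counts at least $f+1$ matching votes, while Claim~\ref{claim:samewave} drives the competing leader's count below $f+1$ --- exactly the indirect-commit condition. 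The delicate points I would verify carefully are that $B$'s anchor leader sits in a round strictly above the vote round --- which, for a fallback $v^*$, requires ruling out that $B$ committed the second steady-state leader two rounds above $\rho_0$ (excluded by the ``vice versa'' direction of Claim~\ref{claim:samewave}, so the degenerate $v.round=r+2$ branch that zeroes out \emph{fbVotes} is never triggered) --- and that the parity of leader rounds makes $\rho_0$ fall exactly on the traversal grid.
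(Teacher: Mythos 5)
Your proposal is correct and follows essentially the same route as the paper's proof: the paper likewise dispatches the equal-round case via Claim~\ref{claim:samewave}, and for $r'_i < r'_j$ shows that $p_j$'s back-traversal, anchored at the first leader it commits in a round above $r'_i$, must indirectly commit $v'_i$ --- using quorum intersection for the $\geq f+1$ supporting votes, vote-type counting for the $< f+1$ competing votes, and the same fallback-anchor subtlety (no leader at round $r'_i+2$) that you flag. One small imprecision: the bound of at most $f$ votes for the competing leader follows from Claim~\ref{claim:votetype} together with the $2f+1$ same-type voters certified by the direct commit (as the paper argues), not from the statement of Claim~\ref{claim:samewave} itself, although that counting is exactly what its proof contains.
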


\begin{proof}
Claim~\ref{claim:samewave} implies that that there is at most one committed leader in each round.
Thus, if $r'_i = r'_j$ we are done.
Otherwise, assume without lost of generality that $r'_i < r'_j$.
Thus, if $r_j = r'_i$ we are done.
Otherwise, we need to show that $p_j$ indirectly commits $v'_i$ in $r'_i$.

By the code of commit\_leader, after $p_j$ directly commits $v_j'$ in round $r'_j$ it tries to indirectly commit leaders in round numbers smaller than $r'$ until it reaches round $r_j < r'_i$.
Let $r'_i < r < r'_j$, be the smallest number between $r'_i$ and $r'_j$ in which $p_j$ (directly or indirectly) commits a leader $v$.
Consider two cases:
\begin{itemize}
    \item Vertex $v'_i$ is a steady-state leader.
    Note that $r > r'_i + 1$ since only odd rounds have potential leaders.
    Since $p_i$ directly commits $v'_i$ in round $r'_i$, there is a set $C$ of $2f+1$ vertices in $DAG_i[r'_i + 1]$ with strong paths to $v'_i$ and with $v'_i$'s types. By observation~\ref{obs:relaible}, Claim~\ref{claim:votetype}, and quorum intersection, there are at least $f+1$ vertices in $DAG_j[r'_i + 1]$ with $v'_i$'s vote type and strong paths from the $v$ to them.
    
    \item Vertex $v'_i$ is a fallback leader. 
    By Claim~\ref{claim:samewave}, no leader is committed in round $r+2$.
    Thus, $r > r'_i + 3$.
    Since $p_i$ directly commits $v'_i$ in round $r'_i$ and $v_i$, there is a set $C$ of $2f+1$ vertices in $DAG_i[r'_1 + 3]$ with strong paths to $v'_i$ and with $v'_i$'s types. 
    By observation~\ref{obs:relaible}, Claim~\ref{claim:votetype}, and quorum intersection, there are at least $f+1$ vertices in $DAG_j[r'_i + 3]$ with $v'_i$'s vote type and strong paths from the $v$ to them.
\end{itemize}
In both cases $p_j$ counts (in \emph{ssVotes} or \emph{fbVotes}) at least $f+1$ votes for the leader.
In addition, by observation~\ref{obs:relaible} and Claim~\ref{claim:votetype}, since in both cases there are at least $2f+1$ vertices with the $v$'s type, there are at most $f$ vertices with the opposite type. 
Thus, $p_j$ counts at most $f$ votes for the other leader.
Therefore, by Lines \ref{line:indirectfirst}-\ref{line:indirectlast} in commit\_leader, $p_j$ indirectly commits $v'_i$.


\end{proof}

\begin{claim}
\label{claim:commitinterval}

Let $v_i$ and $v'_i$ be two leader vertices consecutively directly committed by a party $p_i$ in rounds $r_i$ and $r'_i > r_i$, respectively.
Let $v_j$ and $v'_j$ be two leader vertices consecutively directly committed by a party $p_j$ in rounds $r_j$ and $r'_j > r_j$, respectively.
Then $p_i$ and $p_j$ commits the same leaders between rounds $max(r_i,r_j)$ and $min(r'_i,r'_j)$, and in the same order.

\end{claim}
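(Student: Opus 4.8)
The plan is to reduce the whole statement to the single common commit point guaranteed by \clref{claim:commitmin} and then propagate agreement downward through the deterministic back-traversal of \texttt{commit\_leader}. First I would dispose of the degenerate case. Since the statement is symmetric in $(i,j)$, assume without loss of generality that $r_i \le r_j$, so $\max(r_i,r_j)=r_j$. If $r_j > r'_i$, then $\min(r'_i,r'_j)\le r'_i < r_j=\max(r_i,r_j)$, the round interval is empty, and the claim holds vacuously. Hence I may assume $r_i \le r_j \le r'_i$, which is exactly the hypothesis of \clref{claim:commitmin}. Applying that claim, both $p_i$ and $p_j$ (directly or indirectly) commit one and the same leader vertex $v^\ast$ in round $m := \min(r'_i,r'_j)$; this $v^\ast$ is the top anchor of the overlap.

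The heart of the argument is then a downward induction on the rounds $r$ with $\max(r_i,r_j)\le r \le m$: I would show that whenever both parties reach round $r$ in the \texttt{while} loop of \texttt{commit\_leader} holding the \emph{same} current leader vertex $v$, they take identical decisions. The decision at round $r$ (Lines~\ref{line:indirectfirst}--\ref{line:indirectlast}) depends only on the sizes of \emph{ssVotes} and \emph{fbVotes}, and these are in turn determined by $v$, by which vertices of $DAG[r+1]$ (resp.\ $DAG[r+3]$) have a strong path from $v$, and by the vote types of their sources. Every vertex entering these counts lies in the causal history of $v$, so by Observation~\ref{obs:relaible} the two parties see exactly the same such vertices with the same strong paths, and by \clref{claim:votetype} they assign the same vote types; therefore $|\emph{ssVotes}|$ and $|\emph{fbVotes}|$ agree, both parties push the same vertex (or neither) and update the current leader identically, preserving the induction hypothesis one round lower. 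I would also check the code's case split: for $r \bmod 4 = 3$ only the second steady-state leader is a candidate, while for $r\bmod 4 = 1$ there are two, and the \emph{fbVotes} set is forced empty exactly when the step is entered with $v.round = r+2$ (a steady-state commit two rounds higher, consistent with \clref{claim:samewave}); since both parties share $v$ by the induction hypothesis, they detect this condition identically.

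Seeding the induction at round $m$ with $v^\ast$ and running it until the traversals reach their committedRound bounds ($r_i$ for $p_i$, $r_j$ for $p_j$, both $\le\max(r_i,r_j)$) yields agreement on the committed leader throughout $[\max(r_i,r_j),m]$; because vertices are pushed onto \emph{leaderStack} in traversal order, the common leaders are committed in the same order as well. The one endpoint needing separate care is $r_j=\max(r_i,r_j)$: $p_i$, still traversing below $r_j$, must indirectly commit the leader $v_j$ that $p_j$ committed \emph{directly} there, which is precisely the $f{+}1$-versus-$2f{+}1$ vote-counting argument already carried out in \clref{claim:commitmin}, applied with $v_j$ in the role of the directly committed leader. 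I expect step two to be the main obstacle: matching \emph{ssVotes}/\emph{fbVotes} exactly requires confirming that all counted vertices fall inside the shared causal history so that Observation~\ref{obs:relaible} and \clref{claim:votetype} genuinely apply, and that the steady/fallback case split of the code is respected at every round of the traversal.
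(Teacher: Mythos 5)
Your proof is correct and takes essentially the same route as the paper's: dispose of the non-overlapping case, anchor agreement at round $\min(r'_i,r'_j)$ via \clref{claim:commitmin}, then propagate agreement downward through the deterministic traversal of commit\_leader using Observation~\ref{obs:relaible} and \clref{claim:votetype}, with the order following from the stack/traversal order. You are in fact somewhat more explicit than the paper, which compresses your round-by-round induction and your separate $f{+}1$-versus-$2f{+}1$ endpoint argument at round $\max(r_i,r_j)$ into the assertion that both parties commit $v_j$ in round $r_j$ and the statement that the rest ``follows from the deterministic code of the function commit\_leader.''
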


\begin{proof}
If $r'_i < r_j$ or $r'_j < r_i$, then we are trivially done because there are no rounds between $max(r_i,r_j)$ and $min(r'_i,r'_j)$.
Otherwise, assume without lost of generality that $r_i \leq r_j \leq r'_i$.
By Claim~\ref{claim:commitmin}, both $p_i$ and $p_j$ (directly or indirectly) commit the same leader in round $min(r'_i,r'_j)$.
Assume without lost of generality that $min(r'_i,r'_j) = r'_i$.
Thus, by Claim~\ref{claim:samewave}, both $p_i$ and $p_j$ commit $v'_i$ in round $r'_i$ and $v_j$ in round $r_j$.
By the code of commit\_leader, after (directly or indirectly) committing a leader, parties try to indirectly commit leaders in smaller round numbers until they reach a round in which they previously directly committed a leader.
Therefore both $p_i$ and $p_j$ will try to indirectly commit all leaders going down from $r'_i = min(r'_i,r'_j)$ to $r_j = max(r_i,r_j)$.
Since $v'_i$ appears in both $DAG_i$ and $DAG_j$, by Observation~\ref{obs:relaible}, all vertices in $DAG_i$ such that there is a path from $v'_i$ to them appear also in $DAG_j$.
The claim follows from the deterministic code of the function commit\_leader.

\end{proof}

\vspace{2mm}
By inductively applying Claim~\ref{claim:commitinterval} for every pair of honest parties  we get the following:

\begin{corollary}
\label{col:order}

Honest parties commit the same leaders and in the same order.

\end{corollary}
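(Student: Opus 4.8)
The plan is to derive the corollary directly from Claim~\ref{claim:commitinterval} by an induction that stitches together the overlapping ``commit intervals'' of any two honest parties. First I would reduce the statement to a purely round-indexed one: since every party orders committed leaders by increasing round number (and then orders their causal histories by a fixed deterministic rule), it suffices to prove that for every round $r$ and every pair of honest parties $p_i, p_j$, party $p_i$ commits a leader (directly or indirectly) in round $r$ if and only if $p_j$ does, and it is the same leader vertex. By Claim~\ref{claim:samewave} at most one leader is ever committed per round, so this reformulation is well posed.

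Next I would fix $p_i$ and $p_j$ and recall the structure of the commit procedure. Each party's commit history is a disjoint union of ``segments'', where each segment is the set of leaders committed during one invocation of commit\_leader triggered by a \emph{direct} commit. Concretely, if $p_i$ directly commits at the increasing rounds $r^{(1)}_i < r^{(2)}_i < \cdots$, then when it directly commits at $r^{(k)}_i$ the backward traversal of commit\_leader indirectly commits exactly the leaders strictly between the previous value of \emph{committedRound} and $r^{(k)}_i$. Thus every committed round of $p_i$ lies in some segment delimited by two consecutive direct commits, with round $0$ (whose genesis vertices are identical for all parties) serving as the common left endpoint of the first segment.

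The core step is then to bracket an arbitrary committed round $r$ simultaneously for both parties. Given $r$, I would choose consecutive direct commits $(r_i, r'_i)$ of $p_i$ and $(r_j, r'_j)$ of $p_j$ with $r_i, r_j \le r \le r'_i, r'_j$, using round $0$ as the fallback left endpoint when $r$ precedes a party's first direct commit. Claim~\ref{claim:commitinterval} then asserts that $p_i$ and $p_j$ commit exactly the same leaders in the same order throughout $[\max(r_i,r_j),\min(r'_i,r'_j)]$, an interval containing $r$; in particular they agree at round $r$. Letting $r$ range over all committed rounds, the agreements on the individual intervals patch together, since consecutive intervals share a direct-commit endpoint on which both parties already agree, into agreement on the entire commit sequence. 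Applying this to every pair of honest parties yields the corollary.

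The step I expect to be the main obstacle is the boundary handling of the very first segment: Claim~\ref{claim:commitinterval} is phrased in terms of two \emph{consecutive direct commits}, so it does not literally cover leaders committed before a party's first direct commit. The clean fix is to treat the hardcoded genesis set at round $0$ as a degenerate direct commit shared by all honest parties, so that the initial \emph{committedRound} value $0$ is a common, agreed left endpoint, after which the inductive patching proceeds uniformly. Making this identification rigorous, and checking that the backward traversal from the first genuine direct commit down to round $0$ indeed satisfies the hypotheses of Claim~\ref{claim:commitinterval}, is the one place that needs real care rather than routine bookkeeping.
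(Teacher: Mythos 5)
Your proposal is correct and takes essentially the same route as the paper: the paper's entire proof is the one-line remark that the corollary follows ``by inductively applying Claim~\ref{claim:commitinterval} for every pair of honest parties,'' which is precisely the bracketing-and-patching induction you spell out (reduce to per-round agreement via Claim~\ref{claim:samewave}, bracket each committed round by consecutive direct commits of both parties, apply Claim~\ref{claim:commitinterval} on the overlap, and chain the intervals). Your extra care about the first segment---treating the common hardcoded genesis round, i.e.\ the initial value of \emph{committedRound} at round $0$, as a shared degenerate left endpoint---is a boundary detail the paper glosses over entirely, and your handling of it is consistent with its argument.
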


For the next lemma we say that the \emph{causal history} of a vertex leader $v$ in the DAG is the set of all vertices such that there is a path from $v$ to them.

\begin{lemma}
\label{lem:order}

Algorithms~\ref{alg:dataStructures}, \ref{alg:DAG}, \ref{alg:modes}, and~\ref{alg:commit} satisfy Total order.

\end{lemma}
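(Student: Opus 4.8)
The plan is to reduce Total order to a single statement: every honest party emits its $\textit{a\_deliver}$ outputs as a prefix of one common global sequence of vertices. Since Total order only constrains the relative order of any two delivered blocks, it suffices to exhibit a fixed sequence $\pi$ such that each honest party delivers exactly the vertices of $\pi$, in the order of $\pi$. The backbone of $\pi$ is supplied by Corollary~\ref{col:order}: fix the common sequence of committed leaders $v_1, v_2, v_3, \ldots$ on which all honest parties agree. The remaining work is to show that the causal histories these leaders drag into the order are also delivered identically by everyone.

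First I would pin down the delivery structure of $\textit{commit\_leader}$ and $\textit{order\_vertices}$. Because $\textit{commit\_leader}$ pushes the freshly committed leader first and then the older indirectly-committed leaders on top of it, and because $\textit{committedRound}$ prevents revisiting rounds, the stack is popped in increasing commit order; hence, across all invocations, an honest party processes exactly $v_1, v_2, \ldots$ in order. For each leader $v_\ell$, $\textit{order\_vertices}$ delivers the set $\{ v' : \textit{path}(v_\ell, v') \} \setminus \textit{deliveredVertices}$ in a fixed deterministic order. I would then define, for each vertex $v'$, the index $\ell(v') = \min\{ k : \textit{path}(v_k, v') \}$, namely the earliest leader whose causal history captures $v'$, and claim that $v'$ is delivered precisely in the batch of $v_{\ell(v')}$.

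The core is an induction on $\ell$ showing that every honest party delivers the same batch, in the same order, when processing $v_\ell$. The two ingredients are Observation~\ref{obs:relaible} and Corollary~\ref{col:order}. Observation~\ref{obs:relaible} gives that, once $v_\ell$ and its causal history are present, the predicate $\textit{path}(v_\ell, v')$ is party-independent, and (via $\textit{try\_add\_to\_dag}$ waiting for the full causal history before adding a vertex) that this entire history is materialized at the moment $v_\ell$ is committed. By the induction hypothesis the sets already delivered for $v_1, \ldots, v_{\ell-1}$ coincide across parties, so the newly delivered batch $\{ v' : \textit{path}(v_\ell, v') \} \setminus \bigcup_{k<\ell}\{ v' : \textit{path}(v_k, v') \}$ is the same set for everyone; the fixed deterministic tie-break then yields the same intra-batch order. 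Concatenating the batches produces the common sequence $\pi$, and Total order follows because any two vertices occupy the same relative position in $\pi$ for all honest parties.

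I expect the main obstacle to be the $\textit{deliveredVertices}$ bookkeeping, that is, arguing that the ``first capturing leader'' $\ell(v')$ is genuinely well-defined and identical at every honest party rather than an artifact of local timing. This is exactly where the interplay of the two cited results is essential: the identical commit order (Corollary~\ref{col:order}) fixes which leader comes first, while the consistency of $\textit{path}$ over converged causal histories (Observation~\ref{obs:relaible}) guarantees that ``captures $v'$'' is not a party-local judgement. A secondary point to check is completeness, namely that every vertex in a committed leader's causal history is eventually delivered; this is immediate, since the causal history is fully present when the leader is committed, so no $v'$ with a committed $v_{\ell(v')}$ is ever stranded.
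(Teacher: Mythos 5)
Your proposal is correct and follows essentially the same route as the paper's proof: Corollary~\ref{col:order} for the common leader sequence, Observation~\ref{obs:relaible} (together with \textit{try\_add\_to\_dag} requiring the full causal history) for party-independence of each leader's causal history, and the deterministic rule in \textit{order\_vertices} for intra-batch order. The only difference is that you make explicit, via the index $\ell(v')$ and the induction over batches, the bookkeeping of \textit{deliveredVertices} that the paper's three-sentence proof leaves implicit.
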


\begin{proof}
By Corollary~\ref{col:order}, honest parties commit the same leaders and in the same order.
By the code of the order\_vertices procedure, parties iterate on the committed leaders according to their order and  \emph{a\_deliver} all vertices in their causal history by a pre-defined deterministic rule.
The lemma follows by Observation~\ref{obs:relaible} since all honest parties has the same casual history in their DAG for every committed leader.

\end{proof}

\paragraph{Agreement and Validity.}

\begin{lemma}
\label{lem:agreement}

Algorithms~\ref{alg:dataStructures}, \ref{alg:DAG}, \ref{alg:modes}, and~\ref{alg:commit} satisfy Agreement.

\end{lemma}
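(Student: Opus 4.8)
The plan is to reduce Agreement to the consistency of committed leaders already established in Corollary~\ref{col:order}, together with the DAG-convergence guarantees of Observation~\ref{obs:relaible}. Suppose an honest party $p_i$ outputs $\textit{a\_deliver}_i(m,r,p_k)$. Inspecting the $order\_vertices$ procedure, this output can only occur while $p_i$ is processing some committed leader vertex $v_L$ popped from the leader stack, and it must be that there is a vertex $v \in DAG_i$ with $(v.\textit{block}, v.\textit{round}, v.\textit{source}) = (m,r,p_k)$, that $path(v_L,v)$ holds, and that $v$ had not been delivered before. Thus every delivery corresponds to a vertex lying in the causal history of some leader that $p_i$ committed, and it suffices to show that every other honest party $p_j$ eventually commits that same $v_L$ and then delivers $v$.

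First I would show that $p_j$ eventually commits $v_L$. Since $p_i$ committed $v_L$ (directly or indirectly), the traversal in $commit\_leader$ guarantees that $p_i$ directly committed some leader $v'_L$ whose downward traversal reaches $v_L$, and a direct commit requires $2f+1$ supporting votes (vertices with the matching vote type and strong paths to the leader) present in $DAG_i$. By Observation~\ref{obs:relaible} these vertices, together with the witnessing strong paths and, by Claim~\ref{claim:votetype}, the same vote types, eventually appear in $DAG_j$; since the commit rule is a deterministic predicate on the local DAG, $p_j$ is forced to directly commit $v'_L$. By Corollary~\ref{col:order} the two parties commit the same leaders in the same order, so the traversal $p_j$ performs after committing $v'_L$ also (directly or indirectly) commits $v_L$.

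Next I would argue that $p_j$ delivers $v$. Because a vertex is inserted into the DAG only after its entire causal history is present (the condition in $try\_add\_to\_dag$), once $v_L \in DAG_j$ we also have $v \in DAG_j$, and by Observation~\ref{obs:relaible} the path witnessing $path(v_L,v)$ in $DAG_i$ is matched by a path in $DAG_j$. Hence when $p_j$ executes $order\_vertices$ on $v_L$ -- or on any earlier committed leader whose causal history already contains $v$ -- the vertex $v$ lies in \textit{verticesToDeliver} unless it was delivered earlier. In either case $p_j$ eventually outputs $\textit{a\_deliver}_j(v.\textit{block}, v.\textit{round}, v.\textit{source}) = \textit{a\_deliver}_j(m,r,p_k)$, which is exactly Agreement.

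The main obstacle is the middle step. Corollary~\ref{col:order} asserts only that the leaders both parties commit coincide and are ordered consistently; by itself it does not guarantee that $p_j$ ever \emph{reaches} the commit of $v_L$, since at a fixed moment $p_j$ may simply have committed fewer leaders than $p_i$. Closing this gap requires invoking Observation~\ref{obs:relaible} to transport the $2f+1$ witnessing votes (and, via Claim~\ref{claim:votetype}, their vote types) from $DAG_i$ into $DAG_j$, and then appealing to the determinism of the commit predicate so that $p_j$ is eventually forced to perform the same direct commit that drove $p_i$'s traversal through $v_L$. Once that is in place, the delivery step is a routine consequence of causal-history convergence and the fact that a vertex enters the DAG only with its full causal history.
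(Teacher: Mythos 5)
Your proposal is correct and follows essentially the same route as the paper's own proof: reduce each delivery to the commit of a leader vertex whose causal history contains the delivered vertex, transport $p_i$'s direct commit to $p_j$ via Observation~\ref{obs:relaible} (plus Claim~\ref{claim:votetype} for vote types) and the determinism of the commit logic, and then obtain delivery at $p_j$ from causal-history equality and the deterministic order\_vertices code. The only difference is presentational: where you say the witnessing votes in $DAG_j$ ``force'' the direct commit, the paper makes this concrete by naming the trigger vertex $v''$ whose addition caused $p_i$'s commit and noting that, once $v''$ is eventually added to $DAG_j$, try\_ordering$(v'')$ is invoked there with an identical causal history --- a detail worth spelling out, since the commit rule is only evaluated when such a vertex is added (on that vertex's strong edges), not continuously on the whole DAG.
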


\begin{proof}
Assume some honest party $p_i$ outputs $a\_deliver(v_i.block,\\ v_i.round, v_i.source)$.
We will show that every honest party $p_j$ outputs it as well.
By the code of order\_vertices, there is a leader vertex $v$ that $p_i$ committed such that $v_i$ is in $v$'s casual history.
By Observation~\ref{obs:relaible}, the $v$'s casual histories in $DAG_i$ and $DAG_j$ are the same.  
Thus, by code of order\_vertices, we only need to show that $p_j$ eventually commit leader vertex $v$.
Let $v'$ be the leader vertex with the lowest number that is higher than $v.round$ that $p_i$ directly commits.
Let $v''$ be the vertex that triggers this direct commit, i.e., the vertex $v''$ that passed to the try\_ordering function that calls determine\_party\_vote\_type, which in turn commits $v'$.
By Observation~\ref{obs:relaible}, $p_j$ eventually add $v''$ to $DAG_j$ and call try\_ordering with $v''$.
By Observation~\ref{obs:relaible} again, the casual history of $v''$ in $DAG_i$ is equivalent the casual history of $v''$ in $DAG_j$.
Hence, $p_j$ directly commits $v'$ as well.
Since the casual history of $v'$ in $DAG_i$ is also equivalent the casual history of $v'$ in $DAG_j$, $p_j$ also commits $v$.

\end{proof}

\vspace{2mm}
By the Liveness (Agreement and Validity) properties of reliable broadcast and since it is enough for parties to deliver $2f+1$ vertices in a round in order to move to the next one, the DAG grows indefinitely:

\begin{observation}
\label{obs:DAGgrows}

For every round $r$ and honest party $p_i$, $DAG_i[r]$ eventually contains a vertex for every honest party.

\end{observation}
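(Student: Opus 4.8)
The plan is to prove the observation by strong induction on the round $r$, establishing \emph{simultaneously} the auxiliary statement that every honest party eventually advances to round $r$ and broadcasts its round-$r$ vertex. The two claims are mutually dependent: advancing from round $r$ to $r+1$ needs round $r$ populated, while populating round $r$ needs honest parties to reach and broadcast in round $r$. For the base case $r=1$, every honest party sets its local $round$ to $1$ at initialization and broadcasts its first vertex, whose entire causal history is the hardcoded genesis set $DAG[0]$; by the Validity property of reliable broadcast each such vertex is eventually delivered by every honest party, and since its causal history is always present, try\_add\_to\_dag succeeds immediately, so $DAG_i[1]$ eventually holds one vertex per honest party.

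For the inductive step I would first establish advancement. Granting the hypothesis that $DAG_j[r]$ eventually contains a vertex for each honest party at every honest $p_j$, each honest party eventually satisfies $|DAG[r]| \geq 2f+1$, since $f < n/3$ leaves at least $2f+1$ honest parties; this is exactly the threshold guarding try\_advance\_round. What remains is the guard in the r\_deliver handler that gates the call to try\_advance\_round (the four cases on $r \bmod 4$, each of the form $\neg wait \vee (\cdots)$). The timeout is the crux here: the timer started upon entering round $r$ eventually expires, the timeout handler sets $wait$ to $\false$ and itself calls try\_advance\_round, and thereafter every further delivery of a round-$r$ vertex re-triggers the call because $\neg wait$ now holds. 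Hence once both $|DAG[r]| \geq 2f+1$ and $wait = \false$ hold, the party advances to round $r+1$ and, via broadcast\_vertex, broadcasts its round-$(r+1)$ vertex.

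To then populate round $r+1$, I would argue: by Validity every honest round-$(r+1)$ vertex is delivered by every honest party, and by the Agreement property together with Observation~\ref{obs:relaible} --- a vertex is added only once its whole causal history is in the DAG, and the honest DAGs converge --- its $2f+1$ strong edges in round $r$ and its weak edges in earlier rounds are all eventually added, so try\_add\_to\_dag eventually succeeds (directly or on a buffer retry) and the vertex enters $DAG_i[r+1]$. This closes the induction and yields the observation.

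The main obstacle is the advancement argument. Unlike DAG-Rider, which advances purely on the $2f+1$ rule, the guards here tie advancement to seeing a (possibly faulty) leader's vertex or a steady-state quorum, so the heart of the proof is showing that the per-round timeout always breaks such a wait and keeps an honest party from stalling forever behind a faulty leader. A secondary subtlety to dispatch is the jump-forward branch of try\_add\_to\_dag, where a lagging party that delivers a $2f+1$ quorum for a higher round skips ahead and broadcasts only in that round, potentially omitting intermediate rounds; handling it cleanly requires arguing that such skips do not strip an honest party of a broadcast in the executions of interest (once a round is populated, honest parties advance essentially in step), so that each honest party still contributes a vertex to every round and the DAG grows indefinitely.
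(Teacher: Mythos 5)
Your induction is essentially an expanded version of what the paper offers: the paper justifies this observation in a single sentence (the Liveness properties of reliable broadcast, plus the fact that delivering $2f+1$ vertices in a round suffices to move to the next one), and your base case, your argument that the per-round timeout eventually sets $\neg wait$ and thereby breaks the leader-dependent guards in the $r \bmod 4$ cases of the delivery handler, and your use of reliable-broadcast Validity/Agreement together with Observation~\ref{obs:relaible} to populate round $r+1$ are all correct elaborations of that sentence. On those points you are more careful than the paper itself.

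The genuine gap is your dispatch of the jump-forward branch, and it matters because the literal statement cannot survive it. You claim that ``once a round is populated, honest parties advance essentially in step,'' so skips cost no honest party a broadcast; in the asynchronous model this is false. With $n = 3f+1$, the adversary can delay all messages to and from a single honest party $p_i$ while the remaining $2f$ honest parties, helped by one Byzantine vertex per round, keep satisfying the $2f+1$ threshold (their timeouts fire, so the leader guards never block them); when $p_i$ finally delivers a $2f+1$ quorum for a much later round $r'$, Line~\ref{line:sync} jumps it directly to $r'$ and it broadcasts only there. Since a party's round number only increases and $broadcast\_vertex$ is called only for the current round, $p_i$ never broadcasts in the skipped rounds, so those rounds never contain a vertex of $p_i$ in \emph{any} honest DAG --- the observation as stated (``a vertex for every honest party in every round'') is false in such executions, and no proof can close this case; the paper's one-line justification simply ignores it. The honest repair is to weaken the target to what the downstream arguments actually use: every round of every honest DAG eventually contains at least $2f+1$ vertices (Claim~\ref{claim:somecommit}), every honest party advances through unboundedly many rounds and so creates unboundedly many vertices, each eventually present in every honest DAG (Lemma~\ref{lem:validity}), and all honest parties eventually advance past any given round (Claim~\ref{claim:EScommit}). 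Your induction proves exactly this weaker form once you fold the jump into the advancement case (an honest party either advances to round $r+1$ or jumps past it); you should restate the claim accordingly rather than try to rescue the literal one.
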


In the next to claims we show that for every round $r$ there is an honest party $p_i$ that commit a leader in a round higher than $r$ with probability $1$. First, we show that if it is not the case, then starting from some point the vote type of all parties is fallback. Note that this is true also for Byzantine parties since thanks to the reliable broadcast Byzantine parties cannot lie about their casual history.

\begin{claim}
\label{claim:honestfallback}

Consider an honest party $p_i$.
If there is a wave $w$ after which no honest party commits a leader, then in all waves $w' >w +1$ $p_i$ determines the vote type of all parties that reach $w'$ in $DAG_i$ as fallback. 

\end{claim}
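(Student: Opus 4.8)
The plan is to argue by contradiction, unwinding the semantics of the vote-type assignment code. Recall that $p_i$ only ever modifies its sets $\emph{steadyVoters}[\cdot]$ and $\emph{fallbackVoters}[\cdot]$ inside the procedure \emph{determine\_party\_vote\_type}, which is invoked (through \emph{try\_ordering}) exactly when $p_i$ adds to $DAG_i$ the vertex of some party $p$ in the first round of a wave. Hence, to show that every party reaching $w'$ is classified as a fallback voter, it suffices to show that the steady branch of \emph{determine\_party\_vote\_type} is never taken for wave $w'$.

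First I would fix a wave $w' > w+1$ and a party $p$ whose first-round-of-$w'$ vertex $v$ lies in $DAG_i$, and suppose for contradiction that $p_i$ places $p$ into $\emph{steadyVoters}[w']$. Inspecting \emph{determine\_party\_vote\_type}, this happens only if, with $v_s = \emph{get\_second\_steady\_vertex\_leader}(w'-1)$, $v_f = \emph{get\_fallback\_vertex\_leader}(w'-1)$, and $\emph{votes} = v.\emph{strongEdges}$, at least one of $\emph{try\_steady\_commit}(\emph{votes}, v_s, w'-1)$ or $\emph{try\_fallback\_commit}(\emph{votes}, v_f, w'-1)$ returns \textsf{true}.

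The key step is to observe that both $\emph{try\_steady\_commit}$ and $\emph{try\_fallback\_commit}$ return \textsf{true} only after executing $\emph{commit\_leader}(\cdot)$ on $v_s$ or $v_f$ respectively. In other words, a \textsf{true} return value is by definition a \emph{direct} commit by $p_i$ of a leader vertex of wave $w'-1$ (the second steady-state leader $v_s$, or the fallback leader $v_f$). This is the crux: classifying $p$ as a steady voter in wave $w'$ is not a passive bookkeeping operation but actually forces $p_i$ itself to commit a leader of wave $w'-1$.

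Finally I would close the contradiction by counting waves. Since the wave numbers are integers and $w' > w+1$, we have $w'-1 \geq w+1 > w$, so $p_i$ — an honest party — directly commits a leader in a wave strictly after $w$, contradicting the hypothesis that no honest party commits any leader after wave $w$. Hence the steady branch is never taken for any $p$ reaching $w'$, and every such $p$ is placed in $\emph{fallbackVoters}[w']$, as required. I do not expect any genuine obstacle here; the only point needing care is to confirm that \emph{determine\_party\_vote\_type} is the sole site where vote types are assigned (beyond the hardcoded initialization $\emph{steadyVoters}[1] = \Pi$, $\emph{fallbackVoters}[1] = \{\}$), so that ruling out its steady branch genuinely forces the fallback classification.
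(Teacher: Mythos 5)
Your proof is correct, and it takes a slightly different and more self-contained route than the paper's own argument. The paper reasons through the \emph{source} of the vertex: it invokes Observation~\ref{obs:relaible} to equate the causal history of $v_j$ in $DAG_i$ with its causal history in $DAG_j$, and then argues that since the source committed no leader in wave $w'-1$ (by the hypothesis), the commit tests that $p_i$ evaluates on $v_j$'s strong edges must return false. You instead stay entirely inside $p_i$'s own execution: a \textsf{true} return from \emph{try\_steady\_commit} or \emph{try\_fallback\_commit} is, by the code, preceded by a call to \emph{commit\_leader} and is therefore itself a \emph{direct} commit by $p_i$ of a wave-$(w'-1)$ leader, with $w'-1 \geq w+1 > w$, contradicting the hypothesis applied to the honest party $p_i$ itself. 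This buys you two things. First, you never need Observation~\ref{obs:relaible} or any cross-party agreement on causal histories. Second, you handle Byzantine sources uniformly: the paper's phrase ``$p_j$ did not commit a leader'' is only meaningful when $p_j$ is honest, and the paper must appeal separately to the remark that Byzantine parties cannot lie about their causal history, whereas in your argument the identity and honesty of the vertex's source are irrelevant. Your closing check that \emph{determine\_party\_vote\_type} (together with the hardcoded wave-$1$ initialization) is the sole site where vote types are assigned is exactly the right thing to verify, and it matches the code of Algorithm~\ref{alg:modes}.
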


\begin{proof}
Let $w' > w + 1$ be a wave that start after $w$. By the claim assumption no honest party commits a leader in wave $w' -1$. Let $r$ be the first round of wave $w'$. 
Consider a party $p_j$ for which $p_i$ has a vertex $v_j$ in $DAG_i[r]$
By the code, $p_i$ calls try\_ordering with $v_j$, which in turn calls determine\_party\_vote\_type to determine $p_j$'s vote type for $w'$.
By Observation~\ref{obs:relaible}, the casual history of $v_j$ in $DAG_j$ is equivalent to the casual history of $v_j$ in $DAG_i$.
The claim follows from the code of determine\_party\_vote\_type.
Since $p_j$ did not commit a leader in wave $w'$, both functions try\_steady\_commit and try\_fallback\_commit return falls $p_i$' invocation of\\ determine\_party\_vote\_type.
Therefore, $p_i$ sets $p_j$'s vote type in $w'$ to fallback.

\end{proof}

\vspace{2mm}
The following claim is a known property of all to all communication, which sometimes referred as \emph{common core}~\cite{canetti1996studies}.
We provide proof for completeness.

\begin{claim}
\label{claim:core}

For every wave $w$ and party $p_i$.
Let $r$ be the first round of $w$.
If $|DAG_i[r+k]| \geq 2f+1, k \in \{0,1,2,3\}$, then there is a set $C \subseteq DAG_i[r]$ such that $|C| = 2f+1$ and for every vertex $v \in C$ there are $2f+1$ vertices in $DAG_i[r+3]$ with strong paths to $v$.

\end{claim}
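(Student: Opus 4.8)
The plan is to prove this as a purely combinatorial common-core fact about the single local DAG $DAG_i$, using nothing but quorum intersection. Write $V_k := DAG_i[r+k]$ for $k\in\{0,1,2,3\}$. By the Integrity property of reliable broadcast each party contributes at most one vertex per round, so $2f+1 \le |V_k| \le n = 3f+1$; and by the admission rule in $try\_add\_to\_dag$ a vertex enters $DAG_i$ only after its entire causal history, so every vertex of $V_{k+1}$ has at least $2f+1$ strong edges, all landing inside $V_k$. Throughout I would lean on the elementary bound that a subset of $V_k$ of size $\ge 2f+1$ misses at most $|V_k|-(2f+1)\le f$ elements, and dually that any two such subsets meet in at least $f+1$ elements.

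First I would establish the reachability backbone by induction over the three hops. For $u\in V_3$, the set of vertices of $V_k$ strong-path-reachable from $u$ has complement contained in the miss-set of any single reachable intermediate vertex, which has size $\le |V_k|-(2f+1)$; hence $u$ strong-path-reaches at least $2f+1$ vertices of each of $V_2,V_1,V_0$. The refined form is the engine of the argument: if $u$ fails to reach some $v\in V_0$, then \emph{every} vertex in $u$'s $(\ge 2f+1)$-sized reach inside $V_1$ must miss $v$, and consequently a whole $(\ge 2f+1)$-quorum of $V_2$ must ``avoid'' $v$ (have all of its $V_1$-edges among the missers of $v$). Thus one top-level miss forces quorums of missers one and two rounds below.

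Next I would define the core $C$ as the set of $v\in V_0$ reached by strong paths from at least $2f+1$ vertices of $V_3$, and bound its complement $B=V_0\setminus C$ by double counting. Let $m(v)$ denote the number of $V_1$-vertices that miss $v$. Between consecutive rounds the total miss/avoid capacity is small: each vertex of $V_1$ (resp.\ $V_2$) misses (resp.\ avoids) at most $|V_0|-(2f+1)\le f$ vertices of $V_0$, so both the number of pairs $(y,v)$ with $y$ missing $v$ and the number of pairs $(x,v)$ with $x$ avoiding $v$ are at most $(3f+1)f$, and in particular $\sum_{v}m(v)\le(3f+1)f$. Since the refined step forces each $v\in B$ to have at least $2f+1$ avoiders in $V_2$, summing gives $|B|\cdot(2f+1)\le(3f+1)f$, whence $|B|<2f$. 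This already confines $B$ to a small set; the goal is to sharpen it to $|B|\le|V_0|-(2f+1)$, which is exactly $|C|\ge 2f+1$, after which I take any $(2f+1)$-subset of $C$.

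The main obstacle is that this count, carried out one layer at a time, is just barely too weak. A naive single-hop count is in fact vacuous for $f\ge 2$, and even the two-hop bound above leaves an off-by-one slack in the extremal case where all four rounds carry the full $3f+1$ vertices. Closing the gap is where the three hops must be used \emph{jointly} rather than layer by layer: I would argue that for a vertex $v$ to be avoided by a $(2f+1)$-quorum of $V_2$ realized through consistent $V_2\!\to\!V_1$ edges, the $V_1$-missers of the bad vertices must be heavily aligned (sharing large common miss-sets of size $\ge 2f+1$), and that such alignment, set against the hard cap $\sum_{v}m(v)\le(3f+1)f$ on total $V_1\!\to\!V_0$ misses, cannot be sustained for more than $|V_0|-(2f+1)\le f$ vertices of $V_0$. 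Making this alignment/feasibility argument precise -- equivalently, ruling out the near-extremal bad configurations -- is the delicate heart of the proof; the reachability facts and the per-layer budgets are routine.
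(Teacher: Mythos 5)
Your proposal has a genuine gap, and you name it yourself: the entire argument funnels into showing $|B|\le |V_0|-(2f+1)$, but your counting machinery provably cannot deliver this bound, and the ``alignment/feasibility'' argument you invoke to close the distance is never given --- it is only described as the delicate heart that would need to be made precise. Concretely, with $s:=|V_0|-(2f+1)$ the per-vertex miss/avoid budget, your double count gives $|B|\le \frac{(3f+1)s}{2f+1}$, and $\frac{(3f+1)s}{2f+1}\le s$ holds only when $s=0$ or $f\le 0$; integrality rescues the cases $f\le 2$, but for $f\ge 3$ and $|V_0|=3f+1$ the bound allows $|B|=4>f=3$ already, and the slack grows linearly in $f$. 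So no amount of care in executing the layer-by-layer count as described will finish the proof; what is missing is a genuinely different idea, not a sharpening of constants.

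The missing idea is to produce a single \emph{pivot} vertex rather than to bound the bad set globally. Count the strong edges between $DAG_i[r+2]$ and $DAG_i[r+1]$: at least $2f+1$ vertices in round $r+2$, each with at least $2f+1$ strong edges into round $r+1$, which contains at most $3f+1$ vertices; since $(2f+1)^2/(3f+1)>f$, some $u\in DAG_i[r+1]$ has at least $f+1$ in-edges from round $r+2$. One quorum-intersection step then finishes everything: any $w\in DAG_i[r+3]$ has at least $2f+1$ strong edges into round $r+2$ and so misses at most $f$ of its vertices, hence $w$ hits one of the $\ge f+1$ vertices pointing at $u$ and has a strong path to $u$. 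Taking $C$ to be any $2f+1$ of $u$'s strong-edge targets in round $r$ gives that \emph{every} vertex of $DAG_i[r+3]$ (there are at least $2f+1$ of them) reaches \emph{every} vertex of $C$ --- in fact a stronger, common-quorum form of the claim than the per-vertex statement you were aiming at, obtained with one counting step and one intersection step. Your reachability lemmas and per-layer budgets are all correct as far as they go; they just cannot substitute for this pivot.
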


\begin{proof}
The proof follows from the fact that every vertex in every round of the DAG has at least $2f+1$ strong edges to vertices in the previous round.
In particular, it is easy to show by a counting argument that there is one vertex $u \in DAG_i[r_1]$ such that $f+1$ vertices in $DAG_i[r+2]$ has a strong edge to $u$.
Therefore, by quorum intersection, every vertex in $DAG_i[r+3]$ has a strong path to $u$.
Let $C \subseteq DAG_i[r]$, $|C| = 2f+1$ be the set of vertices that $u$ has a strong path to, then every vertex in $DAG_i[r+3]$ has a strong path to every vertex in $C$.
The lemma follows since there are at least $2f+1$ vertices in $DAG_i[r+3]$.

\end{proof}

\vspace{2mm}
Next, we use the fact that fallback leaders are hidden from adversary until the last round of a wave to prove the following:

\begin{claim}
\label{claim:probcommit}

Consider a party $p_i$ and a wave $w$ such that $p_i$ determines the vote type of all parties that reach $w$ in $DAG_i$ as fallback.
Then the probability of $p_i$ to commit the fallback vertex leader of $w$ is at least $2/3$.

\end{claim}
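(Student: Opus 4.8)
The plan is to show that, with probability at least $2/3$, the randomly elected fallback leader of wave $w$ falls inside a ``common-core'' set $C$ of round-$(4w-3)$ vertices that is structurally guaranteed to collect $2f+1$ strong paths, and that whenever the leader lies in $C$ party $p_i$ is forced to directly commit it. Write $r = 4w-3$ for the first round of the wave. First I would invoke Observation~\ref{obs:DAGgrows} to argue that $DAG_i$ eventually contains at least $2f+1$ vertices in each of the rounds $r, r+1, r+2, r+3$, so the hypothesis of Claim~\ref{claim:core} is met. Applying Claim~\ref{claim:core}, and more precisely the stronger statement that its proof actually establishes, yields a set $C \subseteq DAG_i[r]$ with $|C| = 2f+1$ such that \emph{every} vertex in $DAG_i[r+3]$ has a strong path to every vertex of $C$.

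Next I would connect this structure to the commit rule in \emph{try\_fallback\_commit}. Suppose the elected leader $v_f = \textit{get\_fallback\_vertex\_leader}(w)$ is the round-$r$ vertex of a party whose vertex lies in $C$, i.e.\ $v_f \in C$ (so in particular $v_f \neq \bot$). Take any first-round vertex $v'$ of wave $w+1$, i.e.\ any vertex in $DAG_i[r+3+1] = DAG_i[4w+1]$, which exists by Observation~\ref{obs:DAGgrows}; being legal, $v'$ carries at least $2f+1$ strong edges into $DAG_i[r+3]$, and by the common-core property each of these $2f+1$ vertices has a strong path to $v_f$. Moreover, by the hypothesis of the claim every party reaching wave $w$ in $DAG_i$ is in $\emph{fallbackVoters}[w]$, so each of these vertices carries a fallback vote type. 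Hence when $p_i$ adds $v'$ and runs \emph{try\_ordering}, the ensuing call to \emph{try\_fallback\_commit} on $v_f$ with $\emph{votes} = v'.\emph{strongEdges}$ counts at least $2f+1$ qualifying votes and $p_i$ directly commits $v_f$. Thus the event $v_f \in C$ implies a fallback commit.

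It remains to lower-bound $\Pr[v_f \in C]$, and this is the step I expect to be the main obstacle. Since the sources of the $2f+1$ vertices of $C$ are distinct parties and the leader is $\textit{choose\_leader}_i(w)$, the Fairness property ($\Pr[X_w = p_j] = 1/n$ for each $p_j$) would give $\Pr[v_f \in C] = |C|/n = (2f+1)/(3f+1) \geq 2/3$ \emph{provided $C$ is fixed independently of the coin}. The delicate point is exactly this independence: $C$ is a deterministic function of the DAG structure through round $r+3 = 4w$, and I must rule out an adversary that schedules the DAG after learning the leader so as to keep $v_f$ out of $C$. Here I would appeal to the Unpredictability property together with non-equivocation of reliable broadcast: $X_w$ is indistinguishable from uniform until at least $f+1$ honest parties reach round $4w$ and invoke $\textit{choose\_leader}(w)$, and every honest vertex (with its edges) is committed via reliable broadcast before $X_w$ can be learned, so the adversary cannot correlate $C$ with the leader. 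Conditioning on any fixed $C$, the leader is therefore still uniform over $\Pi$ up to the negligible coin bias $\epsilon$, giving $\Pr[v_f \in C] \geq (2f+1)/(3f+1) - \epsilon \geq 2/3$, where the final inequality holds because the slack $(2f+1)/(3f+1) - 2/3 = 1/(3(3f+1))$ is non-negligible and dominates $\epsilon$. Combining the two implications yields that $p_i$ commits the fallback leader of $w$ with probability at least $2/3$.
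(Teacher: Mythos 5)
Your proof is correct and follows essentially the same route as the paper's: the common-core set $C$ from Claim~\ref{claim:core}, combined with the all-fallback hypothesis, forces a direct commit of any leader landing in $C$, and unpredictability/fairness of the coin elected in round $r+3$ gives probability at least $(2f+1)/(3f+1) > 2/3$. Your two refinements --- invoking the stronger property that Claim~\ref{claim:core}'s proof actually establishes (every vertex of $DAG_i[r+3]$ has a strong path to every vertex of $C$, which is needed because the votes counted are the strong edges of one specific round-$(r+4)$ vertex) and absorbing the negligible coin bias $\epsilon$ into the slack $(2f+1)/(3f+1) - 2/3$ --- merely make explicit what the paper's shorter argument leaves implicit.
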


\begin{proof}
Let $r$ be the first round of $w$.
By the assumption, the vote type of all parties with vertices in $DAG_i[r+3]$ is fallback.
Therefore, by Claim~\ref{claim:core}, there are at least $2f+1$ vertices in the first round of $w$ that satisfy the fallback  commit rule.
That is, there is a set $C$ of $2f+1$ parties such that if any of them is elected to be the fallback leader, then $p_i$ will commit it.
Since the fallback leader is elected with the randomness produced in round $r+3$, the set $C$ is determined before the adversary learns the leader.
Therefore, even though the adversary fully controls delivery times, the probability for the elected leader to be in $C$ is at least $2f+1/3f+1 > 2/3$.

\end{proof}

\vspace{2mm}

\begin{claim}
\label{claim:somecommit}
For every wave $w$, there is an honest party that with probability $1$ commits a leader in a wave higher than $w$.
\end{claim}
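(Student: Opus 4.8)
The plan is to argue by contradiction combined with an independent-trials (Borel--Cantelli style) estimate. Fix an honest party $p_i$ and a wave $w$, and let $E$ be the event that no honest party commits any leader in a wave greater than $w$. I will show $\Pr[E]=0$, which immediately gives the claim, since on the complement of $E$ some honest party commits a leader in a wave $>w$ (and it suffices to exhibit one such party).

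First I would establish the preconditions. By Observation~\ref{obs:DAGgrows}, for every round $r$ the set $DAG_i[r]$ eventually contains a vertex for each of the at least $2f+1$ honest parties, so eventually every round of every wave holds at least $2f+1$ vertices. Thus from some wave onward the hypotheses of Claim~\ref{claim:core} hold for $p_i$ in every wave, and I may freely invoke the common-core argument of Claim~\ref{claim:core}, and hence Claim~\ref{claim:probcommit}, for all sufficiently large waves.

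Next I would condition on $E$. By Claim~\ref{claim:honestfallback}, on $E$ party $p_i$ determines the vote type of every party reaching any wave $w'>w+1$ as fallback. Consider the waves $w'=w+2,w+3,\ldots$, let $H_{w'}$ denote the history up to (but not including) the coin flip of wave $w'$, and let $A_{w'}$ be the event that all vote types in $w'$ are fallback and all four rounds of $w'$ contain at least $2f+1$ vertices. Then $A_{w'}$ is $H_{w'}$-measurable and, by the common-core argument of Claim~\ref{claim:core}, determines a set $C_{w'}$ of $2f+1$ candidate parties whose election would force $p_i$ to commit. Since by the Fairness and Unpredictability of the global perfect coin the leader of $w'$ is a fresh uniform draw independent of $H_{w'}$ (and hence of $C_{w'}$), Claim~\ref{claim:probcommit} yields $\Pr[p_i \text{ does not commit in } w' \mid H_{w'}, A_{w'}] \le 1/3$.

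Finally I would chain these bounds. The event $E$ implies, for every $w'>w+1$, both $A_{w'}$ and that $p_i$ fails to commit the fallback leader of $w'$; writing $D_{w'}$ for the latter failure event, we have $E \subseteq \bigcap_{w'>w+1}(A_{w'}\cap D_{w'})$. Peeling off one wave at a time and using the conditional bound $\Pr[D_{w'}\mid H_{w'},A_{w'}]\le 1/3$, I obtain $\Pr\!\big[\bigcap_{w'=w+2}^{w+1+k}(A_{w'}\cap D_{w'})\big]\le (1/3)^k$ for every $k$, whence $\Pr[E]\le (1/3)^k$ for all $k$ and so $\Pr[E]=0$. The main obstacle is the probabilistic bookkeeping: one must verify that $A_{w'}$ and the candidate set $C_{w'}$ are fixed before the wave-$w'$ coin is revealed, so that the $2/3$ commit bound of Claim~\ref{claim:probcommit} applies as a conditional probability, and that the coins of distinct waves are mutually independent, which is precisely what licenses multiplying the per-wave failure probabilities.
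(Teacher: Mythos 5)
Your proof is correct and follows essentially the same route as the paper's: a proof by contradiction that invokes Observation~\ref{obs:DAGgrows} (the DAG grows forever), Claim~\ref{claim:honestfallback} (under the contradiction hypothesis all vote types become fallback), and Claim~\ref{claim:probcommit} (a per-wave commit probability of at least $2/3$), concluding that failing in every wave has probability zero. The only difference is that you make explicit the conditional-probability and independence bookkeeping (measurability of the candidate set before the coin flip, and the $(1/3)^k$ peeling bound) that the paper compresses into its single final sentence ``Hence, with probability $1$, \ldots'' --- a welcome elaboration, but not a genuinely different argument.
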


\begin{proof}
Assume by a way of contradiction no honest party commits a leader in a wave higher than $w$.
By Observation~\ref{obs:DAGgrows}, for every round $r$ and honest party $p_i$, $DAG_i[r]$ eventually contains at least $2f+1$ vertices.
Moreover, by Claim~\ref{claim:honestfallback}, there is an honest party $p_i$ that determines the vote type of all parties that reach $w' > w +1$ in $DAG_i$ as fallback.
Therefore, by Claim~\ref{claim:probcommit}, the probability of $p_i$ to
commit the fallback leader in any wave $w' > w+1$ is at least $\frac{2}{3}$.
Hence, with probability $1$, there is a wave higher than $w$ that $p_i$ commits.

\end{proof}


\vspace{2mm}
We next use Claim~\ref{claim:somecommit} to prove Validity.

\begin{lemma}
\label{lem:validity}

Algorithms~\ref{alg:dataStructures}, \ref{alg:DAG}, \ref{alg:modes}, and~\ref{alg:commit} satisfy Validity.

\end{lemma}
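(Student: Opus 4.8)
The plan is to show that any block $b$ submitted by an honest party $p_k$ via $a\_bcast_k(b,r)$ is eventually $a\_deliver$ed by every honest party, with probability $1$ in the asynchronous case. First I would observe that, since the $\textit{blocksToPropose}$ queue is FIFO and the DAG keeps growing by Observation~\ref{obs:DAGgrows}, the honest party $p_k$ keeps invoking create\_new\_vertex and therefore eventually dequeues $b$ into some vertex $v_b$ that it reliably broadcasts. By the Validity property of reliable broadcast together with the try\_add\_to\_dag rule (which only defers a vertex until its entire causal history is present), $v_b$ eventually appears in $DAG_i$ for every honest $p_i$. It then remains to argue that $v_b$ is eventually ordered.

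Next I would reduce ``$v_b$ is $a\_deliver$ed'' to ``$v_b$ lies in the causal history of some committed leader''. Indeed, by the order\_vertices procedure, committing a leader $v$ triggers $a\_deliver$ of every not-yet-delivered vertex in $v$'s causal history; by Observation~\ref{obs:relaible} this causal history is identical across honest parties, and by Corollary~\ref{col:order} (equivalently, Lemma~\ref{lem:agreement}) all honest parties commit the same leaders. Hence it suffices to exhibit a single committed leader $v$ such that there is a path from $v$ to $v_b$.

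The core step is to produce such a leader. Fix a time $t^*$ by which every honest party holds $v_b$ in its local DAG. Since each honest party broadcasts at most one vertex per round and rounds are strictly increasing, only finitely many honest vertices are created before $t^*$; thus there is a round $R > v_b.round + 1$ such that every honest vertex in rounds $\geq R$ is created after $t^*$. By the set\_weak\_edges rule, any vertex created by an honest party after $t^*$ in a round $> v_b.round + 1$ either already reaches $v_b$ or is assigned a weak edge to it, so every honest vertex in rounds $\geq R$ has a path to $v_b$. By Claim~\ref{claim:somecommit}, with probability $1$ some honest party commits a leader $v$ with $v.round$ arbitrarily large, in particular with $v.round > R$. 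A standard counting and quorum-intersection argument on strong edges shows that the strong causal history of $v$ contains at least $2f+1$ vertices in round $R$, of which at least $f+1$ are honest; each such honest vertex has a path to $v_b$, so $v$ has a path to $v_b$. Combined with the reduction above, every honest party eventually $a\_deliver$s $v_b$, and hence $b$.

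I expect the main obstacle to be the interplay between the weak-edge inclusion guarantee and the probabilistic commit guarantee: the committed leader must be late enough that its honest strong ancestors already point to $v_b$, which is precisely what the choice of $R$ secures via the finiteness of the honest vertices created before $t^*$. In the asynchronous case the whole argument inherits the ``with probability $1$'' qualifier from Claim~\ref{claim:somecommit}; after GST one could instead invoke deterministic liveness of the steady-state leaders (as established for the eventually synchronous analysis) to drop the probabilistic qualifier and recover Definition~\ref{def:ESvalidity}.
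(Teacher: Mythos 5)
Your proof is correct and follows essentially the same route as the paper's: queue progress via Observation~\ref{obs:DAGgrows}, reliable-broadcast Validity to place $v_b$ in every honest DAG, Claim~\ref{claim:somecommit} to obtain an arbitrarily late committed leader, weak edges to connect that leader back to $v_b$, and Agreement (Lemma~\ref{lem:agreement}) to extend delivery from one party to all. The only difference is that you spell out the step the paper compresses into one sentence --- the choice of the cutoff round $R$ and the counting argument that the leader's strong causal history contains at least $f+1$ honest round-$R$ vertices, each with a path to $v_b$ --- which is added rigor on the same argument rather than a different approach.
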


\begin{proof}
Let $p_i$ be an honest party that calls $a\_bcast(b,r)$, we need to show that all honest parties output $a\_deliver(b,r,p_i)$ with probability $1$.
By the code $p_i$ pushes $b$ in the \emph{blockToPropose} queue.
By Observation~\ref{obs:DAGgrows}, $p_i$ advanced unbounded number of rounds and thus creates unbounded number of vertices.
Therefore, eventually $p_i$ will create a vertex $v_i$ with $b$ and reliably broadcast it.
By the Validity property of reliably broadcast, all honest parties will eventually add it to their DAG. 
That is, for every honest party $p_j$, there is a round number $r_i$ such that $v_i \in DAG_j[r_i]$. 
By the code of create\_new\_vertex, every vertex that $p_j$ creates after $v_i$ is added to $DAG_j[r_i]$ have a path to $v_i$ (either with strong links or weak links).

Therefor, by Claim~\ref{claim:somecommit}, there is an honest party $p_j$ that with probability $1$ commits a leader vertex with a path to $v_i$. 
Thus, by the code of order\_vertices, $p_i$ outputs $a\_deliver(b,r,p_i)$ with probability $1$.
Since $p_i$ is honest, we get that by Lemma~\ref{lem:agreement} (Agreement), all honest parties output $a\_deliver(b,r,p_i)$ with probability $1$.

\end{proof}

\paragraph{Integrity.}
\begin{lemma}
\label{lem:integrity}
Algorithms~\ref{alg:dataStructures}, \ref{alg:DAG}, \ref{alg:modes}, and~\ref{alg:commit} satisfy Integrity.
\end{lemma}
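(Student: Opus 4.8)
The plan is to reduce Integrity to the statement that each vertex is \emph{a\_deliver}ed at most once. Recall that Integrity requires that for each sequence number $r$ and party $p_k$ an honest party $p_i$ outputs $a\_deliver_i(m,r,p_k)$ at most once. In our construction every such output is produced on the delivery line of the \emph{order\_vertices} procedure of Algorithm~\ref{alg:commit}, when some vertex $v'$ with $v'.block = m$, $v'.round = r$, and $v'.source = p_k$ is placed in \emph{verticesToDeliver}. By the second bullet of Observation~\ref{obs:relaible} (which encodes the Integrity property of the underlying reliable broadcast), for each round $r$ and source $p_k$ there is at most one vertex in $DAG_i$ with that round and source. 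Hence the pair $(r,p_k)$ identifies a unique vertex $v'$, and it suffices to show that each fixed vertex $v'$ is output on the delivery line at most once.

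First I would isolate the guard that enforces this: the persistent local variable \emph{deliveredVertices}, which is initialized to $\{\}$ in Algorithm~\ref{alg:commit} and is shared across all invocations of \emph{commit\_leader} and \emph{order\_vertices} by $p_i$. The set \emph{verticesToDeliver} is defined to contain only vertices $v'$ with $v' \notin \emph{deliveredVertices}$, and immediately after $v'$ is output, the code inserts $v'$ into \emph{deliveredVertices}. The key structural observation is that \emph{deliveredVertices} is monotone: elements are only ever added to it, never removed. Consequently, once $v'$ has been output once and inserted, the membership test $v' \notin \emph{deliveredVertices}$ fails from then on, so $v'$ can never re-enter \emph{verticesToDeliver}, giving at most one output of $v'$.

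The subtle point --- and the step I expect to require the most care --- is that a single vertex $v'$ genuinely can be a candidate for delivery many times: $v'$ may lie in the causal history (i.e.\ be reachable via $path$) of several committed leaders, and \emph{order\_vertices} is invoked once for every committed leader. So the claim is not that $v'$ appears in \emph{verticesToDeliver} only once by the graph structure; rather, the entire burden rests on the \emph{deliveredVertices} filter together with its monotonicity. I would therefore state this monotonicity explicitly as the central lemma of the argument, verify that no code path ever deletes from \emph{deliveredVertices} (neither \emph{order\_vertices} nor \emph{commit\_leader} does), and confirm that the insertion of $v'$ happens in the same loop iteration that outputs it, so that it is never output twice before the insert takes effect. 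Combining the uniqueness of vertices per $(r,p_k)$ from Observation~\ref{obs:relaible} with the at-most-once delivery of each vertex then yields Integrity.
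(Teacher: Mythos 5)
Your proof is correct, and it is in fact more complete than the one in the paper. The paper's own proof is a three-sentence reduction: every $a\_deliver$ output corresponds to a vertex $v'$ in $DAG_i$, vertices are added to $DAG_i$ only upon $r\_deliver$ events, and the claim ``follows from the Integrity property of reliable broadcast.'' That argument establishes only the first half of what you prove --- that the pair $(r,p_k)$ identifies at most one vertex in the DAG (your appeal to the second bullet of Observation~\ref{obs:relaible} is the same fact, since that observation is itself a consequence of reliable-broadcast Integrity). What the paper leaves entirely implicit is the second half, which you correctly identify as carrying the real burden: a single vertex can lie in the causal history of many committed leaders, so \emph{order\_vertices} considers it as a delivery candidate repeatedly, and nothing in reliable broadcast prevents it from being output twice. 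Your explicit lemma --- that \emph{deliveredVertices} is monotone, persists across invocations of \emph{commit\_leader} and \emph{order\_vertices}, and is updated in the same loop iteration as the output --- is exactly the missing code-level argument, and making it the centerpiece is the right call. In short, both proofs rest on the same two ingredients, but the paper states only one of them and waves at the code for the other, whereas you decompose the claim cleanly into ``at most one vertex per $(r,p_k)$'' and ``at most one output per vertex'' and discharge each; your version is the one a careful reader would want.
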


\begin{proof}
An honest party $p_i$ outputs $a\_deliver(v'.block,v'.round,\\ v'.source)$ only if node $v'$ is in $p_i$'s DAG (i.e., $v' \in \bigcup_{r > 0} DAG_i[r]$).
Node $v'$ is added to $p_i$'s DAG upon the reliable broadcast\\ $r\_deliver(v',v'.round, v'.source)$ event.
Therefore, the Lemma follows from the Integrity property of reliable broadcast.  
\end{proof}

\subsection{Partially Synchronous \sys}
\label{sub:ESproof}

The proof of the Integrity property is identical to the proof of Lemma~\ref{lem:integrity}.
For the rest of the properties, due to similarities between the protocols and to avoid argument duplication, we will follow the structure of Section~\ref{sec:fallback} and sometimes explain how to adapt claims' proofs.

To be consistent with the \sys with fallback presentation, waves here are also consist of 4 rounds, each with a pre-defined leader in the first and fourth rounds (we could have waves of 2 rounds since we do not have the fallback leader).

\paragraph{Total order.}
Observation~\ref{obs:relaible} applies in this case as well because the protocol to build the DAG is the same.
Claim~\ref{claim:votetype} trivially holds here since there is only one possible vote type and Claim~\ref{claim:samewave} holds since there are no fallback leaders.
The proofs of Claims~\ref{claim:commitmin} and~\ref{claim:commitinterval} apply to Algorithm~\ref{alg:ESBullshark} as well.
Therefore, Corollary~\ref{col:order} applies and since we use the same order\_vertices procedure in both protocols we get:

\begin{lemma}
\label{lem:ESorder}

Algorithms~\ref{alg:dataStructures}, \ref{alg:DAG}, \ref{alg:modes}, and~\ref{alg:ESBullshark} satisfy Total order.

\end{lemma}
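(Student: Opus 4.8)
The plan is to establish Total order for the eventually synchronous protocol by the same two-step argument used for the fallback variant in Lemma~\ref{lem:order}: first show that every honest party commits the identical sequence of leader vertices, and then appeal to the fact that both protocols share the deterministic \emph{order\_vertices} procedure of Algorithm~\ref{alg:commit}, which delivers each committed leader's causal history in a fixed order. Because the DAG-construction layer (Algorithms~\ref{alg:dataStructures} and~\ref{alg:DAG}) is untouched, Observation~\ref{obs:relaible} transfers verbatim, giving both the eventual convergence of the local DAGs and the crucial invariant that a path from a committed leader to a vertex present in one honest party's DAG is present in every honest party's DAG. The only real task, then, is to re-derive Corollary~\ref{col:order} for Algorithm~\ref{alg:ESBullshark}.

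First I would dispatch the easy prerequisites. Claim~\ref{claim:votetype} is immediate since the eventually synchronous protocol has a single voting mode, and Claim~\ref{claim:samewave} is vacuous because there is no fallback leader that could conflict with a steady-state one; by construction there is at most one candidate leader per odd round. The work concentrates on Claims~\ref{claim:commitmin} and~\ref{claim:commitinterval}, whose statements I would keep but whose proofs need re-checking against Algorithm~\ref{alg:ESBullshark}'s simpler commit rules: here \emph{try\_commit} fires on $f+1$ strong-path votes rather than the fallback version's $2f+1$, and the indirect rule inside \emph{commit\_leader} is merely the single test $strong\_path(v,v_s)$ instead of the paired $f+1$-versus-$f$ vote count. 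Concretely, for Claim~\ref{claim:commitmin} I would argue that if $p_i$ directly commits a leader $v_s$ in round $r'_i$ and $p_j$ directly commits a later leader $v'$, then when $p_j$'s backward traversal reaches round $r'_i$ it finds $strong\_path(v',v_s)$ and so indirectly commits $v_s$.

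The hard part will be this quorum-intersection step. The $f+1$ vertices in round $r'_i+1$ witnessing $p_i$'s direct commit of $v_s$ each carry a strong path to $v_s$, while the later leader $v'$ has strong paths to at least $2f+1$ vertices of round $r'_i+1$. Invoking quorum intersection exactly as the fallback proofs of Claims~\ref{claim:commitmin} and~\ref{claim:commitinterval} do, these two sets must share a vertex $u$, whence $strong\_path(v',u)$ and $strong\_path(u,v_s)$ compose to $strong\_path(v',v_s)$; Observation~\ref{obs:relaible} guarantees $p_j$ observes the same paths. Note that the eventually synchronous indirect rule needs only this single shared vertex, so the argument is if anything lighter than in the fallback case. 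With Claim~\ref{claim:commitmin} re-established, Claim~\ref{claim:commitinterval} follows from the determinism of \emph{commit\_leader} as before, and inducting over all pairs of honest parties yields Corollary~\ref{col:order}.

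Finally I would close the lemma in one short paragraph mirroring Lemma~\ref{lem:order}: by Corollary~\ref{col:order} the honest parties order the same leaders in the same order; \emph{order\_vertices} then delivers each leader's causal history by a fixed deterministic rule; and Observation~\ref{obs:relaible} ensures these causal histories coincide across honest parties. Hence no two honest parties can emit \emph{a\_deliver} events for a given pair of vertices in opposite orders, which is precisely Total order.
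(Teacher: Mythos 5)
Your proposal is correct and follows essentially the same route as the paper: the paper likewise proves Lemma~\ref{lem:ESorder} by noting that Observation~\ref{obs:relaible} carries over unchanged, that Claim~\ref{claim:votetype} is trivial and Claim~\ref{claim:samewave} holds vacuously without fallback leaders, that the proofs of Claims~\ref{claim:commitmin} and~\ref{claim:commitinterval} apply to Algorithm~\ref{alg:ESBullshark}, and then invoking Corollary~\ref{col:order} together with the shared \emph{order\_vertices} procedure. The only difference is that you explicitly re-derive the quorum-intersection step for the weakened thresholds (a set of $f+1$ direct-commit votes intersecting the $2f+1$ round-$(r'_i+1)$ vertices reachable from any later leader), a detail the paper leaves implicit when it asserts the claims ``apply as well.''
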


\paragraph{Agreement and Validity.}

The proof of the Agreement property is identical to the proof of Lemma~\ref{lem:agreement} and Observation~\ref{obs:DAGgrows} holds since the algorithm to build the DAG is the same as in \sys with fallback.
To proof Validity for the eventually synchronous variant of \sys we do not need Claims~\ref{claim:honestfallback} and~\ref{claim:somecommit}.
Instead, we use the fact that GST eventually occurs.
We prove the protocol under the assumption that honest parties set their timeouts to be larger than $3\Delta$ and the following holds for the reliable broadcast building block:

\begin{property}
\label{prop:relaible}

Let $t$ be a time after GST. 
If an honest party reliably broadcasts a message at time $t$ or an honest party delivers a message at time $t$ , then all honest parties deliver it by time $t + \Delta$.

\end{property}

The above property is the equivalent to the reliable broadcast Validity and Agreement properties in the asynchronous model.
To the best of our knowledge, it is satisfied by all reliable broadcast protocol since before delivering a message honest parties echo it to all other honest parties.

\begin{claim}
\label{claim:EScommit}

Let $w$ be a wave such that all honest parties advances to the first round of $w$ after GST.
Let $p_1$ and $p_2$ be their first and second pre-defined leaders of $w$, respectively.
If $p_1$ and $p_2$ are honest, then all honest parties commit a leader in $w$.

\end{claim}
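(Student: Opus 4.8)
The plan is to show that every honest party commits the second steady-state leader $p_2$ (the vertex of round $4w-1$), which is a leader of wave $w$. The whole argument reduces to one structural fact: that at least $2f+1$ of the round-$4w$ vertices produced by honest parties carry a strong edge to $p_2$ (equivalently, have a \emph{strong\_path} to it, since $4w$ and $4w-1$ are adjacent rounds). Granting this, consider any legal vertex $v$ of round $4w+1$: it has at least $2f+1$ strong edges into round $4w$, and since there are at most $n=3f+1$ round-$4w$ vertices, quorum intersection forces at least $(2f+1)+(2f+1)-(3f+1)=f+1$ of $v$'s strong edges to point to $p_2$. When an honest party adds such a $v$ and runs \emph{try\_ordering} (\Cref{alg:ESBullshark}), the branch for $v.round \bmod 4 = 1$ calls \emph{try\_commit} on $p_2$ with exactly these strong edges as votes, and since \emph{try\_commit} needs only $f+1$ votes it commits $p_2$. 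As every honest party eventually obtains a round-$(4w+1)$ vertex (Observation~\ref{obs:DAGgrows}), all honest parties commit $p_2$.

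The timing backbone is the synchronization statement I would establish first: after GST, if one honest party enters a round at time $t$, all honest parties enter it by $t+2\Delta$ (this follows from Property~\ref{prop:relaible} together with the jump rule on line~\ref{line:sync}, which lets a lagging party catch up within one $\Delta$ once $2f+1$ vertices of a round are present). Using this, the hypothesis that all honest parties reach round $4w-3$ after GST, and the assumption that timeouts exceed $3\Delta$, I would show that every honest party delivers $p_1$'s round-$(4w-3)$ vertex before its round-$(4w-3)$ timeout fires: $p_1$ is honest, so it broadcasts within the $2\Delta$ entry window, and by Property~\ref{prop:relaible} its vertex reaches everyone within a further $\Delta<$ timeout. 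Consequently each honest party advances out of round $4w-3$ through the leader branch on line~\ref{line:advance1} with $p_1$ already in its DAG, so its round-$(4w-2)$ vertex has a strong edge to $p_1$. This yields at least $2f+1$ steady-state votes for $p_1$, which triggers the fast advancement condition on line~\ref{line:advance2} and drives all honest parties into round $4w-1$ in a tight, timeout-free window.

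The crux is to show that every honest round-$4w$ vertex points to $p_2$, i.e.\ that no honest party advances to round $4w$ before delivering $p_2$. The first honest party to enter round $4w$ cannot do so by the jump rule (that needs $2f+1$, hence $f+1$ honest, round-$4w$ vertices already present), so it uses the advancement on line~\ref{line:advance3}, whose two disjuncts are a timeout and delivery of $p_2$; by the tight entry into round $4w-1$ just established, $p_2$ (honest) broadcasts promptly and is delivered within $3\Delta$ of every honest round-$(4w-1)$ timer starting, i.e.\ before any such timer of $>3\Delta$ expires, so this first entrant must take the $p_2$ branch and its round-$4w$ vertex points to $p_2$. For any subsequent honest party that instead enters round $4w$ via the jump rule on line~\ref{line:sync}, it already holds $2f+1$ round-$4w$ vertices, at least $f+1$ of which are honest and (inductively) point to $p_2$; because \emph{try\_add\_to\_dag} inserts a vertex only once its entire causal history is in the DAG, that party has therefore already delivered $p_2$, so $p_2\in DAG[4w-1]$ and its own round-$4w$ vertex again points to $p_2$. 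Hence all honest round-$4w$ vertices point to $p_2$, giving the $\geq 2f+1$ needed above.

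I expect the timing bookkeeping of the previous two paragraphs to be the main obstacle. The delicate points are (i) guaranteeing that no honest party leaves round $4w-1$ by a premature timeout before $p_2$ is delivered --- which is exactly where the honesty of $p_1$ is used, since it is what makes entry into round $4w-1$ fast and synchronized rather than timeout-driven --- and (ii) ruling out that the jump rule on line~\ref{line:sync} smuggles a party into round $4w$ with a $p_2$-free vertex, which is handled by the causal-history completeness of \emph{try\_add\_to\_dag}. Both rest on the assumption that timeouts exceed $3\Delta$ and on Property~\ref{prop:relaible}; everything else (quorum intersection, the commit threshold $f+1$, and eventual production of round-$(4w+1)$ vertices by Observation~\ref{obs:DAGgrows}) is routine.
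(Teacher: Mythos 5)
Your overall strategy matches the paper's: you target the second steady-state leader $p_2$, use the honesty of $p_1$ purely for synchronization, reduce everything to the crux that every honest round-$4w$ vertex has a strong path to $p_2$'s vertex, and conclude via quorum intersection against the $f{+}1$-vote rule of \emph{try\_commit} together with Observation~\ref{obs:DAGgrows} (your explicit treatment of jumpers into round $4w$ via causal completeness is in fact more careful than the paper's). The gap is in your timing backbone. The synchronization statement you plan to establish first --- after GST, if one honest party enters a round at time $t$ then all honest parties enter it by $t+2\Delta$ --- does not follow from Property~\ref{prop:relaible} plus the jump rule, and is false in general: Line~\ref{line:sync} can only pull a laggard \emph{into} round $r'$ once $2f+1$ round-$r'$ vertices exist, which requires $f+1$ honest parties to have already broadcast in $r'$; what one honest entry into $r'$ guarantees (namely $2f+1$ round-$(r'-1)$ vertices) only lets laggards jump to round $r'-1$ within $\Delta$, after which they must still satisfy that round's advancement condition, possibly by waiting out a freshly started timeout. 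So the spread of entries into a round whose predecessor is timeout-driven is bounded only by roughly $\Delta$ plus the timeout, not by $2\Delta$. Since the hypothesis of Claim~\ref{claim:EScommit} says nothing about the spread of entries into round $4w-3$ (wave $w-1$ may have had faulty leaders, or may straddle GST), an early entrant can time out in round $4w-3$ well before a lagging $p_1$ even reaches that round. This breaks your next two steps: it is not true that every honest party delivers $p_1$'s vertex before its round-$(4w-3)$ timeout fires, nor that every honest round-$(4w-2)$ vertex carries a strong edge to $p_1$, so the ``$2f+1$ steady-state votes for $p_1$'' that you use to trigger Line~\ref{line:advance2} need never materialize. (The paper does assert your general $2\Delta$ statement in Section~\ref{sec:GC}, but its own proof of this claim never uses it in that generality.)

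The missing idea is where to anchor. The paper anchors at the time $t$ of the \emph{first honest entry into round $4w-2$}: by Property~\ref{prop:relaible}, every honest party holds $2f+1$ round-$(4w-3)$ vertices by $t+\Delta$ and is in round at least $4w-3$ by then (jumping if behind), so $p_1$ has broadcast its round-$(4w-3)$ vertex by $t+\Delta$, everyone delivers it by $t+2\Delta$, and everyone still in round $4w-3$ then advances through the leader disjunct of Line~\ref{line:advance1}. This bounds the \emph{spread} of entries into round $4w-2$ by $2\Delta$ without requiring that anyone's round-$(4w-2)$ vertex point to $p_1$: parties that timed out at $4w-3$ before seeing $p_1$ are harmless, because the only thing $p_1$'s honesty buys (and the only thing needed) is the spread bound. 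The same reasoning, repeated one round later for both the timeout and the vote disjuncts of Line~\ref{line:advance2}, bounds the spread of entries into round $4w-1$, and from that point your crux argument --- $p_2$'s vertex is delivered before any honest round-$(4w-1)$ timer expires, hence every honest round-$4w$ vertex has a strong edge to $p_2$, jumpers included --- and your concluding commit argument go through essentially as you wrote them.
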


\begin{proof}
let $r$ be the first round of $w$.
First we show that all honest parties advance to round $r+1$ within $2\Delta$ time of each other.
By Observation~\ref{obs:DAGgrows}, all honest parties eventually advance to round $r+1$.
Let party $p_i$ be the first honest party that advances to round $r+1$ and denote by $t$ the time it happened. 
By the code of try\_advance\_round, $|DAG_i[r]| \geq 2f+1$.
By Property~\ref{prop:relaible}, by time $t + \Delta$ $|DAG_j[r]| \geq 2f+1$ for all honest parties.
Therefore, by Line~\ref{line:time2}, all honest party advance to round $r$ by time $t + \Delta$.
In particular, the first leader of wave $w$, $p_1$.
Thus, $p_1$ broadcasts its vertex $v_1$ in round $r$ no later than time $t + \Delta$, and by Property~\ref{prop:relaible}, all honest deliver it by time $t + 2\Delta$.
Therefore, by Line~\ref{line:advance1} and the code of try\_advance\_round, all honest parties advance to round $r+2$ by time $t + 2\Delta$.

Next we show that all honest parties advance to round $r+2$ with $3\Delta$ time of each other.
Since all honest parties advance to round $r+1$ within $2\Delta$ time of each other, then they start their timeouts at round $r+1$ within $2\Delta$ time of each other.
Let party $p_j$ be the first honest party that advances to round $r+2$.
If the first honest party waits for timeout (the if in Line~\ref{line:advance2}) to advance to round $r+2$, then all honest parties advance to round $r+2$ within $2\Delta$.
Otherwise, $p_j$ has $2f+1$ vertices in $DAG_j[r+1]$ with strong path to $v'$.
By property~\ref{prop:relaible}, all other honest parties will deliver this vertices and advance to round $r+2$ within $3\Delta$ from $p_j$.

By the assumption, the second leader of the wave, $p_2$, is honest and will broadcast vertex $v_2$ in round $r+2$ at most $3\Delta$ after the first honest party advances to $r+2$.
Since the timeouts are larger than $4\Delta$, all honest will advance to round $r+3$ within $\Delta$ of each other (by Line~\ref{line:advance3}, all honest wait to deliver the leader's vertex or for a timeout).
Moreover, they will all add a strong edge to $v_2$ in their vertex in round $r+3$.

Since all honest advance to round $r+3$ within $\Delta$ of each other and the timeouts are larger than $2\Delta$, they will all wait for each other's vertices before advancing to the next round.
Therefore, all honest will get $2f+1$ vertices in round $r+3$ with strong paths to the second vertex leader of the wave $v_2$.
Thus, all honest commit a leader in wave $w$. 

\end{proof}

\vspace{2mm}
The Validity property is proved under the assumption that eventually (after GST) there will be a wave in which both leaders are honest. For example, this assumption holds for every full permutation of the parties or if we maintain a fixed leader for the full wave.
To avoid repetition, we omit the proof of the following lemma as it is similar to the proof of Lemma~\ref{lem:validity}.
All we need to do to adapt it is to remove all appearances of "with probability $1$" and replace the reference to Claim~\ref{claim:somecommit} with Claim~\ref{claim:EScommit}.

\begin{lemma}
\label{lem:ECvalidity}

Algorithms~\ref{alg:dataStructures}, \ref{alg:DAG}, \ref{alg:modes}, and~\ref{alg:ESBullshark} satisfy Validity.

\end{lemma}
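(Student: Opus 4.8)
The plan is to mirror the proof of Lemma~\ref{lem:validity} almost line for line, deleting every ``with probability $1$'' qualifier and substituting the deterministic commit guarantee of Claim~\ref{claim:EScommit} for the probabilistic one of Claim~\ref{claim:somecommit}. Concretely, I fix an honest party $p_i$ that calls $a\_bcast(b,r)$ and aim to show that every honest party eventually outputs $a\_deliver(b,r,p_i)$; since we are in an eventually synchronous execution, GST occurs and so this deterministic form of Validity is what we must establish.

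First I would argue that $b$ is eventually carried into the DAG. Upon the $a\_bcast$ call, $p_i$ enqueues $b$ in \emph{blocksToPropose}, and by Observation~\ref{obs:DAGgrows} the DAG grows without bound, so $p_i$ advances through unboundedly many rounds and eventually dequeues $b$ into some vertex $v_i$ that it reliably broadcasts. By the Validity property of reliable broadcast, every honest party $p_j$ eventually has $v_i \in DAG_j[r_i]$ for some round $r_i$. The key structural fact, read off \emph{create\_new\_vertex} together with \emph{set\_weak\_edges}, is that any vertex a party creates after $v_i$ is already present in its DAG acquires a path to $v_i$ --- directly via strong edges when $v_i$ sits in the previous round, and otherwise via the weak-edge step, which links the new vertex to every orphaned earlier vertex. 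The crux is then to exhibit a single committed leader whose causal history contains $v_i$: I invoke the standing assumption that after GST some wave $w$ has both predefined leaders honest, choosing $w$ late enough that its first round begins after GST \emph{and} after $v_i$ has been delivered by all honest parties. Claim~\ref{claim:EScommit} guarantees that every honest party commits a leader in $w$, and because that leader's vertex is created by an honest party strictly after $v_i$ is present in its DAG, the path-acquisition fact ensures $v_i$ lies in its causal history. By the \emph{order\_vertices} procedure of Algorithm~\ref{alg:ESBullshark}, committing this leader forces the output of $a\_deliver(b,r,p_i)$, and since $p_i$ is honest, Lemma~\ref{lem:agreement} (Agreement) propagates the delivery to all honest parties.

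The main obstacle I anticipate is not the combinatorial core --- Claim~\ref{claim:EScommit} already does the heavy lifting --- but the bookkeeping required to make the honest--honest wave $w$ furnished by the assumption land after $v_i$ has propagated everywhere. One must compose two separate eventualities correctly: ``eventually $v_i \in DAG_j$ for every honest $p_j$'' and ``eventually there is a wave whose first two predefined leaders are honest.'' I would therefore first fix a time $t$ past which $v_i$ belongs to every honest DAG, and only then apply the assumption to the suffix of the execution after $t$, obtaining a qualifying wave whose committed leader vertex is necessarily created after $v_i$ and hence reaches it by a path. Getting this ordering of quantifiers right is the only delicate point; everything else is a routine transcription of the asynchronous Validity argument with the probabilistic guarantee replaced by the deterministic post-GST one.
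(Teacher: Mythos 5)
Your proposal is correct and follows essentially the same route as the paper, which itself states that the proof of Lemma~\ref{lem:ECvalidity} is obtained from that of Lemma~\ref{lem:validity} by dropping the ``with probability $1$'' qualifiers and replacing Claim~\ref{claim:somecommit} with Claim~\ref{claim:EScommit}. Your extra care in choosing the honest-leader wave $w$ to start after both GST and the time $v_i$ has entered every honest DAG is exactly the quantifier bookkeeping the paper leaves implicit, and it is handled correctly.
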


\end{document}